\def\llncs{0}
\def\fullpage{1}
\def\anonymous{0}
\def\authnote{1}
\def\notxfont{0}
\def\submission{0}%when submit 30page version to conference, it is 1. For full version, it is 0.
\def\llncs{1}
\definecolor{darkblue}{rgb}{0,0,0.6}
\definecolor{darkgreen}{rgb}{0,0.5,0}
\definecolor{maroon}{rgb}{0.5,0.1,0.1}
\definecolor{dpurple}{rgb}{0.2,0,0.65}
\DeclareMathAlphabet{\mathpzc}{OT1}{pzc}{m}{it}
\newtheoremstyle{thicktheorem}%
{\topsep}
{\topsep}
{\itshape}{}%
{\bfseries}%
{.}
{ }%
{\thmname{#1}\thmnumber{ #2}%
		\thmnote{ (#3)}%
}
\newtheoremstyle{remark}%name
{\topsep}
{\topsep}
	{}%body font
	{}%indent amount
	{}%theorem head font
	{.}%punctuation after theorem head
	{ }%space after theorem head
	{\textit{\thmname{#1}}\thmnumber{ #2}%theorem head specs
			\thmnote{ (#3)}%
	}
	\theoremstyle{thicktheorem}
	\newtheorem{theorem}{Theorem}[section]
	\newtheorem{lemma}[theorem]{Lemma}
	\newtheorem{corollary}[theorem]{Corollary}
	\newtheorem{definition}[theorem]{Definition}
	\theoremstyle{remark}
	\newtheorem{remark}[theorem]{Remark}
\Crefname{MyClaim}{Claim}{Claims}
	\crefname{theorem}{Theorem}{Theorems}
	\crefname{assumption}{Assumption}{Assumptions}
	\crefname{construction}{Construction}{Constructions}
	\crefname{corollary}{Corollary}{Corollaries}
	\crefname{conjecture}{Conjecture}{Conjectures}
	\crefname{definition}{Definition}{Definitions}
	\crefname{exmaple}{Example}{Examples}
	\crefname{experiment}{Experiment}{Experiments}
	\crefname{counterexample}{Counterexample}{Counterexamples}
	\crefname{lemma}{Lemma}{Lemmata}
	\crefname{observation}{Observation}{Observations}
	\crefname{proposition}{Proposition}{Propositions}
	\crefname{remark}{Remark}{Remarks}
	\crefname{claim}{Claim}{Claims}
	\crefname{fact}{Fact}{Facts}
	\crefname{note}{Note}{Notes}
 \crefname{appendix}{App.}{Appendices}
 \crefname{section}{Sec.}{Sections}
\renewcommand*{\backref}[1]{}
	\renewcommand*{\backref}[1]{(Cited on page~#1.)}
\newcommand{\mor}[1]{}
\newcommand{\ryo}[1]{}
\newcommand{\takashi}[1]{}
\newcommand{\fuyuki}[1]{}
\newcommand{\mor}[1]{$\ll$\textsf{\color{red} Tomoyuki: { #1}}$\gg$}
\newcommand{\takashi}[1]{$\ll$\textsf{\color{orange} Takashi: { #1}}$\gg$}
\newcommand{\ryo}[1]{$\ll$\textsf{\color{darkgreen} Ryo: { #1}}$\gg$}
\newcommand{\fuyuki}[1]{$\ll$\textsf{\color{darkblue} Fuyuki: { #1}}$\gg$}
\newcommand{\alex}[1]{$\ll$\textsf{\color{purple} Alex: { #1}}$\gg$}
\newcommand{\proj}[1]{\ket{#1}\!\bra{#1}}
\newcommand{\calY}{\mathcal{Y}}
\newcommand{\calX}{\mathcal{X}}
\newcommand{\swap}{\mathsf{Swap}}
\newcommand{\eval}{\mathsf{Eval}}
\newcommand{\sfot}{\mathsf{ot}}
\newcommand{\Tr}{\mathrm{Tr}}
\newcommand{\StateGen}{\mathsf{StateGen}}
\newcommand{\cert}{\keys{cert}}
\newcommand{\FullVer}{\mathsf{FullVer}}
\newcommand{\SemiVer}{\mathsf{SemiVer}}
\newcommand{\seteq}{\coloneqq}
\newcommand{\concat}{\|}
\newcommand{\cA}{\mathcal{A}}
\newcommand{\cB}{\mathcal{B}}
\newcommand{\cC}{\mathcal{C}}
\newcommand{\cD}{\mathcal{D}}
\newcommand{\cO}{\mathcal{O}}
\newcommand{\cX}{\mathcal{X}}
\newcommand{\cY}{\mathcal{Y}}
\newcommand{\cZ}{\mathcal{Z}}
\def\makeuppercase#1{
\expandafter\newcommand\csname tl#1\endcsname{\widetilde{#1}}
}
\def\makelowercase#1{
\expandafter\newcommand\csname tl#1\endcsname{\widetilde{#1}}
}
\newcommand{\N}{\mathbb{N}}
\newcommand{\regF}{\mathbf{F}}
\newcommand{\regB}{\mathbf{B}}
\newcommand{\regA}{\mathbf{A}}
\newcommand{\secp}{\lambda}
\newcommand{\A}{\entity{A}}
\newcommand*{\sk}{\keys{sk}}
\newcommand*{\pk}{\keys{pk}}
\newcommand*{\sigk}{\keys{sigk}}
\newcommand*{\snum}{\keys{snum}}
\newcommand*{\ck}{\keys{ck}}
\newcommand{\ct}{\keys{ct}}
\newcommand*{\pp}{\keys{pp}}
\newcommand*{\vk}{\keys{vk}}
\newcommand*{\td}{\keys{td}}
\newcommand*{\msg}{\keys{msg}}
\newcommand*{\keys}[1]{\mathsf{#1}}
\newcommand*{\algo}[1]{\ensuremath{\mathsf{#1}}}
\newcommand*{\entity}[1]{\mathcal{#1}}
\newenvironment{boxfig}[2]{\begin{figure}[#1]\fbox{\begin{minipage}{0.97\linewidth}
                        \vspace{0.2em}
                        \makebox[0.025\linewidth]{}
                        \begin{minipage}{0.95\linewidth}
            {{
                        #2 }}
                        \end{minipage}
                        \vspace{0.2em}
                        \end{minipage}}}{\end{figure}}
\newcommand{\pprotocol}[4]{
\begin{boxfig}{h}{\footnotesize 
%\begin{center}
\centering{\textbf{#1}}
%\end{center}
    #4
\vspace{0.2em} } \caption{\label{#3} #2}
\end{boxfig}
}
\newcommand{\protocol}[4]{
\pprotocol{#1}{#2}{#3}{#4} }
\newcommand{\bit}{\{0,1\}}
\newcommand{\Setup}{\algo{Setup}}
\newcommand{\setup}{\algo{Setup}}
\newcommand{\Gen}{\algo{Gen}}
\newcommand{\KeyGen}{\algo{KeyGen}}
\newcommand{\Enc}{\algo{Enc}}
\newcommand{\Dec}{\algo{Dec}}
\newcommand{\Sign}{\algo{Sign}}
\newcommand{\Ver}{\algo{Ver}}
\newcommand{\Del}{\algo{Del}}
\newcommand{\Cert}{\algo{Cert}}
\newcommand{\st}{\algo{st}}
\newcommand{\PRF}{\algo{PRF}}
\newcommand{\negl}{{\mathsf{negl}}}
\newcommand{\poly}{{\mathrm{poly}}}
\DeclareRobustCommand
\title{
Revocable Quantum Digital Signatures
}
\author{\empty}\institute{\empty}
\author{}
\author{
Tomoyuki Morimae\inst{1} \and Takashi Yamakawa\inst{1,2} 
}
\institute{
	Yukawa Institute for Theoretical Physics, Kyoto University, Kyoto, Japan \and NTT Social Informatics Laboratories, Tokyo, Japan
}
\author[1]{Tomoyuki Morimae}
\author[2,3]{Alexander Poremba}
\author[1,4,5]{Takashi Yamakawa}
\affil[1]{{\small Yukawa Institute for Theoretical Physics, Kyoto University, Kyoto, Japan}\authorcr{\small tomoyuki.morimae@yukawa.kyoto-u.ac.jp}}
\affil[2]{{\small 
Computing and Mathematical Sciences, Caltech, Pasadena, CA, USA}}
\affil[3]{{\small 
CSAIL and Department of Mathematics, MIT, Cambridge, MA, USA}\authorcr{\small poremba@mit.edu}}
\affil[4]{{\small NTT Social Informatics Laboratories, Tokyo, Japan}\authorcr{\small 
takashi.yamakawa@ntt.com}}
\affil[5]{{\small NTT Research Center for Theoretical Quantum Information, Atsugi, Japan}}
\date{}
\begin{document}

\maketitle

\begin{abstract}
We study digital signatures with \emph{revocation capabilities} and show two results.
First, we define and construct digital signatures with revocable signing keys from the LWE assumption.
In this primitive, the signing key is a quantum state which enables a user to sign many messages and yet,
the quantum key is also \emph{revocable}, i.e., it can be collapsed into a classical certificate
which can later be verified. Once the key is successfully revoked,
we require that the initial recipient of the key loses the ability to sign.
%The verification of the revocation is the private one.
We construct digital signatures with revocable signing keys
from a newly introduced primitive which we call \emph{two-tier one-shot signatures}, which may be of independent interest.
This is a variant of one-shot signatures, where
the verification of a signature for the message ``0'' is done publicly, whereas the verification for the message ``1'' is done in private.
We give a construction of two-tier one-shot signatures from the LWE assumption. As a complementary result, we also construct digital signatures with {\it quantum} revocation from group actions, where the quantum signing key is simply ``returned'' and then verified as part of revocation.

Second, we define and construct digital signatures with revocable signatures from OWFs.
In this primitive, the signer can produce quantum signatures which can later be revoked.
%, i.e., they can be collapsed into classical certificates which can be verified. 
Here, the security property requires that, once revocation is successful,
the initial recipient of
the signature loses the ability to find accepting inputs to the signature verification algorithm.
We construct this primitive using a newly introduced \emph{two-tier} variant of tokenized signatures.
For the construction, we show a new lemma which we call the adaptive hardcore bit property for OWFs, which may enable further applications.
\end{abstract}
\ifnum\submission=1
\else
\clearpage
\newpage
\setcounter{tocdepth}{2}
\tableofcontents
\fi
\newpage

% !TEX root = main.tex
\section{Introduction}
\label{sec:introduction}

\subsection{Background}
The exotic nature of quantum physics, such as quantum superposition, no-cloning, entanglement, and uncertainty relations,
enables many new cryptographic applications which are impossible in a classical world. 
These include quantum money~\cite{Wiesner83}, copy-protection~\cite{CCC:Aaronson09,C:ALLZZ21}, secure software leasing~\cite{EC:AnaLap21}, unclonable encryption~\cite{Got02,TQC:BroLor20}, certified deletion~\cite{TCC:BroIsl20}, and more. 
Here, a common approach is to encode information into a quantum state which prevents it from being copied by the no-cloning principle.  
%\cite{JACM:Unruh15,TCC:BroIsl20,AC:HMNY21,C:HMNY22,cryptoeprint:2023/325,cryptoeprint:2023/370,cryptoeprint:2023/559,cryptoeprint:2023/538,cryptoeprint:2022/295,cryptoeprint:2022/1178,cryptoeprint:2023/265}

Following this line of research, Ananth, Poremba, and Vaikuntanathan~\cite{cryptoeprint:2023/325} and Agrawal, Kitagawa, Nishimaki, Yamada, and Yamakawa~\cite{EC:AKNYY23} concurrently introduced the concept of key-revocable public key encryption (PKE),\footnote{Agrawal et al.~\cite{EC:AKNYY23} call it PKE with secure key leasing.} which realizes the following functionality: a decryption capability is delegated to a user in the form of a quantum decryption key in such a way that, once the key is returned, the user loses the ability to decrypt.
They constructed  key-revocable PKE schemes based on standard assumptions, namely quantum hardness of the learning with errors problem (LWE assumption)~\cite{cryptoeprint:2023/325} or even the mere existence of any PKE scheme~\cite{EC:AKNYY23}. 
They also extended the idea of revocable cryptography to pseudorandom functions~\cite{cryptoeprint:2023/325} and encryption with advanced functionality such as attribute-based encryption and functional encryption~\cite{EC:AKNYY23}. 
However, neither of these works extended the idea to \emph{digital signatures} despite their great importance in cryptography. 
This state of affairs raises the following question:

\begin{center}
\emph{Is it possible to construct digital signature schemes with revocation capabilities?}
\end{center}

The delegation of privileges is of central importance in cryptography, and the task of revoking privileges in the context of digital signatures and certificates, in particular, remains a fundamental challenge for cryptography~\cite{stubble,Riv98b}. 
One simple solution is to use a limited-time delegatable signature scheme, where a certified signing key is generated together with an expiration date. Note that this requires that the expiration date is known ahead of time and that the clocks be synchronized. Moreover, issuing new keys (for example, each day) could potentially also be costly.
Quantum digital signature schemes with revocation capabilities could potentially resolve these difficulties by leveraging the power of quantum information.

To illustrate the use of \emph{revocable} digital signature schemes, consider the following scenarios. Suppose that an employee at a company, say Alice, takes a temporary leave of absence and wishes to authorize her colleague, say Bob, to sign a few important documents on her behalf. One thing Alice can do is to simply use a (classical) digital signature scheme and to share her signing keys with Bob. While this naïve approach would certainly allow Bob to produce valid signatures while Alice is gone, it also means that Bob continues to have access to the signing keys---long after Alice's return. 
This is because the signing key of a digital signature scheme is \emph{classical}, and hence it can be copied at will. In particular, a malicious Bob could secretly sell Alice's signing key to a third party for a profit.
A digital signature scheme with \emph{revocable signing keys} can remedy this situation as it enables Alice to certify that Bob has lost access to the signing key once and for all. 

As a second example, consider the following scenario. Suppose that a company or a governmental organization wishes to grant a new employee certain access privileges throughout their employment; for example to various buildings or office spaces. %One solution is to simply hand out a conventional magnetic stripe ID card which allows one to \emph{physically} verify the identity of any employee.
One solution is to use an \emph{electronic} ID card through a mobile device, where a digital signature is used for identity management. Naturally, one would like to ensure that, once the employee's contract is terminated, their ID card is disabled in the system and no longer allows for further unauthorized access.
However, if the signature corresponding to the employee's ID is a digital object, it is conceivable that the owner
of the card manages to retain their ID card even after it is disabled.
This threat especially concerns scenarios in which the verification of an ID card is performed by a device which is not connected to the internet, or simply not updated frequently enough.
A digital signature scheme with revocable signatures can remedy this situation as it 
 enables \emph{revocable quantum} ID cards; in particular, it allows one to certify that the initial access privileges have been revoked once and for all.

\if0
In particular, by using no-cloning or uncertainty relations,
it is possible to
construct cryptographic primitives with revocable elements or with the certified deletion.
\takashi{I feel the following explanations are too technical at this point.
I think we can just explain the basic idea of revocable cryptography. 
For example, I don't think we need to talk about classical/quantum revocation and public/private verification.}
For example, we can imagine an encryption scheme with a quantum ciphertext.
The quantum ciphertext can be revoked, which means that it is returned to the sender or 
collapsed to a classical data by a measurement.
The returned state or the classical deletion certificate can be checked by the verification algorithm, and if it is
accepted we can guarantee that ``the information is completely deleted''. 
Or, we can imagine that the revocable element is not a ciphertext but a key, and in that case
revocation of the quantum key makes the functionality associated with the key no longer available.
For example, if a quantum decryption key is revoked, the person who had the key can no longer decrypt ciphertexts.
Many primitives with the revocation property have been constructed,
such as encryption schemes with revocable ciphertexts,
commitments with certified deletion,
everlasting zero-knowledge,
revocable keys for several primitives such as public-key encryptions,
pseudorandom functions, and fully homomorphic encryptions, etc.~\cite{JACM:Unruh15,TCC:BroIsl20,AC:HMNY21,C:HMNY22,cryptoeprint:2023/325,cryptoeprint:2023/370,cryptoeprint:2023/559,cryptoeprint:2023/538,cryptoeprint:2022/295,cryptoeprint:2022/1178,cryptoeprint:2023/265}.
\fi

\subsection{Our Results}
%Digital signatures are arguably one of the most fundamental primitives in cryptography.
%In this paper, we study quantum revocation in digital signatures,
%and show two results.
In this paper, we show the following two results on revocable digital signatures.

\paragraph{Revocable signing keys.}
First, we define digital signatures with revocable {\it signing keys} ($\mathsf{DSR\mbox{-}Key}$). 
%The first result is to define and construct digital signatures with revocable {\it signing keys} from 
%the LWE assumption.  
In this primitive, a signing key is encoded in the form of a quantum state which enables the recipient to sign many messages. However, once the key is successfully revoked from a user,
they no longer have the ability to generate valid signatures.
Here, we consider \emph{classical revocation}, i.e., a classical certificate is issued once the user destroys the quantum signing key with an appropriate measurement. In addition, the verification of the revocation certificate takes place in private, which means that the verification requires a private key which should be kept secret.
We construct $\mathsf{DSR\mbox{-}Key}$ based solely on the quantum hardness of the LWE problem~\cite{Reg05}. 
We remark that our scheme is inherently \emph{stateful}, i.e., whenever a user generates a new signature, the user must update the singing key for the next invocation of the signing algorithm. Indeed, we believe that digital signatures with revocable signing keys must be inherently stateful since a user must keep the quantum signing key as a ``state'' for generating multiple signatures. 
An undesirable feature of our scheme is that the signing key and signature sizes grow with the number of signatures to be generated.  %\takashi{I omit mentioning the "stateful" property. I feel any revocable signing keys must be stateful in principle since a user must keep the quantum signing key as a "state" for generating multiple signatures. I think the actual difference from what we call "stateless" scheme is if the signature length grows or not. }
%A disadvantage of our scheme is that it is \emph{stateful}, i.e., whenever a user generate a signature, the user must update the singing key and keep it for the next invocation of signing.   
%In addition, the signature length grows with the number of signatures to be generated. 
%As discussed in ??, there are several existing works that implicitly imply digital signatures with revocable signing keys. However, they rely on strong assumptions such as indistinguishability obfuscation (iO) or oracles. Our construction is the first construction based on a standard assumption.  

As complementary result, we also consider $\mathsf{DSR\mbox{-}Key}$ with ${\it quantum}$ revocation.
In this primitive, not a classical deletion certificate but the quantum signing key itself
is returned for the revocation. 
We construct the primitive from group actions with the one-wayness property~\cite{TCC:JQSY19}. 
The existence of group actions with the one-wayness property is incomparable with the LWE assumption.

%\takashi{I commented out the explanation of the syntax as this looks too technical at this point. If needed, we could put this in the technical overview.}
\if0
\takashi{Again, this looks too technical. Probably, we could simply remove this paragraph.}
More precisely,
our scheme consists of six QPT algorithms, $\Setup$, $\KeyGen$, $\Sign$, $\Ver$, $\Del$, and $\Cert$.
The setup algorithm, $\Setup$, outputs two classical keys, $\ck$ and $\pp$.
$\ck$ is the private key for the verification of the revocation, and $\pp$ is a public parameter.
The key generation algorithm, $\KeyGen$, takes $\pp$ as input and outputs the classical verification key $\vk$ for the signatures
and the quantum signing key $\sigk$.
The signing algorithm, $\Sign$, takes $\pp$, $\sigk$ and a message $m$ as input,
and outputs the classical signature $\sigma$ and the updated quantum signing key $\sigk'$.
By using this $\sigk'$, the next message can be signed,
and therefore our scheme satisfies the many-time correctness.
$\Ver$ is the standard verification algorithm for the signature. 
$\Del$ is the deletion algorithm that maps $\sigk$ to the classical deletion certificate $\cert$.
Finally, $\Cert$ is the verification algorithm for the revocation. It takes $\vk$, $\ck$, and $\cert$ as input, and
outputs $\top$ or $\bot$.
Because $\ck$ is a secret key,
the verification is the private one.

In addition to the standard EUF-CMA security of digital signatures,
our construction satisfies a new security notion what we call the deletion security.
It roughly means that if the signing key is revoked (even after signing many times), the person who had the signing key
can no longer output a valid signature.
More precisely, the QPT adversary receives $\pp$ as input,
and has to output $(\vk,\cert,m^*,\sigma^*)$ so that $\cert$ is a valid deletion certificate
and $\sigma^*$ is a valid signature for the message $m^*$.
The security says that such a probability is negligibly small.
\fi

\paragraph{Revocable signatures.}
Second, we define digital signatures with revocable {\it signatures} ($\mathsf{DSR\mbox{-}Sign}$). 
%Our second result is to define and construct digital signatures with revocable {\it signatures} from OWFs.  
In this primitive, signatures are encoded as quantum states which can later be revoked. The security property
guarantees that, once revocation is successful, the initial recipient of the signature loses the ability to pass the signature verification.
We construct digital signatures with revocable signatures based on the existence of (quantum-secure) one-way functions (OWFs).  
In our scheme, the revocation is classical and private, i.e., a user can issue a classical certificate of revocation, which is verified by using a private key. 
%To our knowledge, there is no existing work that studied revocation of signatures. 

%The signature passes verification, but once a signature is revoked, a user    The signing key can be used to sign many times, but once the key is revoked, it no longer works as a signing key. To our knowledge, there is no previous work that studies revocable signatures. In our scheme, the signature is a quantum state, and classically revocable:  once a person who has the signature collapses it to a classical data by the measurement, the person can no longer output a valid signature.

\if0
More precisely,
our scheme consists of five QPT algorithms, $\KeyGen$, $\Sign$, $\Ver$, $\Del$, and $\Cert$.
The key generation algorithm, $\KeyGen$, is just the standard one: it outputs the classical signing key $\sigk$ and the classical verification key $\vk$.
The signing algorithm, $\Sign$, is slightly different from the standard one: in addition to the quantum signature, it outputs the classical check key $\ck$,
which is used for the verification of the revocation.
The verification algorithm, $\Ver$, of the signatures is the standard one.
The deletion algorithm, $\Del$, maps the quantum signature to the classical deletion certificate $\cert$.
Finally, the verification algorithm of the revocation, $\Cert$, takes $\ck$ and $\cert$ as input, and outputs $\top$ or $\bot$.
Because $\ck$ is a secret key that should not be accessed by the adversary, the verification of the revocation is the private one.

In addition to the standard EUF-CMA security of digital signatures, our scheme satisfies a new security notion that we call many-time deletion security.
It roughly means that given a quantum signature $\psi^*$ corresponding to a message $m^*$, no QPT adversary can output both a valid deletion certificate $\cert$ and
a valid quantum signature. The message $m^*$ can be chosen by the adversary, and the adversary can query the signing oracle.
\fi

\subsection{Comparison with Existing Works}
To our knowledge, there is no prior work that studies digital signatures with quantum signatures. 
On the other hand, 
there are several existing works that study digital signatures with quantum signing keys.  
We review them and compare them with our $\mathsf{DSR\mbox{-}Key}$. 

\begin{itemize}
\item 
{\bf Tokenized signatures~\cite{BenDavidSattath,C:CLLZ21}.} 
In a tokenized signature scheme, the signing key corresponds to a quantum state which can be used to generate a signature on at most one message.
At first sight, the security notion seems to imply the desired security guarantee for $\mathsf{DSR\mbox{-}Key}$, 
since a signature for a dummy message may serve as the classical deletion certificate for the signing key.\footnote{Note, however, that tokenized signatures offer public verification of signatures,
whereas certifying revocation in our $\mathsf{DSR\mbox{-}Key}$ scheme takes place in private.}
However, the problem is that tokenized signatures do not achieve the correctness for $\mathsf{DSR\mbox{-}Key}$; namely,
in tokenized signatures, a user who receives a quantum signing key can generate only a single signature, whereas in $\mathsf{DSR\mbox{-}Key}$, we require that a user can generate arbitrarily many signatures before the signing key is revoked. Thus, tokenized signatures are not sufficient for achieving our goal. 
A similar problem exists for semi-quantum tokenized signatures~\cite{C:Shmueli22} and one-shot signatures~\cite{STOC:AGKZ20} as well. 
\item
{\bf Copy-protection for digital signatures~\cite{TCC:LLQZ22} (a.k.a. single-signer signatures~\cite{STOC:AGKZ20}.\footnote{Technically speaking, \cite{TCC:LLQZ22} and \cite{STOC:AGKZ20} require slightly different security definitions, but high level ideas are the same.})} 
In this primitive, a signing key corresponds to a quantum state which cannot be copied. More precisely, suppose that a user is given one copy of the signing key and tries to split it into two signing keys. The security property requires that at most one of these two signing keys is capable at generating a valid signature on a random message.  
Amos, Georgiou, Kiayias, and Zhandry~\cite{STOC:AGKZ20} constructed such a signature scheme based on one-shot signatures. However, the only known construction of one-shot signatures is relative to classical oracles, and there is no known construction without oracles. Liu, Liu, Qian, and Zhandry~\cite{TCC:LLQZ22} constructed it based on indistinguishability obfuscation (iO) and OWFs.   
Intuitively, copy-protection for digital signatures implies $\mathsf{DSR\mbox{-}Key}$, because checking whether a returned signing key succeeds at generating valid signatures on random messages can serve a means of verification for revocation.\footnote{While this sounds plausible, there is a subtlety regarding the security definitions. Indeed, we believe that the security of copy-protection for digital signatures~\cite{TCC:LLQZ22} or single-signer signatures~\cite{STOC:AGKZ20} does not readily imply our security definition in \Cref{def:dswrsk}, though they do seem to imply some weaker but reasonable variants of security. See also \Cref{rem:definition_revocable_signing_key}.} 
Compared with this approach, our construction has the advantage that it is based on the standard assumption (namely the LWE assumption), whereas they require the very strong assumption of iO or ideal oracles.  On the other hand, a disadvantage of our construction is that revocation requires private information, whereas theirs have the potential for public revocation. Another disadvantage is that the size of the signing key (and signatures) grows with the number of signatures, whereas this is kept constant in \cite{TCC:LLQZ22} (but not in ~\cite{STOC:AGKZ20}). 
\end{itemize}

\if0
\paragraph{Relation to previous works.}
There are four previous works that study digital signatures with quantum signing keys.
\begin{itemize}
\item 
Tokenized signatures~\cite{BenDavidSattath}.
It was constructed with a classical oracle, and
later, an improved construction with iO and OWFs was given~\cite{C:CLLZ21}.
\item
One-shot signatures~\cite{STOC:AGKZ20} constructed with a classical oracle.
\mor{Quantum oracle according to James?}\takashi{That is not made public yet, so I think we can assume that the result of \cite{STOC:AGKZ20} is correct at this point.}
\item
Semi-quantum tokenized signatures~\cite{C:Shmueli22} constructed from iO and LWE.
\item
A copy protection for quantum signing keys constructed from iO and LWE~\cite{TCC:LLQZ22}.
\end{itemize}
Although these four results are very related to our result, they are actually incomparable, because
the functionalities and securities are not equivalent.
In particular, tokenized signatures, one-shot signatures, and semi-quantum tokenized signatures
guarantee that the signing key can be used only once. In our scheme, on the other hand, the signing key can be used
many times until it is revoked.
The last one \cite{TCC:LLQZ22} is a copy protection scheme for the signing key, which means that the single signing key
cannot be converted to two states both of which correctly work as signing keys.\footnote{They actually show the stronger security
that no QPT adversary given $k$ keys can convert them to $k+1$ copies.}
\cite{TCC:LLQZ22} also allows signing many times. \takashi{Note that the single-signer signatures of \cite{STOC:AGKZ20} also allows for many-time signing.}

Those four previous primitives can actually achieve {\it publicly}-verifiable revocations of signing keys.
The first three primitives~\cite{BenDavidSattath,STOC:AGKZ20,C:Shmueli22} can achieve the publicly-verifiable classical revocation by just signing a dummy message.
The last primitive \cite{TCC:LLQZ22} can achieve the publicly-verifiable {\it quantum} revocation if the signing key is returned.\footnote{It seems that their scheme could also achieve the classical revocation by measuring the coset states.}
However, our scheme achieves only the privately verifiable revocation.
On the other hand,
one notable advantage of our scheme is that we use only the standard assumption, namely, the hardness of the LWE,
while the above four previous results use somehow stronger assumptions such as oracles or iO.

%Another advantage of our scheme is that it allows unbounded-poly-time signing. 
%We do not know how to sign unbounded-poly-time in tokenized signatures and semi-quantum tokenized signatures.
%\mor{Moreover,
%\cite{TCC:LLQZ22} allows only a pre-fixed number of signing. (Maybe it is not correct?)}

Finally,
\cite{BenDavidSattath}
and
\cite{TCC:LLQZ22}
need quantum channels, 
and semi-quantum tokenized signatures~\cite{C:Shmueli22}
need interactions,
but ours does not.

We remark that our construction is stateful, but it is not so problematic in practice, because...\mor{Takashi, please write this.}\takashi{I don't think we need to write this because I think revocable signautures are inherently stateful.}

One disadvantage of our constructions is that the signature length is large.\mor{I do not remember what is this?}
\fi

\subsection{Technical Overview}
Here we give intuitive explanations of our constructions.

\paragraph{Construction of $\mathsf{DSR\mbox{-}Key}$.}
Our first scheme, $\mathsf{DSR\mbox{-}Key}$, is constructed using \emph{two-tier one-shot signatures} (2-OSS),
which is a new primitive which we introduce in this paper.\footnote{The term ``two-tier" is taken from \cite{TCC:KitNisYam21} where they define two-tier quantum lightning, which is a similar variant of quantum lightning~\cite{EC:Zhandry19b}.}
2-OSS are variants of one-shot signatures~\cite{STOC:AGKZ20} for single-bit messages.
The main difference with regard to one-shot signatures is that there
are two verification algorithms, and a signature for the message ``0'' is verified by a public verification algorithm,
whereas a signature for the massage ``1'' is verified by a {\it private} verification algorithm.
We believe that the notion of 2-OSS may be of independent interest.
Our construction of 2-OSS is conceptually similar to the construction of two-tier quantum lightning in \cite{TCC:KitNisYam21}, and can be based solely on the LWE assumption.

From 2-OSS, we then go on to construct $\mathsf{DSR\mbox{-}Key}$.
We first construct $\mathsf{DSR\mbox{-}Key}$ for single-bit messages from 2-OSS as follows.\footnote{The scheme can be extended to the one for multi-bit messages by using the
collision resistant hash functions.} 
The signing key $\sigk$ of $\mathsf{DSR\mbox{-}Key}$ consists of a pair $(\sigk_0,\sigk_1)$ of signing keys of a 2-OSS scheme.
To sign a single-bit message $m \in \bit$, the message ``0'' is signed with the signing algorithm of a 2-OSS scheme using
the signing key $\sigk_m$.
Because the signature on $m$ corresponds to a particular signature of ``0'' with respect to the 2-OSS scheme, it can be verified with the public verification algorithm of 2-OSS.
To delete the signing key, the message ``1'' is signed with the signing algorithm of the 2-OSS scheme by using
the signing key.
The signature for the message ``1'' corresponds to the revocation certificate, and it can be verified using the private verification algorithm of 2-OSS.

Our aforementioned construction readily implies a \emph{one-time version} of a $\mathsf{DSR\mbox{-}Key}$ scheme, namely, the correctness and security hold when the signing is used only once.
We then upgrade it to the many-time version by using a similar chain-based construction of single-signer signatures from one-shot signatures as in~\cite{STOC:AGKZ20}. 
That is, it works as follows. The signing key and verification key of the many-time scheme are those of the one-time scheme, respectively. We denote them by $(\sfot.\sigk_{0},\sfot.\vk_{0})$. 
When signing on the first message $m_1$, the signer first generates a new key pair $(\sfot.\sigk_{1},\sfot.\vk_{1})$ of the one-time scheme, uses $\sfot.\sigk_{0}$ to sign on the concatenation $m_1\concat \sfot.\vk_{1}$ of the message and the newly generated verification key to generate a signature $\sfot.\sigma_1$ of the one-time scheme.
Then it outputs $(m_1,\sfot.\vk_{1},\sfot.\sigma_1)$ as a signature of the many-time scheme.\footnote{We include $m_1$ in the signature for notational convenience even though this is redundant.}
Similarly, when signing on the $k$-th message $m_k$ for $k\ge 2$, the signer generates a new key pair $(\sfot.\sigk_{k},\sfot.\vk_{k})$ and uses $\sfot.\sigk_{k-1}$ to sign on $m_k\concat \sfot.\vk_{k}$ to generate a signature $\sfot.\sigma_k$. 
Then the signature of the many-time scheme consists of $\{m_i,\sfot.\vk_{i},\sfot.\sigma_i\}_{i\in [k]}$. 
The verification algorithm of the many-time scheme verifies $\sfot.\sigma_i$ for all $i\in [k]$ under the corresponding message and verification key, and accepts if all of these verification checks pass. 
To revoke a signing key, the signer generates revocation certificates for all of the signing keys of the one-time scheme which have previously been generated, and the verification of the revocation certificate simply verifies that all these revocation certificates are valid.\footnote{The ability to verify all previously generated signing keys (e.g., as part of a chain) may require secret \emph{trapdoor information}.}
It is easy to reduce security of the above many-time scheme to that of the one-time scheme.

\paragraph{Construction of $\mathsf{DSR\mbox{-}Sign}$.}
Our second scheme, $\mathsf{DSR\mbox{-}Sign}$, is constructed from what we call \emph{two-tier tokenized signatures} (2-TS),
which is a new primitive introduced in this paper.
2-TS are
variants of tokenized signatures~\cite{BenDavidSattath} for single-bit messages where two signature verification algorithms exist.
One verification algorithm is used to verify signatures for the message ``0'', and it uses the public key.
The other verification algorithm is used to verify signatures for the message ``1'', and it uses the {\it secret} key.

We construct 2-TS from OWFs by using a new lemma that we call the \emph{adaptive hardcore bit property for OWFs}, inspired by a similar notion which was shown for a family of noisy trapdoor claw-free functions by Brakerski et al.~\cite{JACM:BCMVV21}.
We believe that our lemma may be of independent interest, and enable further applications down the line.
The adaptive hardcore bit property for OWFs roughly states that given
$\ket{x_0}+(-1)^c\ket{x_1}$ and $(f(x_0),f(x_1))$, no QPT adversary can output
$(x,d)$ such that $f(x)\in\{f(x_0),f(x_1)\}$ and $d\cdot(x_0\oplus x_1)=c$, where
$f$ is a OWF, $x_0,x_1\gets\bit^\ell$, and $c\gets\bit$.\footnote{We actually need its amplified version, because 
in this case the adversary can win with probability $1/2$ by measuring the state to get $x_0$ or $x_1$, and randomly choosing $d$.}
The adaptive hardcore bit property for OWFs is shown by using a theorem which is implicit in a recent work~\cite{bartusek2023weakening}.

From the adaptive hardcore bit property for OWFs, we construct 2-TS as follows:
The quantum signing token is $\ket{x_0}+(-1)^c\ket{x_1}$
with random $x_0,x_1\gets\bit^\ell$ and $c\gets\bit$.\footnote{Again, we actually consider its amplified version so that the winning probability of the adversary is negligibly small.}
The public key is $(f(x_0),f(x_1))$, where $f$ is a OWF, and the secret key is $(x_0,x_1,c)$.
To sign the message ``0'', the token is measured in the computational basis to obtain either $x_0$ or $x_1$.
To sign the message ``1'', the token is measured in the Hadamard basis to obtain a string $d$ such that $d\cdot(x_0\oplus x_1)=c$.
The measurement result in the computational basis is then verified with the public key, whereas
the measurement result in the Hadamard basis is verified with the secret key. 
Due to the adaptive hardcore bit property for OWFs (formally shown in \Cref{thm:adaptive_hardcore_bit}), no QPT adversary can output both signatures at the same time.

Finally, we observe that $\mathsf{DSR\mbox{-}Sign}$ can be constructed from any 2-TS scheme
by considering the quantum signature of $\mathsf{DSR\mbox{-}Sign}$ as a quantum signing token of 2-TS.
To verify the quantum signature, we sign the message ``0'' by using the quantum token, and verify it. 
To delete the quantum signature, we sign the message ``1'' by using the quantum token.
The verification of the revocation certificate requires one to check whether the deletion certificate is a valid signature for message ``1'' or not.

\subsection{Related Works}
We have already explained relations between our results and existing works on digital signatures with
quantum signing keys. Here, we give a brief review on other related quantum cryptographic primitives.

\paragraph{Certified deletion and revocation.}
Unruh~\cite{JACM:Unruh15} first initiated the study of quantum revocable encryption. This allows the recipient of a quantum ciphertext to return the state, thereby losing all information about the encrypted message.
Quantum encryption with certified deletion~\cite{AC:HMNY21,ITCS:Poremba23,cryptoeprint:2022/1178,cryptoeprint:2023/236,cryptoeprint:2023/265,cryptoeprint:2023/370}, 
first introduced by
Broadbent and Islam~\cite{TCC:BroIsl20}, 
enables the deletion of quantum ciphertexts, whereby a
classical certificate is produced which can be verified.
In particular, \cite{cryptoeprint:2022/1178,cryptoeprint:2023/236,C:HMNY22} study the certified everlasting security
where the security is guaranteed even against unbounded adversary once a valid deletion certificate is issued.
\cite{cryptoeprint:2023/538} and \cite{bartusek2023weakening} recently showed a general conversion technique to convert the certified everlasting lemma
by Bartusek and Kurana~\cite{cryptoeprint:2022/1178} for the private verification to
the public one assuming only OWFs (or even weaker assumptions such as 
hard quantum planted problems for {\bf NP} or the one-way states generators~\cite{C:MorYam22}). 

The notion of certified deletion has also been used to revoke cryptographic keys~\cite{AC:KitNis22,EC:AKNYY23,cryptoeprint:2023/265,cryptoeprint:2023/325,cryptoeprint:2023/1640}. Here, a key is delegated to a user in the form of a quantum state which can later be revoked.
Once the key is destroyed and a valid certificate is issued, the functionality associated with the key is no longer available to the user.

Finally, we remark that the notion of revocation has also been considered in the context of more general programs. Ananth and La Placa~\cite{EC:AnaLap21} introduced the notion of secure software leasing. Here, the security guarantees that the functionality of a piece of quantum software is lost once it is returned and verified.

\paragraph{Copy-protection.}
Copy-protection, introduced by Aaronson~\cite{CCC:Aaronson09}, is a primitive  which allows one to encode a functionality into a quantum state in such a way that it cannot be cloned.
\cite{C:ALLZZ21} showed that any unlearnable functionality can be copy-protected with a classical oracle. \cite{cryptoeprint:2020/1194} constructed copy-protection schemes for (multi-bit) point functions as well as compute-and-compare programs in the quantum random oracle model.
\cite{C:CLLZ21} constructed unclonable decryption schemes from 
iO and compute-and-compare obfuscation for the class of unpredictable distributions,
which were previously constructed with classical oracle in \cite{cryptoeprint:2020/877}.
\cite{C:CLLZ21} also constructed a copy-protection scheme for pseudorandom functions
assuming iO, OWFs, and compute-and-compare obfuscation for the class
of unpredictable distributions. 
\cite{TCC:LLQZ22} constructed bounded collusion-resistant copy-protection for various functionalities (copy-protection of decryption, digital signatures and PRFs)
with iO and LWE.

\subsection{Organization}
Our first result, $\mathsf{DSR\mbox{-}Key}$, is given in \cref{sec:skcert}.
It is constructed from 2-OSS whose definition and construction from the LWE assumption are given in \cref{sec:2-OSS}.
We give a construction of $\mathsf{DSR}\mbox{-}\mathsf{Key}$ with quantum revocation from group actions in \cref{sec:GA}.

Our second result, $\mathsf{DSR\mbox{-}Sign}$, is given in \cref{sec:sigcert}.
It is constructed from 2-TS, which is defined and constructed in \cref{sec:2-TS}.
2-TS is constructed from OWFs via the adaptive hardcore bit property for OWFs, which is explained in \cref{sec:AHB}.

% !TEX root = main.tex
\section{Preliminaries}\label{sec:preliminaries}

\subsection{Basic Notation}
\label{sec:basic_notations}
%check done 2022/09/28

We use the standard notations of quantum computing and cryptography.
We use $\secp$ as the security parameter.
%$[n]$ means the set $\{1,2,...,n\}$.
For any set $S$, $x\gets S$ means that an element $x$ is sampled uniformly at random from the set $S$.
%For a set $S$, $|S|$ means the cardinality of $S$.
We write $\negl$ to mean a negligible function.
%and $\poly$ to mean a polynomial. 
PPT stands for (classical) probabilistic polynomial-time and QPT stands for quantum polynomial-time.
%If we say that an adversary is PPT, it implicitly means non-uniform PPT. \takashi{I think we should remove this sentence because we consider uniform adversaries by default.}
%A QPT unitary is a unitary operator that can be implemented in a QPT quantum circuit.
For an algorithm $A$, $y\gets A(x)$ means that the algorithm $A$ outputs $y$ on input $x$.
%In particular, if $x$ and $y$ are quantum states and $A$ is a quantum algorithm, $y\gets A(x)$ means the following:
%a unitary $U$ is applied on $x\otimes|0...0\rangle\langle0...0|$, and some qubits are traced out.
%Then, the state of remaining qubits is $y$.
%This, importantly, means that the state $y$ is {\it uniquely decided} by the state $x$.
%If $A$ is a QPT algorithm, the unitary $U$ is QPT and the number of ancilla qubits 
%$|0...0\rangle$ is $\poly(\secp)$.
%If $x$ is a classical bit string, $y$ is a quantum state, and $A$ is a quantum algorithm,
%$y\gets A(x)$ sometimes means the following:
%a unitary $U_x$ that depends on $x$ is applied on $|0...0\rangle$, and some qubits are traced out.
%The state of the remaining qubits is $y$.
%We choose this picture if it is more convenient.
For two bit strings $x$ and $y$, $x\|y$ means the concatenation of them.
For simplicity, we sometimes omit the normalization factor of a quantum state.
(For example, we write $\frac{1}{\sqrt{2}}(|x_0\rangle+|x_1\rangle)$ just as
$|x_0\rangle+|x_1\rangle$.)
$I\coloneqq|0\rangle\langle0|+|1\rangle\langle 1|$ is the two-dimensional identity operator.
For the notational simplicity, we sometimes write $I^{\otimes n}$ just as $I$ when
the dimension is clear from the context.

\if0
$\|X\|_1\coloneqq\mbox{Tr}\sqrt{X^\dagger X}$ is the trace norm.
$\mbox{Tr}_\regA(\rho_{\regA,\regB})$ means that the subsystem (register) $\regA$ of the state $\rho_{\regA,\regB}$ on
two subsystems (registers) $\regA$ and $\regB$ is traced out.
For simplicity, we sometimes write $\mbox{Tr}_{\regA,\regB}(|\psi\rangle_{\regA,\regB})$ to mean
$\mbox{Tr}_{\regA,\regB}(|\psi\rangle\langle\psi|_{\regA,\regB})$.
$I$ is the two-dimensional identity operator. For simplicity, we sometimes write $I^{\otimes n}$ as $I$ 
if the dimension is clear from the context.
For the notational simplicity, we sometimes write $|0...0\rangle$ just as $|0\rangle$,
when the number of zeros is clear from the context.
For two pure states $|\psi\rangle$ and $|\phi\rangle$,
we sometimes write $\||\psi\rangle\langle\psi|-|\phi\rangle\langle\phi|\|_1$
as
$\||\psi\rangle-|\phi\rangle\|_1$
to simplify the notation.
$F(\rho,\sigma)\coloneqq\|\sqrt{\rho}\sqrt{\sigma}\|_1^2$
is the fidelity between $\rho$ and $\sigma$.
We often use the well-known relation between the trace distance and the fidelity:
$1-\sqrt{F(\rho,\sigma)}\le\frac{1}{2}\|\rho-\sigma\|_1\le\sqrt{1-F(\rho,\sigma)}$.
\fi

\if0
\subsection{Pseudorandom Functions}
\begin{definition}[Pseudorandom functions (PRFs)]
A keyed function $\{\PRF_k: \cX\rightarrow \cY\}_{k\in \bit^\secp}$ that is computable in classical deterministic polynomial-time is a quantum-query secure pseudorandom function if 
for any QPT adversary $\cA$ with quantum access to the evaluation oracle $\PRF_k(\cdot)$,
\begin{equation}
   |\Pr[1\gets\cA^{\PRF_k(\cdot)}(1^\secp)] 
   -\Pr[1\gets\cA^{f(\cdot)}(1^\secp)] |\le\negl(\secp),
\end{equation}
where $k\gets\bit^\secp$ and $f:\cX\rightarrow \cY$ is a function chosen uniformly at random.
\end{definition}

\begin{remark}
Quantum-query secure PRFs exist if quantum-query secure\mor{quantum-secure?} OWFs exist~\cite{FOCS:Zhandry12}.
\end{remark}
\fi

\if0
\subsection{Digital Signatures}
\begin{definition}[Digital signatures] \label{def:sEUF-CMA} 
A digital signature scheme is a set of algorithms $(\Gen,\Sign,\Ver)$ such that
\begin{itemize}
    \item 
    $\Gen(1^\secp)\to(k,\vk):$ It is a PPT algorithm that, on input the security parameter $\secp$, outputs
    a signing key $k$ and a verification key $\vk$.
    \item
    $\Sign(k,\msg)\to\sigma:$
    It is a PPT algorithm that, on input the message $\msg$ and $k$, outputs a signature $\sigma$.
    \item
   $\Ver(\vk,\msg,\sigma)\to\top/\bot:$ 
   It is a deterministic classical polynomial-time algorithm that, on input $\vk$, $\msg$, and $\sigma$, outputs $\top/\bot$.
\end{itemize}
We require the following correctness and strong EUF-CMA security.

\paragraph{\bf Correctness:}
For any $\msg$,
\begin{equation}
   \Pr[\top\gets\Ver(\vk,\msg,\sigma):
   (k,\vk)\gets\Gen(1^\secp),
   \sigma\gets\Sign(k,\msg)
   ]\ge1-\negl(\secp). 
\end{equation}

\paragraph{\bf Strong EUF-CMA security:}
For any QPT adversary $\cA$ with classical oracle access to the signing oracle $\Sign(k,\cdot)$,
\begin{equation}
   \Pr[(\msg^\ast,\sigma^\ast)\notin \mathcal{Q}~\land~\top\gets\Ver(\vk,\msg^*,\sigma^*):
   (k,\vk)\gets\Gen(1^\secp),
   (\msg^\ast,\sigma^\ast)\gets\cA^{\Sign(k,\cdot)}(\vk)
   ]\le\negl(\secp), 
\end{equation}
where $\mathcal{Q}$ is the set of message-signature pairs returned by the signing oracle. 
%When the above holds for all $\cA$ that makes at most one signing query, we say that the scheme satisfies one-time security. 
\end{definition}
%\fuyuki{I think one-time security is not explicitly used in this paper, though it is used in the transformation adding decryption error detectability. We could delete the definition of it since the term "one-time security" is not used in the rest of this paper. (Or, we should explicitly say that the transformation adding decryption error detectability can be done with one-time signature.)}\takashi{I just deleted the definition.}

\begin{remark}
Without loss of generality, we can assume that $\Sign$ is deterministic.
(The random seed used for $\Sign$ can be generated by applying a PRF on the message signed, and the key of PRF is appended to the signing key.)
\end{remark}

\begin{theorem}[{\cite[Sec. 6.5.2]{Book:Goldreich04}}]\label{thm:sig_from_OWF}
Strong EUF-CMA secure digital signatures exist if OWFs exist.
%~\cite{STOC:Rompel90}.
\end{theorem}
\fi

\if0
\subsection{Symmetric Key Encryption}
\begin{definition}[Symmetric Key Encryption (SKE)]
A (classical) symmetric key encryption (SKE) scheme with message space $\bit^\ell$ is a set of algorithms $(\Enc,\Dec)$ such that
\begin{itemize}
    %\item 
    %$\Gen(1^\secp)\to \sk:$ It is a PPT algorithm that, on input the security parameter $\secp$, outputs
    %a secret key $\sk$.
    \item
    $\Enc(K,\msg)\to\ct:$
    It is a PPT algorithm that, on input $K\in \bit^\secp$ and the message $\msg\in \bit^\ell$, outputs a ciphertext $\ct$.
    \item
   $\Dec(K,\ct)\to\msg':$ 
   It is a deterministic classical polynomial-time algorithm that, on input $K$ and $\ct$, outputs $\msg'$.
\end{itemize}
We require the following correctness and IND-CPA security.
\paragraph{\bf Correctness:}
For any $\msg\in\bit^\ell$,
\begin{equation}
   \Pr[\msg\gets\Dec(K,\ct):
   K\gets\bit^\secp, 
   \ct\gets\Enc(K,\msg)
   ] = 1.
\end{equation}

\paragraph{\bf IND-CPA Security:}
For any QPT adversary $\cA$ with classical oracle access to the encryption oracle $\Enc(K,\cdot)$,
\begin{equation}
   \Pr\left[b\gets\cA(\ct,\st):
   \begin{array}{r}
   K\gets\bit^\secp\\
   (\msg_0,\msg_1,\st)\gets \cA^{\Enc(K,\cdot)}(1^\secp)\\
   b\gets \bit\\ 
   \ct\gets\Enc(K,\msg_b)
   \end{array}
   \right] \le \frac{1}{2}+\negl(\secp).
\end{equation}   
\end{definition}

\begin{remark}
IND-CPA secure SKE exists if OWFs exist~\cite{JACM:GolGolMic86,SIAMCOMP:HILL99}.
\end{remark}
\fi

\if0
\subsection{Lemma by Boneh and Zhandry}
In this paper, we use the following lemma by Boneh and Zhandry~\cite{C:BonZha13}.

\begin{lemma}[{\cite[Lemma 2.1]{C:BonZha13}}]
\label{lem:BZ}
Let $A$ be a quantum algorithm, and let $\Pr[x]$ be the probability that $A$ outputs $x$. Let
$A'$ be another quantum algorithm obtained from $A$ by pausing $A$ at an arbitrary stage of execution,
performing a partial measurement that obtains one of $k$ outcomes, and then resuming $A$. 
Let $\Pr'[x]$ be the probability that $A'$ outputs $x$. Then $\Pr'[x] \ge \Pr[x]/k$.
\end{lemma}
\fi

\paragraph{Densities and Distances.}\label{sec:densities-distances}

Let $\mathcal{X}$ be a finite domain. A density $f$ on $\mathcal{X}$ is a function $f: \mathcal{X} \rightarrow [0,1]$ such that $\sum_{x \in \mathcal{X}} f(x) =1$. We denote by $\mathcal{D}_{\mathcal{X}}$ the set of densities on $\mathcal{X}$. For any $f \in \mathcal{D}_{\mathcal{X}}$,  we let $\mathsf{Supp}(f) \coloneqq \{x \in \mathcal{X} \, : \, f(x) >0 \}$. Given two densities $f_0,f_1$ over $\mathcal{X}$, the Hellinger distance between $f_0$ and $f_1$ is defined by
$$
\mathsf{H}^2(f_0,f_1) \coloneqq 1 - \sum_{x \in \mathcal{X}} \sqrt{f_0(x) f_1(x)}.
$$
For two density matrices $\rho$ and $\sigma$, the trace distance is defined as
$$\|\rho-\sigma\|_{\rm tr} \coloneqq 
\frac{1}{2}\|\rho-\sigma\|_1 =
\frac{1}{2} \mathrm{Tr}\left[\sqrt{(\rho - \sigma)^2}\right],
$$
where $\|\cdot\|_1$ is the trace norm.
The following elementary lemma relates the Hellinger distance and the trace distance of superpositions.
\if0
\mor{I do not mind any notation for the trace distance, but I already used $\|\|_1$ for many places,
so is it OK to use $\|\|_1$ as trace distance?} \alex{Sure, but isn't that non-standard? Usually we write
$$
\| \rho- \sigma\|_{\mathrm{tr}} = \frac{1}{2} \| \rho- \sigma\|_1 
$$
where $\| \cdot \|_1$ is the trace norm.
So the factor $1/2$ is different...but as long as we point it out I think it's OK}
\mor{Oh, right! I corrected.}
\fi

\begin{lemma}\label{lem:hellinger}
Let $\mathcal{X}$ be a finite set, $f_0,f_1 \in \mathcal{D}_{\mathcal{X}}$ and
$$
\ket{\psi_b} \coloneqq \sum_{x \in \mathcal{X}} \sqrt{f_b(x)} \ket{x}
$$
for $b \in \bit$. It holds that 
$$
\left\| \proj{\psi_0} - \proj{\psi_1} \right\|_{\rm tr} = \sqrt{1- (1- \mathsf{H}^2(f_0,f_1))^2}.
$$
\end{lemma}

\begin{theorem}[Holevo-Helstrom, \cite{HOLEVO1973337,Helstrom1969QuantumDA}]\label{thm:holevo-hesltrom} Consider an experiment in which one of two quantum states, either $\rho$ or $\sigma$, is sent to a distinguisher with probability $1/2$. Then, any measurement which seeks to discriminate between $\rho$ and $\sigma$ has success probability $p_{\text{succ}}$ at most
$$
p_{\text{succ}} \leq \frac{1}{2} + \frac{1}{2} \|\rho-\sigma\|_{\rm tr}.
$$  
\end{theorem}

\subsection{Cryptography}
\begin{definition}[EUF-CMA Secure Digital Signatures]
An EUF-CMA secure digital signature scheme is a set $(\KeyGen,\Sign,\Ver)$ of QPT algorithms such that
\begin{itemize}
    \item 
    $\KeyGen(1^\secp)\to(\sigk,\vk):$
    It is a QPT algorithm that, on input the security parameter $\secp$, outputs a classical signing key $\sigk$ and a classical verification
    key $\vk$.
    \item 
    $\Sign(\sigk,m)\to\sigma:$
    It is a QPT algorithm that, on input $\sigk$ and a message $m$, outputs a classical signature $\sigma$.
    \item 
    $\Ver(\vk,\sigma,m)\to\top\bot:$
    It is a QPT algorithm that, on input $\vk$, $\sigma$, and $m$, outputs $\top/\bot$.
\end{itemize}
   We require the following two properties.

   \paragraph{Correctness:}
   For any message $m$,
   \begin{align}
    \Pr[\top\gets\Ver(\vk,\sigma,m):(\sigk,\vk)\gets\KeyGen(1^\secp),\sigma\gets\Sign(\sigk,m)]\ge1-\negl(\secp).   
   \end{align}
   
   \paragraph{EUF-CMA security:}
   For any QPT adversary $\cA$,
   \begin{align}
    \Pr\left[\top\gets\Ver(\vk,\sigma,m^*):
    \begin{array}{rr}
    (\sigk,\vk)\gets\KeyGen(1^\secp)\\
    (m^*,\sigma)\gets\cA^{\Sign(\sigk,\cdot)}(\vk)
    \end{array}
    \right]\le\negl(\secp), 
   \end{align}
   where $\cA$ is not allowed to query $m^*$ to the signing oracle.
\end{definition}

\subsection{Noisy Trapdoor Claw-Free Hash Function Family}\label{sec:ntcf}

We now recall the notion of noisy trapdoor claw-free (NTCF) hash function family introduced by~\cite{JACM:BCMVV21}.

\begin{definition}[NTCF Hash Function Family~\cite{JACM:BCMVV21}]\label{def:NTCF}
Let $\mathcal{X},\mathcal{Y}$ be finite sets, let $\mathcal{D}_{\mathcal{Y}}$ be the set of probability densities over $\mathcal{Y}$, and $\mathcal{K}_{\mathcal{F}}$ a finite set of keys. A family of functions
$$
\mathcal{F} := \{f_{k,b} \, : \, \mathcal{X} \rightarrow \mathcal{D}_{\mathcal{Y}} \}_{k \in \mathcal{K}_{\mathcal{F}},b \in \{0,1\}}
$$
is a NTCF family if the following properties hold:
\begin{description}
\item[Efficient Function Generation:] There exists a PPT algorithm $\mathsf{NTCF}.\Gen_{\mathcal{F}}$ which generates a key $k \in \mathcal{K}_{\mathcal{F}}$ and a trapdoor $\mathsf{td}$.

\item[Trapdoor Injective Pair:] For all keys $k \in \mathcal{K}_{\mathcal{F}}$, the following holds:
\begin{itemize}
    \item Trapdoor: For all $b \in \{0,1\}$ and $x \neq x' \in \mathcal{X}$, $\mathsf{Supp}(f_{k,b}(x)) \cap \mathsf{Supp}(f_{k,b}(x')) = \emptyset$. In addition, there exists an efficient deterministic algorithm $\mathsf{Inv}_{\mathcal{F}}$ such that for all $b \in \{0,1\}$, $x \in \mathcal{X}$ and $y \in \mathsf{Supp}(f_{k,b}(x))$, we have $\mathsf{Inv}(\mathsf{td},b,y)=x$.

    \item Injective pair: There exists a perfect matching relation $\mathcal{R}_k \subseteq \mathcal{X} \times \mathcal{X}$ such that $f_{k,0}(x_0) = f_{k,1}(x_1)$ if and only if $(x_0,x_1) \in \mathcal{R}_k$.
\end{itemize}

\item[Efficient Range Superposition:] For all keys $k \in \mathcal{K}_{\mathcal{F}}$ and $b \in \bit$, there exists a function $f'_{k,b}: \mathcal{X} \rightarrow \mathcal{D}_{\mathcal{Y}}$ such that the following holds:
\begin{itemize}
    \item For all $(x_0,x_1) \in \mathcal{R}_k$ and $y \in \mathsf{Supp}(f'_{k,b}(x_b))$, it holds that both $\mathsf{Inv}_{\mathcal{F}}(\mathsf{td},b,y)=x_b$ and $\mathsf{Inv}_{\mathcal{F}}(\mathsf{td},b \oplus 1,y) = x_{b \oplus 1}$.

    \item There exists an efficient deterministic procedure $\mathsf{Chk}_{\mathcal{F}}$ that takes as input $k \in \mathcal{K}_{\mathcal{F}},b \in \bit$, $x \in \mathcal{X}$ and $y \in \mathcal{Y}$ and outputs $1$ if $y \in \mathsf{Supp}(f'_{k,b}(x))$ and $0$ otherwise, This procedure does not need a trapdoor $\mathsf{td}$.

    \item For all $k \in \mathcal{K}_{\mathcal{F}}$ and $b \in \bit$,
    $$
    \underset{x \leftarrow \mathcal{X}}{\mathbb{E}}\left[\mathsf{H}^2(f_{k,b}(x),f'_{k,b}(x))\right] \leq \negl(\lambda).
    $$
    Here, $\mathsf{H}^2$ is the Hellinger distance (see \Cref{sec:densities-distances}). In addition, there exists a QPT algorithm $\mathsf{Samp}_{\mathcal{F}}$ that takes as input $k$ and $b \in \bit$ and prepae the quantum state
    $$
    \ket{\psi'} = \frac{1}{\sqrt{|\mathcal{X}|}} \sum_{x \in \mathcal{X}, y\in \mathcal{Y}}
\sqrt{(f'_{k,b}(x))(y)} \ket{x} \ket{y}.
    $$
This property and \Cref{lem:hellinger} immediately imply that
$$
\| \proj{\psi} - \proj{\psi'} \|_{\rm tr} \leq \negl(\lambda),
$$
where $\ket{\psi} = \frac{1}{\sqrt{|\mathcal{X}|}} \sum_{x \in \mathcal{X}, y\in \mathcal{Y}}
\sqrt{(f_{k,b}(x))(y)} \ket{x} \ket{y}.$
\end{itemize}

\item[Adaptive Hardcore Bit:] For all keys $k \in \mathcal{X}_{\mathcal{F}}$, the following holds. For some integer $w$ that is a polynomially bounded function of $\lambda$,
\begin{itemize}
    \item For all $b \in \bit$ and $x \in \mathcal{X}$, there exists a set $\mathcal{G}_{k,b,x} \subseteq \bit^w$ such that $\Pr_{d \leftarrow\bit^w}[d \notin \mathcal{G}_{k,b,x}] \leq \negl(\lambda)$. In addition, there exists a PPT algorithm that checks for membership in $\mathcal{G}_{k,b,x}$ given $k,b,x$ and $\mathsf{td}$.

    \item There exists an efficiently computable injection $J: \mathcal{X} \rightarrow \bit^w$ such that $J$ can be inverted efficiently on its range, and such that the following holds. Let
\begin{align*}
    H_k &:= \{(b,x_b,d,d \cdot (J(x_0) \oplus J(x_1))) \, | \, b \in \bit, (x_0,x_1) \in \mathcal{R}_k, d \in \mathcal{G}_{k,0,x_0} \cap  \mathcal{G}_{k,1,x_1}\},\\
    \overline{H}_k &:= \{(b,x_b, d,c) \, | \, (b,x,d,c\oplus 1) \in H_k\},
\end{align*}
then, for any QPT algorithm $\mathcal{A}$, it holds that
$$
\vline \,\underset{(k,\mathsf{td}) \leftarrow \mathsf{NTCF},\Gen_{\mathcal{F}}(1^\lambda)}{\Pr}\left[ \mathcal{A}(k) \in H_k \right] - \underset{(k,\mathsf{td}) \leftarrow \mathsf{NTCF},\Gen_{\mathcal{F}}(1^\lambda)}{\Pr}\left[ \mathcal{A}(k) \in \overline{H}_k \right] \,
\vline \leq \negl(\lambda).
$$

\end{itemize}
\end{description}
\end{definition}

Brakerski et al.~\cite{JACM:BCMVV21} showed the following theorem.

\begin{theorem}[\cite{JACM:BCMVV21}] Assuming the hardness of the LWE problem, there exists an NTCF family.
\end{theorem}

We also recall the following amplified adaptive hardcore bit property~\cite{CoRR:RadSat19,TCC:KitNisYam21}.

\begin{definition}[Amplified Adaptive Hardcore Property]\label{def:ampl-adaptive-hc}
We say that a NTCF family $\mathcal{F}$ satisfies the amplified adaptive hardcore bit property if, for any QPT $\mathcal{A}$ and $n=\omega(\log \lambda)$, it holds that
\begin{align}
\Pr \left[ 
\begin{array}{cc}
\forall i \in [n] \,: \, x_i = x_{i,b_i},\\
d_i \,\in \, \mathcal{G}_{k,0,x_{i,0}}
\cap \,\mathcal{G}_{k,1,x_{i,1}},\\
c_i = d_i \cdot (J(x_{i,0}) \oplus J(x_{i,1}))
\end{array}
:
\begin{array}{ll}
\forall i \in [n] \,: \, (k_i,\mathsf{td}_i) \leftarrow \mathsf{NTCF}.\Gen_{\mathcal{F}}(1^\lambda)\\
\{(b_i,x_i,y_i, d_i,c_i)\}_{i \in [n]} \leftarrow \mathcal{A}(\{k_i\}_{i \in [n]})\\
x_{i,\beta} \leftarrow \mathsf{Inv}_{\mathcal{F}}(\mathsf{td}_i,\beta,y_i) \, \text{ for } (i,\beta) \in [n] \times \bit
\end{array}
\right] \leq \negl(\lambda).
\end{align} 
%\takashi{I think the font is too small.}
\end{definition}

Finally, we make use of the following result.

\begin{lemma}[\cite{CoRR:RadSat19,TCC:KitNisYam21}]
Any NTCF family satisfies the amplified adaptive hardcore property.
\end{lemma}

\section{Two-tier One-shot Signatures}
\label{sec:2-OSS}

%In this section, we introduce a variant of one-shot signatures~\cite{STOC:AGKZ20}, which we call {\it two-tier one-shot signatures}.
%The main difference from one-shot signatures is that there
%are two verification algorithms, and the signature for message 0 is verified by a public verification algorithm,
%while the signature for massage 1 is verified by a {\it private} verification algorithm.

In this section, we define two-tier one-shot signatures (2-OSS),
and construct it from the LWE assumption~\cite{Reg05}. Broadly speaking, this cryptographic primitive is a variant of one-shot signatures~\cite{STOC:AGKZ20},
where the verification of a signature for the message ``$0$'' is done publicly, whereas that for the message ``$1$'' is done
only privately.

\subsection{Definition}
The formal definition of 2-OSS is as follows.
\begin{definition}[Two-Tier One-Shot Signatures (2-OSS)]\label{def:two-tier-one-shot-signature}
A two-tier one-shot signature scheme is a set \break $(\Setup,\KeyGen,\Sign,\Ver_0,\Ver_1)$ of algorithms such that
\begin{itemize}
\item 
$\Setup(1^\secp)\to(\pp,\sk):$
It is a QPT algorithm that, on input the security parameter $\secp$, outputs a classical parameter $\pp$ and a classical secret key $\sk$.
    \item 
    $\KeyGen(\pp)\to (\sigk,\vk):$
    It is a QPT algorithm that, on input $\pp$, outputs
    a quantum signing key $\sigk$ and a classical verification key $\vk$.

%    \item 
%    $\StateGen(\pk)\to\psi:$
%    It is a QPT algorithm that, on input $\pk$, outputs a quantum state $\psi$.
    \item 
    $\Sign(\sigk,m)\to\sigma:$
    It is a QPT algorithm that, on input $\sigk$ and a message $m\in\bit$, outputs a classical signature $\sigma$.
    \item 
    $\Ver_0(\pp,\vk,\sigma)\to\top/\bot:$
    It is a QPT algorithm that, on input $\pp$, $\vk$, and $\sigma$, outputs $\top/\bot$.
    \item 
    $\Ver_1(\pp,\sk,\vk,\sigma)\to\top/\bot:$
    It is a QPT algorithm that, on input $\pp$, $\sk$, and $\sigma$, outputs $\top/\bot$. 
\end{itemize}
We require the following properties.

\paragraph{Correctness:}
\begin{align}
\Pr\left[
\top\gets\Ver_0(\pp,\vk,\sigma):
\begin{array}{rr}
(\sk,\pp)\gets\Setup(1^\secp)\\
(\sigk,\vk)\gets\KeyGen(\pp)\\
\sigma\gets\Sign(\sigk,0)
\end{array}
\right]\ge1-\negl(\secp)    
\end{align}
and
\begin{align}
\Pr\left[\top\gets\Ver_1(\pp,\sk,\vk,\sigma):
\begin{array}{rr}
(\sk,\pp)\gets\Setup(1^\secp)\\
(\sigk,\vk)\gets\KeyGen(\pp)\\
\sigma\gets\Sign(\sigk,1)
\end{array}
\right]\ge1-\negl(\secp).
\end{align}

\paragraph{Security:}
For any QPT adversary $\cA$,
\begin{align}
\Pr\left[
\top\gets\Ver_0(\pp,\vk,\sigma_0)
\wedge
\top\gets\Ver_1(\pp,\sk,\vk,\sigma_1)
:
\begin{array}{rr}
(\sk,\pp)\gets\Setup(1^\secp)\\
(\vk,\sigma_0,\sigma_1)\gets\cA(\pp)
\end{array}
\right]\le\negl(\secp).
\end{align}
\end{definition}

\subsection{Construction}
We show that 2-OSS can be constructed from the LWE assumption~\cite{Reg05}. Specifically, we make use of NTCF families (see \Cref{sec:ntcf}) which allow us to generate quantum states that have a nice structure in both the computational basis, as well as the Hadamard basis. Our 2-OSS scheme is based on the two-tier quantum lightning scheme in \cite{TCC:KitNisYam21} and leverages this structure to sign messages: to sign the message ``$0$'', we output a measurement outcome in the computational basis, whereas if we wish to sign ``$1$'', we output a measurement outcome in the Hadamard basis.
Crucially, the so-called adaptive hardcore-bit property (see \Cref{def:NTCF}) ensures that it is computationally difficult to produce the two outcomes simultaneously.

\begin{theorem}
Assuming the quantum hardness of the LWE problem, there exists two-tier one-shot signatures.  
\end{theorem}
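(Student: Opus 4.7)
The plan is to instantiate the two-tier one-shot signature scheme directly from an NTCF family, which is known to exist under the LWE assumption. The $\Setup$ algorithm will run $\GenF$ to sample $(\fk,\ntcftd)$ and set $\pp\setval \fk$, $\sk\setval \ntcftd$. The $\KeyGen$ algorithm will run the NTCF state-generation procedure to prepare (approximately) $\frac{1}{\sqrt{2}}(\ket{0,x_0}+\ket{1,x_1})$ together with a classical image $y$ satisfying $f_{\fk,0}(x_0)\approx f_{\fk,1}(x_1)\approx y$; it will set $\vk\setval y$ and keep the two-preimage superposition as $\sigk$. The signing algorithm on message $0$ measures $\sigk$ in the computational basis to obtain $(b,x_b)$, while on message $1$ it applies Hadamards and measures to obtain $(d,c)\in \bit^{w+1}\times \bit$ with $d\neq 0^{w+1}$. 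Public verification $\Ver_0$ checks $\Chk(\fk,b,x_b,y)=1$, and private verification $\Ver_1$ uses $\ntcftd$ via $\InvF$ to recover $(x_0,x_1)$ from $y$ and then checks that $d\neq 0^{w+1}$ and $d\cdot(0\concat x_0 \oplus 1\concat x_1)=c$.

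Correctness will follow immediately from the NTCF properties. First, the claw-state preparation yields, up to negligible trace distance, the desired superposition over the two preimages of the measured image $y$ (this is the standard guarantee of $\Samp$ together with measurement of the image register). Measuring this state in the computational basis therefore gives $(b,x_b)$ with $\Chk(\fk,b,x_b,y)=1$ with all but negligible probability, giving $\Ver_0$-correctness. Measuring in the Hadamard basis instead yields $(d,c)$ satisfying the claimed linear relation over the concatenated preimages, and $d\neq 0^{w+1}$ with overwhelming probability; together with the trapdoor-based recovery of $(x_0,x_1)$, this gives $\Ver_1$-correctness.

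For security, I will reduce directly to the adaptive hardcore bit property of the NTCF family. Given any QPT adversary $\cA$ for the 2-OSS game that outputs $(\vk,\sigma_0,\sigma_1)$ which both verifications accept, the reduction $\cB$ receives $\fk$ from its NTCF challenger, forwards it as $\pp$ to $\cA$, and parses $\sigma_0=(b,x_b)$ and $\sigma_1=(d,c)$. Acceptance of $\Ver_0$ guarantees that $(b,x_b)$ is a valid preimage of $y=\vk$ under $f_{\fk,b}$, and acceptance of $\Ver_1$ guarantees, via the trapdoor-based decoding of $y$ into $(x_0,x_1)$, that $d\neq 0^{w+1}$ and $d\cdot(0\concat x_0\oplus 1\concat x_1)=c$. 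This is exactly the pair of conditions that the adaptive hardcore bit game requires the adversary to produce simultaneously, so $\cB$ wins that game with the same advantage, up to the negligible error introduced by the Chk/Samp approximation.

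The main technical obstacle will be dealing cleanly with the approximate, noisy nature of NTCFs: the superposition produced by $\Samp$ is only close in trace distance to the ideal claw state, and $\Chk$ accepts a slightly broader set of strings than the exact preimages. I expect to handle this along the same lines as the two-tier quantum lightning construction of \cite{TCC:KitNisYam21}, namely by invoking the $w$-good image guarantee of the NTCF family and absorbing the $O(\negl(\secp))$ slack into the correctness and security bounds; no new ideas beyond a careful bookkeeping of these error terms should be required.
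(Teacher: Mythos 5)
Your construction follows the same basic blueprint as the paper's (NTCF claw states, computational-basis measurement for the message ``0'', Hadamard-basis measurement for the message ``1'', public verification via $\Chk$ and private verification via the trapdoor), but there is a genuine gap: you instantiate the scheme with a \emph{single} NTCF instance and reduce security to the (single-shot) adaptive hardcore bit property. That property only bounds the adversary's success probability by $\frac{1}{2}+\negl(\secp)$, whereas the 2-OSS security definition demands a \emph{negligible} bound. Indeed, a single-instance scheme is concretely insecure: the adversary can honestly run $\KeyGen$, measure the claw state in the computational basis to obtain a valid $\sigma_0=(b,x_b)$, and then output a uniformly random $d$ (which lies in $\mathcal{G}_{k,0,x_0}\cap\mathcal{G}_{k,1,x_1}$ with overwhelming probability by the NTCF definition) together with a uniformly random bit $c$. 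This passes both $\Ver_0$ and $\Ver_1$ with probability essentially $\frac{1}{2}$. Your reduction to the adaptive hardcore bit game is correct as far as it goes, but it only ever delivers the non-useful conclusion that the 2-OSS adversary wins with probability at most $\frac{1}{2}+\negl(\secp)$.

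The paper avoids this by running $n=\omega(\log\lambda)$ independent NTCF instances in parallel (both keys and claw states), requiring all $n$ checks to pass in $\Ver_0$ and $\Ver_1$, and reducing security to the \emph{amplified} adaptive hardcore bit property (\Cref{def:ampl-adaptive-hc}), which drives the adversary's advantage down to $\negl(\secp)$. The same issue (and the same fix by parallel repetition) is flagged explicitly in the paper for the OWF-based analogue in \Cref{thm:adaptive_hardcore_bit_amp}. Everything else in your write-up --- the handling of the noisy/approximate nature of $\Samp$ and $\Chk$, and the correctness argument --- is consistent with the paper's treatment and would go through once the parallel repetition is added.
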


\begin{proof}
To construct a 2-OSS scheme from LWE, we use NTCF families (see \Cref{def:NTCF}). Our construction is almost identical to the construction of two-tier quantum lightning in \cite{TCC:KitNisYam21}. 

\paragraph{Construction.} 
Let $n = \omega(\log \lambda)$. Consider the following scheme.
\begin{itemize}

    \item $\Setup(1^\secp)$: Generate $(k_i,\mathsf{td}_i) \leftarrow \mathsf{NTCF}.\Gen_{\mathcal{F}}(1^\lambda)$ for $i \in [n]$ and set $(\pp,\sk):=(\{k_i\}_{i \in [n]}, \{ \mathsf{td}_i\}_{i \in [n]})$.
   
    \item 
    $\KeyGen(\pp)$:
    Parse $\pp = \{k_i\}_{i \in [n]}$. For each $i \in [n]$, generate a quantum state
$$
    \ket{\psi'_i} = \frac{1}{\sqrt{|\mathcal{X}|}} \sum_{x \in \mathcal{X}, y\in \mathcal{Y},
    b \in \bit}
\sqrt{(f'_{k_i,b}(x))(y)} \ket{b,x} \ket{y}
    $$
by using $\mathsf{Samp}_{\mathcal{F}}$, measure the last register to obtain $y_i \in \mathcal{Y}$, and let $\ket{\phi'_i}$ be the post-measurement state where the measured register is discarded. Output $(\vk,\sigk) = (\{y_i\}_{i \in [n]}, \{\ket{\phi'_i}\}_{i \in [n]})$.    
    \item 
    $\Sign(\sigk,m):$ on input $m \in \bit$, parse $\sigk$ as $ \{\ket{\phi'_i}\}_{i \in [n]}$ and then proceed as follows:
    \begin{itemize}
        \item if $m=0$: measure $\ket{\phi'_i}$ in the computational basis for $i \in [n]$, which results in measurement outcomes $\{(b_i,x_i)\}_{i \in [n]}$, and output $\sigma = \{(b_i,x_i)\}_{i \in [n]}$ as the signature.

         \item if $m=1$: measure $\ket{\phi'_i}$ in the Hadamard basis for $i \in [n]$, which results in measurement outcomes $\{(c_i,d_i)\}_{i \in [n]}$, and output $\sigma = \{(c_i,d_i)\}_{i \in [n]}$ as the signature.
    \end{itemize}
    
    \item 
    $\Ver_0(\pk,\vk,\sigma):$
    Parse $\pp = \{k_i\}_{i \in [n]}$, $\vk = \{y_i\}_{i \in [n]}$ and  $\sigma = \{(b_i,x_i)\}_{i \in [n]}$. Use the procedure $\mathsf{Chk}_{\mathcal{F}}$ in \Cref{def:NTCF} to determine if $y_i \in \mathsf{Supp}(f'_{k,b}(x_i))$ for all $i \in [n]$. Output $\top$ if it is the case, and output $\bot$ otherwise.
    \item 
    $\Ver_1(\pp,\sk,\vk,\sigma):$
    Parse $\pp = \{k_i\}_{i \in [n]}$, $\sk = \{ \mathsf{td}_i\}_{i \in [n]}$, $\vk = \{y_i\}_{i \in [n]}$  and $\sigma = \{(c_i,d_i)\}_{i \in [n]}$. Then, compute $x_{i,\beta} \leftarrow \mathsf{Inv}_{\mathcal{F}}(\mathsf{td}_i, \beta, y_i)$ for every $i\in [n]$ and $\beta \in \bit$. Output $\top$, if $d_i \in \mathcal{G}_{k,0,x_{i,0}} \cap \mathcal{G}_{k,1,x_{i,1}}$ and $c_i = d_i \cdot (J(x_{i,0}) \oplus J(x_{i,1}))$ for all $i \in [n]$, and output $\bot$ otherwise.
\end{itemize}
The correctness of our 2-OSS scheme follows from the properties of the NTCF in \Cref{def:NTCF}. Security follows from the amplified adaptive hardcore bit property in \Cref{def:ampl-adaptive-hc}.
\end{proof}

\section{Digital Signatures with Revocable Signing Keys}
\label{sec:skcert}
In this section, we define digital signatures with revocable signing keys ($\mathsf{DSR\mbox{-}Key}$) and 
give its construction from 2-OSS.

\subsection{Definition}

Let us now present a formal definition of $\mathsf{DSR\mbox{-}Key}$. Note that we consider the \emph{stateful} setting which requires that the signer keep a \emph{state} of all previously signed messages and keys.

\begin{definition}[(Stateful) Digital Signatures with Revocable Signing Keys ($\mathsf{DSR\mbox{-}Key}$)]\label{def:dswrsk}
A (stateful) digital signature scheme with revocable signing keys is the following set $ (\setup,\KeyGen,\Sign,\Ver,\Del,\Cert)$    
of algorithms:
\begin{itemize}
\item 
$\setup(1^\secp)\to(\ck,\pp):$
It is a QPT algorithm that, on input the security parameter $\secp$, outputs a classical key $\ck$ and a classical parameter $\pp$.
    \item 
    $\KeyGen(\pp)\to(\sigk_0,\vk):$
    It is a QPT algorithm that, on input $\pp$, outputs 
    a quantum signing key $\sigk_0$ and a classical verification key $\vk$.

    \item 
    $\Sign(\pp,\sigk_{i}, m)\to(\sigk_{i+1},\sigma):$
    It is a QPT algorithm that, on input $\pp$, a message $m$ and a signing key $\sigk_{i}$, outputs a subsequent signing key $\sigk_{i+1}$ and a classical signature $\sigma$. 
    \item 
    $\Ver(\pp,\vk,m,\sigma)\to\top/\bot:$
    It is a QPT algorithm that, on input $\pp$, $\vk$, $m$, and $\sigma$, outputs $\top/\bot$.
    \item 
    $\Del(\sigk_i)\to\cert:$
    It is a QPT algorithm that, on input $\sigk_i$, outputs a classical certificate $\cert$. 
    \item 
    $\Cert(\pp,\vk,\ck,\cert,S)\to\top/\bot:$
    It is a QPT algorithm that, on input $\pp$, $\vk$, $\ck$, $\cert$, and a set $S$ consisting of messages, outputs $\top/\bot$. 
\end{itemize}
We require the following properties.

\paragraph{Many-time correctness:}
For any polynomial $p=p(\lambda)$, and any messages $(m_1,m_2,...,m_{p})$,
\begin{align}
\Pr\left[\bigwedge_{i\in[p]}\top\gets\Ver(\pp,\vk,m_i,\sigma_i):
\begin{array}{rr}
(\pp,\ck)\gets\setup(1^\secp)\\
     (\sigk_0,\vk)\gets\KeyGen(\pp)  \\
     (\sigk_1,\sigma_1)\gets\Sign(\pp,\sigk_0,m_1)\\
     (\sigk_2,\sigma_2)\gets\Sign(\pp,\sigk_1,m_2)\\
     ...\\
     (\sigk_{p},\sigma_{p})\gets\Sign(\pp,\sigk_{p-1},m_{p})
\end{array}
\right]\ge1-\negl(\secp).    
\end{align}
We say that the scheme satisfies one-time correctness if the above property is satisfied for $p=1$.

\paragraph{EUF-CMA security:}
For any QPT adversary $\cA$,
\begin{align}
\Pr\left[\top\gets\Ver(\pp,\vk,m^*,\sigma^*):
\begin{array}{rr}
(\pp,\ck)\gets\setup(1^\secp)\\
(\sigk_0,\vk)\gets\KeyGen(\pp)\\
(m^*,\sigma^*)\gets\cA^{\cO_{\Sign}}(\vk)
\end{array}
\right]    
\le\negl(\secp),
\label{revocable_sigk_EUF-CMA}
\end{align}
where $\cO_{\Sign}$ is a stateful signing oracle defined below and 
$\cA$ is not allowed to query the oracle on $m^*$: 
\begin{description}
\item[$\cO_{\Sign}$:]
Its initial state is set to be $(\pp,\sigk_0)$.   
When a message $m$ is queried, it proceeds as follows:
\begin{itemize}
\item Parse its state as $(\pp,\sigk_i)$.
\item Run $(\sigk_{i+1},\sigma)\gets \Sign(\pp,\sigk_i,m)$.
\item Return $\sigma$ to $\cA$ and update its state to $(\pp,\sigk_{i+1})$.
\end{itemize}
\end{description} 
We say that the scheme satisfies one-time EUF-CMA security if \Cref{revocable_sigk_EUF-CMA} holds for any $\cA$ that submits at most one query to the oracle.

\if0
\mor{Can we show the following?}
\begin{align}
\Pr\left[\top\gets\Ver(\vk,m^*,\sigma^*):
\begin{array}{rr}
(\pp,\ck)\gets\setup(1^\secp)\\
(m^*,\sigma^*,\vk)\gets\cA^{\Sign(\sigk,\cdot)}(\pp)
\end{array}
\right]    
\le\negl(\secp),
\end{align}
\fi

\paragraph{Many-time deletion correctness:}
For any polynomial $p=p(\secp)$, and any messages $(m_1,m_2,...,m_{p})$,
\begin{align}
\Pr\left[\top\gets\Cert(\pp,\vk,\ck,\cert,\{m_1,m_2,...,m_p\}):
\begin{array}{rr}
(\pp,\ck)\gets\setup(1^\secp)\\
     (\sigk_0,\vk)\gets\KeyGen(\pp)  \\
     (\sigk_1,\sigma_1)\gets\Sign(\pp,\sigk_0,m_1)\\
     (\sigk_2,\sigma_2)\gets\Sign(\pp,\sigk_1,m_2)\\
     ...\\
     (\sigk_{p},\sigma_{p})\gets\Sign(\pp,\sigk_{p-1},m_{p})\\
     \cert\gets\Del(\sigk_{p})
\end{array}
\right]\ge1-\negl(\secp). 
\end{align} 
We remark that  we require the above to also hold for the case of $p=0$, in which case the fifth component of the input of $\Cert$ is the empty set $\emptyset$. 
We say that the scheme satisfies one-time deletion correctness if the above property is satisfied for $p\le 1$.

\paragraph{Many-time deletion security:}
For any QPT adversary $\cA$, 
\begin{align}
   \Pr\left[
   \begin{array}{ll}
   &~~~\top\gets\Cert(\pp,\vk,\ck,\cert,S)\\
   &\wedge~ m^*\notin S\\
   &\wedge~ \top\gets\Ver(\pp,\vk,m^*,\sigma^*)
   \end{array}
   :
   \begin{array}{rr}
   (\pp,\ck)\gets\setup(1^\secp)\\
   (\vk,\cert,S,m^*,\sigma^*)\gets\cA(\pp)\\
   \end{array}
   \right] 
   \le\negl(\secp).
\end{align}
We say that the scheme satisfies one-time deletion security if the above property is satisfied if we additionally require $|S|\le 1$.
\end{definition}

\begin{remark}\label{rem:definition_revocable_signing_key}
Following the definition of single signer security in \cite{STOC:AGKZ20} or 
copy-protection security in \cite{TCC:LLQZ22}, it is also reasonable to define deletion security as follows:

For any pair $(\cA_1,\cA_2)$ of QPT adversaries and any distribution $\cD$ with super-logarithmic min-entropy  over the message space,  
\begin{align}
   \Pr\left[\top\gets\Cert(\pp,\vk,\ck,\cert,S) \wedge \top\gets\Ver(\pp,\vk,m,\sigma):
   \begin{array}{rr}
   (\pp,\ck)\gets\setup(1^\secp)\\
   (\vk,\cert,S,\st)\gets\cA_1(\pp)\\
   m\gets\cD\\
   \sigma\gets\cA_2(m,\st)
   \end{array}
   \right] 
   \le\negl(\secp).
\end{align} 
It is easy to see that our definition implies the above, but the converse is unlikely. This is  why we define deletion security as in \Cref{def:dswrsk}.
\end{remark}

\subsection{One-Time Construction for Single-Bit Messages}
\label{sec:onetime}

Here we show that we can construct one-time $\mathsf{DSR\mbox{-}Key}$ for single-bit messages 
from 2-OSS in a black-box way.
\begin{theorem}
If two-tier one-shot signatures exist, then digital signatures with revocable signing keys with the message space $\bit$  that satisfy one-time variants of correctness, EUF-CMA security, deletion correctness, and deletion security in \Cref{def:dswrsk}
exist. 
\end{theorem}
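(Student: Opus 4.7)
The plan is to bootstrap a 2-OSS scheme $(\Setup_{\oss},\KeyGen_{\oss},\Sign_{\oss},\Ver_0,\Ver_1)$ into a one-time single-bit $\mathsf{DSR\mbox{-}Key}$ scheme as sketched in the technical overview. The algorithm $\setup(1^\secp)$ invokes $\Setup_{\oss}$ once to obtain $(\pp,\sk)$ and outputs $(\ck,\pp):=(\sk,\pp)$. The generator $\KeyGen(\pp)$ samples two independent 2-OSS key pairs $(\sigk_b,\vk_b)\gets\KeyGen_{\oss}(\pp)$ for $b\in\bit$ and returns the quantum state $(\sigk_0,\sigk_1)$ together with $\vk:=(\vk_0,\vk_1)$. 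Signing a bit $m$ outputs $\sigma\gets\Sign_{\oss}(\sigk_m,0)$, which the $\mathsf{DSR\mbox{-}Key}$ verifier checks by running $\Ver_0(\pp,\vk_m,\sigma)$. Deletion produces $\cert$ by running $\Sign_{\oss}(\cdot,1)$ on each 2-OSS signing key still present in the state, and $\Cert(\pp,\vk,\ck,\cert,S)$ accepts iff, for every $b\notin S$, $\cert$ contains a component $\cert_b$ with $\Ver_1(\pp,\sk,\vk_b,\cert_b)=\top$. Both flavors of one-time correctness then follow directly from 2-OSS correctness.

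For one-time EUF-CMA security, I reduce to 2-OSS by a guess-and-extract argument. Given a forger $\cA$, the reduction $\cB$ receives $\pp$ from its 2-OSS challenger, samples a guess $b\gets\bit$ for $m^*$, runs $\KeyGen_{\oss}$ twice honestly to build the full $\mathsf{DSR\mbox{-}Key}$ key pair, and hands $\vk$ to $\cA$. If $\cA$'s single signing query is on $1-b$, $\cB$ answers it with $\Sign_{\oss}(\sigk_{1-b},0)$; if the query is on $b$, $\cB$ aborts (and if there is no query at all the reduction simply proceeds). When $\cA$ returns $(m^*,\sigma^*)$ with $m^*=b$, $\cB$ runs $\Sign_{\oss}(\sigk_b,1)$ on the still-untouched $\sigk_b$ to produce a Hadamard-basis signature $\sigma_1$, and submits $(\vk_b,\sigma^*,\sigma_1)$ to its challenger. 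Conditioned on the guess being correct, which happens with probability $\tfrac12$, $\cB$ wins whenever $\cA$ does.

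One-time deletion security reduces more directly. The reduction forwards $\pp$ to $\cA$; given $\cA$'s output $(\vk,\cert,S,m^*,\sigma^*)$ with $|S|\le 1$ and $m^*\notin S$, it reads off a valid $\Ver_1$-signature $\cert_{m^*}$ under $\vk_{m^*}$ from $\cert$---this component is present by the definition of $\Cert$ in both the $S=\emptyset$ and $S=\{1-m^*\}$ cases---and submits $(\vk_{m^*},\sigma^*,\cert_{m^*})$. The 2-OSS challenger itself performs the $\Ver_1$ check using $\sk$, so the reduction never needs to know $\sk$. The main subtlety I anticipate is in the EUF-CMA argument: 2-OSS security forbids jointly producing a ``$0$'' and a ``$1$'' signature under the same $\vk$ but does not per se rule out a lone ``$0$''-forgery, and the trick above is that $\cB$ manufactures the missing ``$1$''-signature itself by keeping $\sigk_b$ unmeasured until $\cA$'s forgery is committed---which is only possible because the CMA oracle never touches $\sigk_b$ when the guess is correct.
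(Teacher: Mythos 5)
Your construction and all three reductions coincide with the paper's own proof: the same pairing of two 2-OSS keys indexed by the message bit, signing ``0'' to sign and ``1'' to delete, and the same extraction of a forbidden $(\Ver_0,\Ver_1)$-pair under $\vk_{m^*}$ for both EUF-CMA and deletion security. The only cosmetic difference is that you guess $m^*$ up front and pay a factor $\tfrac12$, whereas the paper argues without loss of generality that the single query is on $1-m^*$; both are fine.
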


\begin{proof}
Let $(\mathsf{OS}.\setup,\mathsf{OS}.\KeyGen,\mathsf{OS}.\Sign,\mathsf{OS}.\Ver_0,\mathsf{OS}.\Ver_1)$ be a two-tier one-shot signature scheme.
From it, we construct a one-time digital signature scheme $\Sigma\coloneqq(\setup,\KeyGen,\Sign,\Ver,\Del,\Cert)$ with revocable signing keys  as follows.
\begin{itemize}
\item 
$\setup(1^\secp)\to(\ck,\pp):$
Run $(\pp',\sk)\gets\mathsf{OS}.\setup(1^\secp)$.
Output $\ck\coloneqq\sk$ and $\pp\coloneqq \pp'$.
    \item 
    $\KeyGen(\pp)\to(\sigk,\vk):$
    Run $(\sigk_0,\vk_0)\gets\mathsf{OS}.\KeyGen(\pp)$.
    Run $(\sigk_1,\vk_1)\gets\mathsf{OS}.\KeyGen(\pp)$.
    Output $\sigk\coloneqq (\sigk_0,\sigk_1)$
    and $\vk\coloneqq (\vk_0,\vk_1)$.
    \item 
    $\Sign(\pp,\sigk,m)\to(\sigk',\sigma):$
    Parse $\sigk=(\sigk_0,\sigk_1)$. Run $\sigma \gets\mathsf{OS}.\Sign(\pp,\sigk_{m},0)$ and let $\sigk'\coloneqq(1-m,\sigk_{1-m})$. Output $(\sigk',\sigma)$.
    \item 
    $\Ver(\pp,\vk,m,\sigma)\to\top/\bot:$
    Parse $\vk=(\vk_0,\vk_1)$.
    Run $\mathsf{OS}.\Ver_0(\pp,\vk_m,\sigma)$,
    and output its output.
    \item 
    $\Del(\sigk)\to\cert:$ 
    \begin{itemize}
    \item If $\sigk$ is of the form $(\sigk_0,\sigk_1)$ (i.e., no signature has been generated),  generate $\sigma'_0 \gets\mathsf{OS}.\Sign(\sigk_0,1)$ 
    and 
    $\sigma'_1 \gets\mathsf{OS}.\Sign(\sigk_1,1)$ 
    and output $\cert = (0,1,\sigma'_0,\sigma'_1)$.
    \item Otherwise, parse $\sigk = (1-m, \sigk_{1-m})$, generate $\sigma' \gets\mathsf{OS}.\Sign(\sigk_{1-m},1)$ and output $\cert = (1-m,\sigma')$.
    \end{itemize}
    %\alex{Why does the del/cert procedure take as input $m$? If no message was signed, the signing key can be revoked by asking the receiver to sign bit $1$ using both signing keys?}
    %\takashi{I agree that here is some confusion in the syntax.
    %I believe we have to consider the following two cases. The first is the case where the signer hasn't generated any signature and thus $\sigk=(\sigk_0,\sigk_1)$. In this case, it signs bit $1$ using $\sigk_0$. (It's also fine to sign using both keys, but I think that's redundant.) 
    %In the second case, the signer has signed on $m$ and thus $\sigk=(1-m,\sigk_{1-m})$. In this case, it signs bit $1$ using $\sigk_{1-m}$. Also, I think we should include $1-m$ (which is set to be $0$ in the former case) in $\cert$.}

    \item 
    $\Cert(\pp,\vk,\ck,\cert,S)\to\top/\bot:$
    %Parse $\cert = (b,\sigma')$. 
    Parse $\vk=(\vk_0,\vk_1)$ and $\ck=\sk$.
    \begin{itemize}
    \item If $S=\emptyset$, parse $\cert=(0,1,\sigma'_0,\sigma'_1)$. (Output $\bot$ if $\cert$ is not of this form.)  
    Output $\top$ if $\mathsf{OS}.\Ver_1(\pp,\sk,\vk_b,\sigma'_b)=\top$ for both $b=0,1$. 
    \item If $S=\{m\}$ for some $m\in \bit$, 
    parse $\cert=(1-m,\sigma')$. (Output $\bot$ if $\cert$ is not of this form.)
    Output $\top$ if $\mathsf{OS}.\Ver_1(\pp,\sk,\vk_{1-m},\sigma')=\top$
    \item If $S=\{0,1\}$, output $\bot$. %\takashi{The behavior in this case can be arbitrary.} 
    \end{itemize}
\end{itemize}

Let us now verify that all of the one-time properties are satisfied. 

\paragraph{One-time correctness.}

Let $m \in \bit$ be an arbitrary message. Then, the correctness of $\mathsf{OS}.\Ver_0$ immediately implies that the following one-time correctness holds, i.e.,
\begin{align}
\Pr\left[\top \leftarrow\Ver(\pp,\vk,m,\sigma):
\begin{array}{rr}
(\pp,\ck)\gets\setup(1^\secp)\\
     (\sigk,\vk)\gets\KeyGen(\pp)  \\
     (\sigk',\sigma)\gets\Sign(\pp,\sigk,m)
\end{array}
\right]\ge1-\negl(\secp). 
\end{align}

\paragraph{One-time EUF-CMA security.} Suppose that there exists a QPT adversary $\cA$ which violates one-time EUF-CMA security. In other words, there is a polynomial $q(\lambda)$ such that for infinitely many $\lambda \in \N$,
\begin{align}\label{eq:one-time-EUF-CMA-TTOS}
\Pr\left[\top\gets\Ver(\pp,\vk,m^*,\sigma^*):
\begin{array}{rr}
(\pp,\ck)\gets\setup(1^\secp)\\
(\sigk,\vk)\gets\KeyGen(\pp)\\
(m^*,\sigma^*)\gets\cA^{\cO_{\Sign}}(\vk)
\end{array}
\right]    
\ge \frac{1}{q(\secp)}.
\end{align}
where $\cA$ only makes at most single query  to the stateful $\cO_{\Sign}$ in \Cref{def:dswrsk} which does not equal $m^*$. Without loss of generality, we assume that $\cA$ queries on $1-m^*$ (otherwise, if $\cA$ does not submit any query, we can simply consider $\cA'$ which first runs $\cA$ and then subsequently queries on $1-m^*$). Then, we can construct the following QPT algorithm $\cB$ which breaks the security of the two-tier one-shot signature scheme:
\begin{enumerate}
    \item On input $\pp$, $\cB$ generates $(\sigk,\vk)\gets \KeyGen(\pp)$ and parses the signing key as $\sigk = (\sigk_0,\sigk_1)$
    and the verification key as $\vk= (\vk_0,\vk_1)$.

    \item $\cB$ runs the adversary $\cA$ on input $\vk$; when $\cA$ submits a query, say of the form $1-m^*$, $\cB$ simulates the oracle $\cO_{\Sign}$ by running $(\sigk',\sigma') \gets \Sign(\pp,\sigk,1-m^*)$ and sends back $\sigma'$ to $\cA$. Let $(m^*,\sigma^*)$ denote the final output returned by $\cA^{\cO_{\Sign}}(\vk)$ and assign $\sigma_0 = \sigma^*$.

    \item $\cB$ parses $\sigk' = (m^*,\sigk_{m^*})$ and computes $\sigma_1 \gets \mathsf{OT}.\Sign(\pp,\sigk_{m^*},1)$.
    \item $\cB$ outputs the triplet $(\vk_{m^*},\sigma_0,\sigma_1)$. 
\end{enumerate}
We now argue that $\cB$ breaks the security of the two-tier one-shot signature scheme. Using \Cref{eq:one-time-EUF-CMA-TTOS}, it follows that $\cB$ obtains two valid signatures $\sigma_0,\sigma_1$ (of which one corresponds to $0$ and the other corresponds to $1$) with probability at least
\begin{align}
\Pr\left[
\top\gets\mathsf{OS}.\Ver_0(\pp,\vk,\sigma_0)
\wedge
\top\gets \mathsf{OS}.\Ver_1(\pp,\sk,\vk,\sigma_1)
:
\begin{array}{rr}
(\sk,\pp)\gets \mathsf{OS}.\Setup(1^\secp)\\
(\vk,\sigma_0,\sigma_1)\gets\cB(\pp)
\end{array}
\right]\ge \frac{1}{\poly(\secp)}.
\end{align}
This violates the security of the two-tier one-shot signature according to \Cref{def:two-tier-one-shot-signature}.

\paragraph{One-time deletion correctness.}
Here, we consider two cases. First, when no signature is produced, $\sigk$ is of the form $(\sigk_0,\sigk_1)$ and $\cert = (0,1,\sigma'_0,\sigma'_1)$. Thus, both 
$\sigma'_0 \gets\mathsf{OS}.\Sign(\sigk_0,1)$ 
    and 
    $\sigma'_1 \gets\mathsf{OS}.\Sign(\sigk_1,1)$ pass verification via $\mathsf{OS}.\Ver_1(\pp,\sk,\vk_0,\cdot)$ and $\mathsf{OS}.\Ver_1(\pp,\sk,\vk_1,\cdot)$ with overwhelming probability.
Therefore, we get the following from the union bound and the correctness of $\mathsf{OS}.\Ver_1$ that
\begin{align}
\Pr\left[\top\gets\Cert(\pp,\vk,\ck,\cert,\emptyset):
\begin{array}{rr}
(\pp,\ck)\gets\setup(1^\secp)\\
     (\sigk,\vk)\gets\KeyGen(\pp) \\
     \cert\gets\Del(\sigk)
\end{array}
\right]\ge1-\negl(\secp). 
\end{align} 
Next, we analyze the case when precisely one signature is generated on some message $m \in \bit$. In this case, $\sigk'$ is of the form
$(1-m, \sigk_{1-m})$ and $\cert=(1-m,\sigma')$, and hence $\sigma' \gets\mathsf{OS}.\Sign(\sigk_{1-m},1)$ passes verification via $\mathsf{OS}.\Ver_1(\pp,\sk,\vk_{1-m},\cdot)$ with overwhelming probability. This implies that
\begin{align}
\Pr\left[\top\gets\Cert(\pp,\vk,\ck,\cert,\{m\}):
\begin{array}{rr}
(\pp,\ck)\gets\setup(1^\secp)\\
     (\sigk,\vk)\gets\KeyGen(\pp)  \\
     (\sigk',\sigma)\gets\Sign(\pp,\sigk,m)\\
     \cert\gets\Del(\sigk')
\end{array}
\right]\ge1-\negl(\secp). 
\end{align} 

\paragraph{One-time deletion security:}
Suppose that there exists a QPT adversary $\cA$ which violates one-time deletion security. In other words, there exists a polynomial $q(\lambda)$ such that for infinitely many $\lambda \in \N$,
it holds that
\begin{align}
   \Pr\left[
   \begin{array}{ll}
   &~~~\top\gets\Cert(\pp,\vk,\ck,\cert,S)\\
   &\wedge~ m^*\notin S\\
   &\wedge~ \top\gets\Ver(\pp,\vk,m^*,\sigma^*)\\
   &\wedge~|S|\le 1
   \end{array}
   :
   \begin{array}{rr}
   (\pp,\ck)\gets\setup(1^\secp)\\
   (\vk,\cert,S,m^*,\sigma^*)\gets\cA(\pp)\\
   \end{array}
   \right] 
\ge \frac{1}{q(\secp)}.
\end{align}
Then, we can construct the following QPT algorithm $\cB$ which breaks the security of the underlying two-tier one-shot signature scheme according to \Cref{def:two-tier-one-shot-signature}.
\begin{enumerate}
    \item On input $\pp$, $\cB$ runs $\cA(\pp)$ and receives a tuple $(\vk,\cert,S,m^*,\sigma^*)$ for some set $S$ of size $|S| \leq 1$ and some verification key $\vk = (\vk_0,\vk_1)$.
    \item Depending on the size of $S$, $\cB$ then does the following:
    \begin{itemize}
        \item If $|S|=0$, i.e., $S=\emptyset$, $\cB$ parses $\cert = (0,1,\sigma_0',\sigma_1')$ and outputs $(\vk_{m^*},\sigma^*,\sigma_{m^*}')$.

        \item If $|S|=1$, i.e., $S=\{m\}$ for $m \neq m^*$, $\cB$ parses $\cert = (1-m,\sigma')$ and outputs $(\vk_{m^*},\sigma^*,\sigma')$.
    \end{itemize}
\end{enumerate}
Note that in either case, whether $|S|=0$ or $|S|=1$, it follows that $\cB$ succeeds with probability at least
\begin{align*}
\Pr\left[
\top\gets\mathsf{OS}.\Ver_0(\pp,\vk,\sigma_0)
\wedge
\top\gets \mathsf{OS}.\Ver_1(\pp,\sk,\vk,\sigma_1)
:
\begin{array}{rr}
(\sk,\pp)\gets \mathsf{OS}.\Setup(1^\secp)\\
(\vk,\sigma_0,\sigma_1)\gets\cB(\pp)
\end{array}
\right]\ge \frac{1}{\poly(\secp)},
\end{align*}
which violates the security of the two-tier one-shot signature according to \Cref{def:two-tier-one-shot-signature}.
\end{proof}

\subsection{From Single-Bit to Multi-Bit Messages}
Here, we show that we can expand the message space to $\bit^*$ using collision-resistant hashes.
\begin{theorem}
If  
collision-resistant hash functions and 
digital signatures with revocable signing keys with the message space $\bit$  that satisfy one-time variants of correctness, EUF-CMA security, deletion correctness, and deletion security in \Cref{def:dswrsk}
exist, then a similar scheme with the message space $\bit^*$ exists. 
\end{theorem}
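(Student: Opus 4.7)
The plan is to define the multi-bit scheme $\Sigma$ as $\ell = \ell(\secp)$ parallel independent copies of the single-bit one-time scheme $\Sigma^{(1)}$, one per output bit of a collision-resistant hash $H\colon \bit^* \to \bit^\ell$ sampled during $\setup$. To sign a message $m$, compute $h = H(m)$ and invoke the $i$-th copy on the bit $h_i$; the signature is the tuple $(\sigma_1, \ldots, \sigma_\ell)$, and $\Ver$ recomputes $H(m)$ and runs the single-bit verifier coordinate-wise. Deletion is performed coordinate-wise as well, and $\Cert$, on input a set $S$ with $|S|\le 1$, runs the single-bit $\Cert$ at each coordinate $i$ with input set $\emptyset$ if $S=\emptyset$ or $\{H(m)_i\}$ if $S=\{m\}$, and outputs $\bot$ whenever $|S|\ge 2$.

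One-time correctness and one-time deletion correctness then follow from the corresponding single-bit properties via a union bound over the $\ell$ coordinates. For one-time EUF-CMA security, suppose an adversary $\cA$ forges on $m^* \neq m$ after at most one signing query $m$. Collision resistance implies $H(m^*) \neq H(m)$ up to negligible probability, so some coordinate $i^*$ satisfies $H(m^*)_{i^*} \neq H(m)_{i^*}$. The reduction guesses $i^* \in [\ell]$ at the outset (losing a $1/\ell$ factor), embeds the single-bit challenge verification key at coordinate $i^*$ while generating all other coordinates locally, simulates the signing oracle by signing coordinates $j \neq i^*$ itself and forwarding $H(m)_{i^*}$ to its single-bit signing oracle, and outputs $(H(m^*)_{i^*}, \sigma^*_{i^*})$ as a single-bit forgery on the complementary bit.

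The same template handles one-time deletion security, where $\cA$ outputs $(\vk, \cert, S, m^*, \sigma^*)$ with $|S|\le 1$ and $m^* \notin S$. If $S = \emptyset$, the reduction picks any coordinate $i^*$ and forwards $(\vk^{i^*}, \cert_{i^*}, \emptyset, H(m^*)_{i^*}, \sigma^*_{i^*})$ to the single-bit deletion-security game. If $S = \{m\}$, collision resistance again yields $H(m^*) \neq H(m)$ with overwhelming probability; the reduction guesses a differing coordinate $i^*$, simulates signing at coordinate $i^*$ via its single-bit signing oracle on $H(m)_{i^*}$, and submits $(\vk^{i^*}, \cert_{i^*}, \{H(m)_{i^*}\}, H(m^*)_{i^*}, \sigma^*_{i^*})$.

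The main subtlety to check is that, in every reduction, the embedded single-bit instance is driven exactly as in its own one-time game: for the $S=\{m\}$ case this means the simulator's single-bit signing query is followed by honest deletion of the remaining half of the key, which is ensured by routing the sign-then-delete interactions at coordinate $i^*$ through the oracle and by reusing $\cert_{i^*}$ from $\cA$'s output as the single-bit certificate. All losses are polynomial in $\ell$, so collision resistance of $H$ together with the one-time security properties of $\Sigma^{(1)}$ imply the corresponding properties of $\Sigma$.
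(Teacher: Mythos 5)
Your construction and case analysis coincide with the paper's: hash the message with a collision-resistant $H$ to $\ell$ bits, run $\ell$ independent copies of the single-bit scheme coordinate-wise, and for deletion security split into the case where the adversary's $m^*$ collides under $H$ with some $m\in S$ (reduce to collision resistance) and the case where it does not (project to a coordinate where the hashes differ and reduce to single-bit deletion security). The correctness, deletion correctness, and EUF-CMA arguments are also the ones the paper has in mind.

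One remark on the deletion-security reduction: in \Cref{def:dswrsk} the deletion-security adversary receives only $\pp$ and outputs the entire tuple $(\vk,\cert,S,m^*,\sigma^*)$ itself --- there is no challenger-generated key and no signing oracle in that game. So the machinery you describe of ``embedding the single-bit challenge verification key at coordinate $i^*$'' and ``simulating signing at coordinate $i^*$ via the single-bit signing oracle'' has nothing to attach to; the reduction is simply: receive $\Sigma.\pp$, sample $H$, run $\cA(\pp)$, and forward $(\vk_{j},\cert_{j},\emptyset,H(m^*)_{j},\sigma^*_{j})$ for an arbitrary $j$ when $S=\emptyset$, or $(\vk_{j},\cert_{j},\{H(m)_{j}\},H(m^*)_{j},\sigma^*_{j})$ for a coordinate $j$ where $H(m)$ and $H(m^*)$ differ when $S=\{m\}$. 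Your final submitted tuple is exactly this, so the argument goes through; the ``main subtlety'' you flag about driving the embedded instance honestly simply does not arise. (Your oracle-embedding template is the right one for the one-time EUF-CMA reduction, where the game does have a challenger key and a signing oracle; the paper defers that part to the standard hash-and-sign argument.)
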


\begin{proof}
Let $\Sigma=(\setup,\KeyGen,\Sign,\Ver,\Del,\Cert)$ be a scheme with the message space $\bit$ which satisfies the one-time variants of correctness, EUF-CMA security, deletion correctness, and deletion security.
Let $\mathcal{H}$ be a family of collision resistant hash functions from $\bit^*$ to $\bit^\ell$. 
Then we construct $\Sigma'=(\setup',\KeyGen',\Sign',\Ver',\Del',\Cert')$
with the message space $\bit^*$ 
as follows:
\begin{itemize}
    \item $\setup'(1^\secp) \rightarrow (\ck,\pp)$:
    Choose $H\gets \mathcal{H}$. 
    Run $(\Sigma.\ck,\Sigma.\pp)\gets \setup(1^\secp)$ and output $\ck:=\Sigma.\ck$ and $\pp:=(\Sigma.\pp,H)$. 
% \takashi{I was wondering if we should prepare this $\ell$ times for each position. But I believe we can reuse it for all positions.} 
    \item $\KeyGen'(\pp) \rightarrow (\sigk,\vk)$: 
    Parse $\pp=(\Sigma.\pp,H)$. 
    For $i\in[\ell]$, 
    run 
    $(\sigk_i,\vk_i) \leftarrow \KeyGen(\Sigma.\pp)$ and output $\sigk:=\{\sigk_i\}_{i\in [\ell]}$ as the quantum signing key and $\vk:=\{\vk_i\}_{i\in [\ell]}$ as the classical verification key.

 \item 
    $\Sign'(\pp,\sigk,m)\to(\sigk',\sigma):$
  Parse
  $\pp=(\Sigma.\pp,H)$
  and 
  $\sigk=\{\sigk_i\}_{i\in [\ell]}$
  Compute $m'=H(m)$ and let $m'_i$ be the $i$th bit of $m'$ for $i\in[\ell]$. 
  For $i\in[\ell]$, 
    run 
    $(\sigk'_i,\sigma_i) \leftarrow \Sign(\Sigma.\pp,\sigk_i,m'_i)$ and output $\sigk':=\{\sigk'_i\}_{i\in [\ell]}$ and $\sigma:=\{\sigma_i\}_{i\in [\ell]}$.  
    \item $\Ver'(\pp,\vk,m,\sigma)\to\top/\bot:$ Parse 
    $\pp=(\Sigma.\pp,H)$, $\vk:=\{\vk_i\}_{i\in [\ell]}$, and $\sigma=\{\sigma_i\}_{i\in [\ell]}$. 
    Compute $m'=H(m)$ and let $m'_i$ be the $i$th bit of $m'$ for $i\in[\ell]$.  
    Output $\top$ of $\Ver(\Sigma.\pp,\vk_i,m'_i,\sigma_i)=\top$ for every $i\in [\ell]$. 
    \item 
    $\Del'(\sigk)\to\cert:$ 
Parse $\sigk=\{\sigk_i\}_{i\in [\ell]}$. %\takashi{Note that this $\sigk$ is not necessarily the fresh one, and may be the one generated by $\Sign'$. But I don't have a good notation to clarify this. I hope the current writing is understandable. }
For $i\in[\ell]$, run $\cert_i\gets \Del(\sigk_i)$ and output $\cert:=\{\cert_i\}_{i\in [\ell]}$.  
    \item 
    $\Cert'(\pp,\vk,\ck,\cert,S)\to\top/\bot:$
    Parse
  $\pp=(\Sigma.\pp,H)$,  
  $\vk=\{\vk_i\}_{i\in [\ell]}$, 
   and 
    $\cert=\{\cert_i\}_{i\in [\ell]}$.
    \begin{itemize}
    \item If $S=\emptyset$, output $\top$ if $\Cert(\Sigma.\pp,\vk_i,\ck,\cert_i,\emptyset)=\top$ for every $i\in [\ell]$. 
    \item If $S=\{m\}$ for some $m\in \bit^*$, compute $m'=H(m)$ and let $m'_i$ be the $i$th bit of $m'$ for $i\in[\ell]$. Output $\top$ if  
 $\Cert(\Sigma.\pp,\vk_i,\ck,\cert_i,\{m'_i\})=\top$ for every $i\in [\ell]$. 
    \item Otherwise (i.e., if $|S|\geq 2$), output $\bot$. %\takashi{Again, the behavior of this case is arbitrary since this is irrelevant to either of one-time correctness or security.}
    \end{itemize}
\end{itemize}
The proof of correctness is immediate and the proof of one-time EUF-CMA security follows from standard techniques which allow conventional signature schemes to handle messages of arbitrarily length, see~\cite{books/crc/KatzLindell2007} for example. 
Therefore, it suffices to show that the scheme $\Sigma'$ satisfies the one-time variants of deletion correctness and deletion security. Let us now verify each property separately.

\paragraph{One-time deletion correctness.} Because $\Sigma$ satisfies one-time correctness, this means that there exists a negligible function $\nu$ such that for any $p \le 1$ and any message $m_p$ (possibly $m_p = \emptyset$, if $p=0$, in which case no message is signed in the expression below) and for all $\lambda \in \N$,
\begin{align}\label{eq:ot-deletion-correctness}
\Pr\left[\top\gets\Sigma.\Cert(\pp,\vk,\ck,\cert,\{m_p\}):
\begin{array}{rr}
(\pp,\ck)\gets\Sigma.\setup(1^\secp)\\
     (\sigk_0,\vk)\gets\Sigma.\KeyGen(\pp)  \\
     (\sigk_p,\sigma_p)\gets\Sigma.\Sign(\pp,\sigk_0,m_p)\\
     \cert\gets \Sigma.\Del(\sigk_{p})
\end{array}
\right]\ge1-\nu(\secp). 
\end{align} 
Hence, by the union bound and the fact that $\ell\cdot \nu(\lambda) = \negl(\lambda)$ with $\ell = \ell(\lambda)=\poly(\lambda)$, it follows that $\Sigma'$ also satisfies one-time deletion correctness.

\paragraph{One-time deletion security.} Suppose that $\Sigma'$ does not satisfy one-time deletion security. In other words, there exists a QPT adversary $\cA$ and a polynomial $p(\lambda)$ such that for infinitely many $\lambda \in \N$, it holds that: 
\begin{align}\label{adv-against-OT-del-security}
   \Pr\left[
   \begin{array}{ll}
   &~~~\top\gets \Sigma'.\Cert(\pp,\vk,\ck,\cert,S)\\
   &\wedge~ m^*\notin S\\
   &\wedge~ \top\gets \Sigma'.\Ver(\pp,\vk,m^*,\sigma^*)\\
   &\wedge |S|\leq 1
   \end{array}
   :
   \begin{array}{rr}
   (\pp,\ck)\gets \Sigma'.\setup(1^\secp)\\
   (\vk,\cert,S,m^*,\sigma^*)\gets\cA(\pp)\\
   \end{array}
   \right] 
   \ge \frac{1}{p(\secp)}.
\end{align}
In the following, we parse $\pp \gets \Sigma'.\setup(1^\secp)$ as $\pp=(\Sigma.\pp,H)$ and distinguish between two cases:

\begin{itemize}
    \item[\textbf{Case 1:}] A collision occurs; namely, when $\cA(\pp)$ outputs $(\vk,\cert,S,m^*,\sigma^*)$, there exists a message $m \in S$ such that $H(m) = H(m^*)$.

    \item[\textbf{Case 2:}] No collision occurs; in other words, when $\cA(\pp)$ outputs $(\vk,\cert,S,m^*,\sigma^*)$, there does not exist a message $m \in S$ such that $H(m) = H(m^*)$.
\end{itemize}
Note that \Cref{adv-against-OT-del-security} implies that either \textbf{Case 1} or \textbf{Case 2} occurs with probability at least $1/p(\lambda)$.

Suppose that \textbf{Case 1} occurs, i.e., it holds that
\begin{align}\label{adv-against-OT-del-security-Case-I}
   \Pr\left[
   \begin{array}{ll}
   &~~~\top\gets \Sigma'.\Cert(\pp,\vk,\ck,\cert,S)\\
   &\wedge~ m^*\notin S\\
   &\wedge~ \top\gets \Sigma'.\Ver(\pp,\vk,m^*,\sigma^*)\\
   &\wedge |S|\leq 1\\
   &\wedge (\exists m \in S \text{ s.t. } H(m)=H(m^*))
   \end{array}
   :
   \begin{array}{rr}
   (\pp,\ck)\gets \Sigma'.\setup(1^\secp)\\
   (\Sigma.\pp,H) = \pp \\(\vk,\cert,S,m^*,\sigma^*)\gets\cA(\pp)\\
   \end{array}
   \right] 
   \ge \frac{1}{p(\secp)}.
\end{align}
We can then construct the following QPT algorithm which breaks the collision-resistance of $\mathcal{H}$:
\begin{enumerate}
    \item On input $(H,1^\lambda)$ run $(\Sigma.\ck,\Sigma.\pp)\gets \setup(1^\secp)$ and output $\ck:=\Sigma.\ck$ and $\pp:=(\Sigma.\pp,H)$. 

    \item Run $\cA(\pp)$ who outputs $(\vk,\cert,S,m^*,\sigma^*)$.

    \item If there exists an $m \in S$ such that $H(m)=H(m^*)$, output $(m,m^*)$. Otherwise, output $\bot$.
\end{enumerate}
Hence, by our assumption in \Cref{adv-against-OT-del-security-Case-I}, we obtain a collision with probability at least $1/p(\secp)$, thereby breaking the collision-resistance of $\mathcal{H}$.

Suppose that \textbf{Case 2} occurs, i.e., it holds that
\begin{align}\label{adv-against-OT-del-security-Case-II}
   \Pr\left[
   \begin{array}{ll}
   &~~~\top\gets \Sigma'.\Cert(\pp,\vk,\ck,\cert,S)\\
   &\wedge~ m^*\notin S\\
   &\wedge~ \top\gets \Sigma'.\Ver(\pp,\vk,m^*,\sigma^*)\\
   &\wedge |S|\leq 1\\
   &\wedge (\nexists m \in S \text{ s.t. } H(m)=H(m^*))
   \end{array}
   :
   \begin{array}{rr}
   (\pp,\ck)\gets \Sigma'.\setup(1^\secp)\\
   (\Sigma.\pp,H) = \pp \\(\vk,\cert,S,m^*,\sigma^*)\gets\cA(\pp)\\
   \end{array}
   \right] 
   \ge \frac{1}{p(\secp)}.
\end{align}
We can then construct the following QPT algorithm which breaks the one-time deletion security of $\Sigma$.
\begin{enumerate}
    \item On input $\Sigma.\pp$, where $(\Sigma.\pp,\Sigma.\ck)\gets \Sigma.\setup(1^\secp)$, sample $H\gets \mathcal{H}$. 
    and let $\pp:=(\Sigma.\pp,H)$. 

    \item Run $(\vk,\cert,S,m^*,\sigma^*)\gets\cA(\pp)$.

    \item Depending on the size of $S$, do the following:
    \begin{itemize}
        \item If $|S|=0$, i.e., $S=\emptyset$, parse $\vk=\{\vk_i\}_{i\in [\ell]}$, $\cert=\{\cert_i\}_{i\in [\ell]}$, and $\sigma^*=\{\sigma_i^*\}_{i\in [\ell]}$. Output $(\vk_j,\cert_j,\emptyset,\bar{m}^*_j,\sigma_j^*)$, where $\bar{m}^*=H(m^*)$ and $j \in [\ell]$ is a random index.

        \item If $|S|=1$, i.e., $S=\{m\}$ for $m \neq m^*$, parse $\vk=\{\vk_i\}_{i\in [\ell]}$, $\cert=\{\cert_i\}_{i\in [\ell]}$, and $\sigma^*=\{\sigma_i^*\}_{i\in [\ell]}$. Output $(\vk_j,\cert_j,\{\bar{m}_j\},\bar{m}^*_j,\sigma_j^*)$, where $\bar{m}=H(m)$ and $\bar{m}^*=H(m^*)$ and where $j$ is an index such that the $j$-th bits of $\bar{m}$ and $\bar{m}^*$ are distinct.
    \end{itemize}
\end{enumerate}
Note that in either case, whether $|S|=0$ or $|S|=1$, our reduction breaks the one-time deletion security of $\Sigma$ with inverse-polynomial success probability; in particular, if $|S|=1$ occurs, $\bar{m}$ and $\bar{m}^*$ will be distinct (and hence differ in at least one bit) since we are in \textbf{Case 2}. This proves the claim.
\end{proof}

\subsection{From One-Time Schemes to Many-Time Schemes}

In this section, we show how to extend any one-time scheme into a proper many-time scheme as in \Cref{def:dswrsk}. The transformation is inspired by the chain-based approach for constructing many-time digital signatures, see~\cite{books/crc/KatzLindell2007} for example.\footnote{We could also use the tree-based construction~\cite{C:Merkle87}, which has a shorter (logarithmic) signature length.  We describe the chain-based construction here for ease of presentation. 
}

Let $\mathsf{OT}=(\mathsf{OT}.\setup,\mathsf{OT}.\KeyGen,\mathsf{OT}.\Sign,\mathsf{OT}.\Ver,\mathsf{OT}.\Del,\mathsf{OT}.\Cert)$ be a scheme which satisfies the one-time variants of correctness, EUF-CMA security, deletion correctness, and deletion security according to in \Cref{def:dswrsk}, and has the message space $\bit^*$.  
Then, we construct $\mathsf{MT}=(\mathsf{MT}.\setup,\mathsf{MT}.\KeyGen,\mathsf{MT}.\Sign,\mathsf{MT}.\Ver,\mathsf{MT}.\Del,\mathsf{MT}.\Cert)$
with the message space $\bit^n$ 
as follows:
\begin{itemize}
    \item $\mathsf{MT}.\setup(1^\secp) \rightarrow (\ck,\pp)$: This is the same as $\mathsf{OT}.\setup$.
    \item $\mathsf{MT}.\KeyGen(\pp) \rightarrow (\sigk,\vk)$: run $(\sfot.\sigk_0,\sfot.\vk_0) \leftarrow \mathsf{OT}.\KeyGen(\pp)$ and output $\sigk:=\sfot.\sigk_0$ as the quantum signing key and $\vk:=\sfot.\vk_0$ as the classical verification key.

 \item 
    $\mathsf{MT}.\Sign(\pp,\sigk_{i},m)\to(\sigk_{i+1},\sigma):$
    on input the public parameter $\pp$, a quantum signing key $\sigk_{i}$, and a message $m \in \bit^n$ proceed as follows:
    \begin{enumerate}
    \item Parse $\sigk_{i}$ as $(
    \sfot.\sigk_{i}, \{\sfot.\sigk'_{j}\}_{j\in \{0,1,...,i-1\}}, 
     \{m_j,\sfot.\vk_{j},\sfot.\sigma_j\}_{j\in [i]})$
    \item Generate  $(\sfot.\sigk_{i+1},\sfot.\vk_{i+1}) \leftarrow \mathsf{OT}.\KeyGen(\pp)$. 
    \item Run
    $$(\sfot.\sigk'_{i},\sfot.\sigma_{i+1}) \leftarrow \mathsf{OT}.\Sign(\pp,\sfot.\sigk_{i}, m\concat \sfot.\vk_{i+1}).$$
    \item 
    Set $m_{i+1}:=m$ and 
    output a subsequent signing key 
    $$\sigk_{i+1}:=(
    \sfot.\sigk_{i+1},  \{\sfot.\sigk'_{j}\}_{j\in \{0,1,...,i\}},
    \{m_j,\sfot.\vk_{j},\sfot.\sigma_j\}_{j\in [i+1]})$$
    and a signature 
    $$\sigma:= \{m_j,\sfot.\vk_{j},\sfot.\sigma_j\}_{j\in [i+1]}.$$
    \end{enumerate}

    \item $\mathsf{MT}.\Ver(\pp,\vk,m,\sigma)\to\top/\bot:$ on input $\pp$, a key $\vk$, a message $m$, and signature $\sigma$, proceed as follows.
    \begin{enumerate}
        \item Parse $\sigma$ as $\{m_j,\sfot.\vk_{j},\sfot.\sigma_j\}_{j\in [i]}$ and let $\sfot.\vk_0=\vk$. 
        \item Output $\top$ if 
        $m=m_{i}$ and
        $\mathsf{OT}.\Ver(\pp,\sfot.\vk_{j-1},m_j\concat \sfot.\vk_j,\sfot.\sigma_j)=\top$ for every $j\in [i]$.
    \end{enumerate}

    \item 
    $\mathsf{MT}.\Del(\sigk_i)\to\cert:$ on input $\sigk$,
    proceed as follows:
    \begin{enumerate}
    \item Parse $\sigk_{i}$ as $(
    \sfot.\sigk_{i}, \{\sfot.\sigk'_{j}\}_{j\in \{0,1,...,i-1\}}, 
     \{m_j,\sfot.\vk_{j},\sfot.\sigma_j\}_{j\in [i]})$.
    \item For $j\in \{0,1,...,i-1\}$, run $\sfot.\cert_j \leftarrow \mathsf{OT}.\Del(\sfot.\sigk'_j)$.
    \item Run $\sfot.\cert_i \leftarrow \mathsf{OT}.\Del(\sfot.\sigk_i)$.
    \item Output $\cert:=\{\sfot.\cert_j,m_j,\sfot.\vk_{j},\sfot.\sigma_j\}_{j\in [i]}$.
    \end{enumerate}
    \item 
    $\mathsf{MT}.\Cert(\pp,\vk,\ck,\cert,S)\to\top/\bot:$ on input $\pp,\vk,\ck$, $\cert$, and $S$,  
    parse the certificate $\cert$ as a tuple $\{\sfot.\cert_j,m_j,\sfot.\vk_{j},\sfot.\sigma_j\}_{j\in [i]}$, let $\sfot.\vk_0 = \vk$, and 
    output $\top$ if the following holds:
    \begin{itemize}
    \item $S=\{m_1,m_2,...,m_{i}\}$,   
    \item $\mathsf{OT}.\Cert(\sfot.\vk_{j-1},\ck,\sfot.\cert_{j-1},\{m_{j}\concat \sfot.\vk_{j}\})=\top$ for every $j\in [i]$, and
    \item $\mathsf{OT}.\Cert(\sfot.\vk_i,\ck,\sfot.\cert_i,\emptyset)=\top$.
    \end{itemize}
\end{itemize}

We now prove the following theorem.

\begin{theorem}
Suppose that $(\mathsf{OT}.\setup,\mathsf{OT}.\KeyGen,\mathsf{OT}.\Sign,\mathsf{OT}.\Ver,\mathsf{OT}.\Del,\mathsf{OT}.\Cert)$ satisfies the one-time variants of correctness, EUF-CMA security, deletion correctness, and deletion security in \Cref{def:dswrsk}. Then, the scheme $(\mathsf{MT}.\setup,\mathsf{MT}.\KeyGen,\mathsf{MT}.\Sign,\mathsf{MT}.\Ver,\mathsf{MT}.\Del,\mathsf{MT}.\Cert)$ satisfies many-time variants of each of the properties.
\end{theorem}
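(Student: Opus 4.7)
The plan is to verify each of the four many-time properties separately. Many-time correctness and many-time deletion correctness are immediate by unrolling the construction and invoking their one-time analogues at each position of the chain: signing the $i$-th message triggers exactly one call to $\mathsf{OT}.\Sign$ on $\sfot.\sigk_{i-1}$, and deletion triggers one call to $\mathsf{OT}.\Del$ on each of the stored one-time keys $\sfot.\sigk'_0, \dots, \sfot.\sigk'_{i-1}, \sfot.\sigk_i$. A union bound over the $\poly(\lambda)$ many invocations yields both correctness properties. The interesting cases are the two security properties, and both follow from a single \emph{chain-divergence} argument in the spirit of the classical chain-based signature analysis (cf.~\cite{books/crc/KatzLindell2007}).

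For many-time EUF-CMA security, assume an adversary making $q$ queries outputs a forgery $(m^*, \sigma^*)$ with $\sigma^* = \{m^*_j, \sfot.\vk^*_j, \sfot.\sigma^*_j\}_{j \in [i^*]}$, $m^* = m^*_{i^*}$, and $m^* \notin \{m_1, \dots, m_q\}$. Let $\{m_j, \sfot.\vk_j, \sfot.\sigma_j\}_{j \in [q]}$ denote the chain produced by the honest signing oracle, and let $j^*$ be the smallest index where $(m^*_j, \sfot.\vk^*_j)$ differs from $(m_j, \sfot.\vk_j)$, or $\min(i^*,q)+1$ if no such index exists. The subcase $i^* \le q < j^*$ is ruled out because it would force $m^* = m_{i^*}$. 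In the remaining two subcases ($j^* \le \min(i^*, q)$, or $j^* = q + 1 \le i^*$), $\sfot.\sigma^*_{j^*}$ is a valid one-time signature under the honestly generated verification key $\sfot.\vk_{j^*-1}$ on the fresh message $m^*_{j^*} \| \sfot.\vk^*_{j^*}$, which was never queried to $\mathsf{OT}.\Sign$ under that key. The reduction guesses $j^* \in [q+1]$, plants the one-time challenge verification key at position $j^*-1$, samples all other one-time keys honestly, and uses at most one call to its one-time signing oracle (at position $j^*$, if $j^* \le q$) to simulate $\mathsf{MT}.\Sign$; this incurs only a $1/\poly(\lambda)$ loss and reduces a many-time EUF-CMA break to a one-time EUF-CMA break.

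An entirely analogous divergence argument handles many-time deletion security. Given an adversarial output $(\vk, \cert, S, m^*, \sigma^*)$, parse $\cert = \{\sfot.\cert_j, m_j, \sfot.\vk_j, \sfot.\sigma_j\}_{j \in [k]}$ (so $S = \{m_1, \dots, m_k\}$ by the definition of $\mathsf{MT}.\Cert$) and $\sigma^* = \{m^*_j, \sfot.\vk^*_j, \sfot.\sigma^*_j\}_{j \in [i^*]}$, and let $j^*$ be the smallest index where the two chains disagree. If $j^* \le \min(i^*, k)$, then $\sfot.\cert_{j^*-1}$ is a valid deletion certificate for $\sfot.\vk_{j^*-1}$ with set $\{m_{j^*} \| \sfot.\vk_{j^*}\}$, while $\sfot.\sigma^*_{j^*}$ is a valid one-time signature on the distinct message $m^*_{j^*} \| \sfot.\vk^*_{j^*}$ under the same key, directly breaking one-time deletion security at position $j^*-1$. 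If $j^* = k + 1 \le i^*$, then $\sfot.\cert_k$ certifies deletion of $\sfot.\vk_k$ with the empty set, while $\sfot.\sigma^*_{k+1}$ is a valid one-time signature under that key -- again a one-time deletion security break. The case $i^* \le k < j^*$ is impossible because it would give $m^* = m_{i^*} \in S$. The reduction guesses $j^*$, plants the one-time challenge key at position $j^*-1$, and simulates everything else honestly.

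The main technical obstacle is the bookkeeping for the reductions, specifically (i) defining the divergence index $j^*$ so that \emph{every} adversarial forgery lands in exactly one of the exhaustively enumerated cases and the one-time violation at position $j^*-1$ is unambiguous, and (ii) confirming that once the one-time challenge key is embedded at the guessed position, the reduction can simulate all earlier and later one-time signings, deletions, and their certificates using only locally sampled keys -- which is the case since every key other than the one at the guessed position is generated by the reduction itself, and the simulation of $\mathsf{MT}.\Sign$ at position $j^*$ (if needed) uses at most the single signing query allowed in the one-time EUF-CMA game, and no oracle access at all in the one-time deletion security reduction.
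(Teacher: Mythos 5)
Your proposal follows essentially the same route as the paper: union bounds for the two correctness properties, deferral to the textbook chain-based argument for EUF-CMA, and a divergence-index case analysis for many-time deletion security whose three cases match the paper's Case 1 ($j^*=k+1\le i^*$, break with $S'=\emptyset$), Case 2 ($j^*\le\min(i^*,k)$, break with $S'=\{m_{j^*}\concat \sfot.\vk_{j^*}\}$), and the impossible Case 3 ($m^*\in S$). The one point you should correct is the closing description of the deletion-security reduction: there is nothing to guess, plant, or simulate in that game. In the deletion-security experiment of \Cref{def:dswrsk} the adversary receives only $\pp$ (and $\mathsf{MT}.\setup=\mathsf{OT}.\setup$, so $\pp$ is already a valid one-time public parameter), has no oracle access, and outputs its \emph{own} verification key $\vk$ along with $\cert$, $S$, $m^*$, $\sigma^*$; consequently the reduction simply runs $\cA(\pp)$, computes the divergence index from the two output chains, and forwards the tuple $(\sfot.\vk_{j^*-1},\cert',S',m^*_{j^*}\concat\sfot.\vk^*_{j^*},\sfot.\sigma^*_{j^*})$ that your case analysis already identifies --- with no guessing loss. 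The ``guess $j^*$ and plant the challenge key'' machinery belongs only to the EUF-CMA reduction, where the challenger honestly generates the verification key and provides a signing oracle; importing it into the deletion-security reduction is incoherent because the adversary there fabricates all keys itself. Stripping that sentence, your extraction argument is exactly the paper's proof.
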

\begin{proof}
Let $(\mathsf{OT}.\setup,\mathsf{OT}.\KeyGen,\mathsf{OT}.\Sign,\mathsf{OT}.\Ver,\mathsf{OT}.\Del,\mathsf{OT}.\Cert)$ be a scheme which satisfies the one-time variants in \Cref{def:dswrsk}. We now verify that each of the many-time properties are satisfied. 

\paragraph{Many-time correctness.}
Because the $\mathsf{OT}$ scheme satisfies one-time correctness, this means that there exists a negligible function $\nu$ such that for any message $m$ and for all $\lambda \in \N$,
\begin{align}\label{eq:ot-correctness}
\Pr\left[\top\gets \mathsf{OT}.\Ver(\vk,m,\sigma_1):
\begin{array}{rr}
(\pp,\ck)\gets \mathsf{OT}.\setup(1^\secp)\\
     (\sigk,\vk)\gets \mathsf{OT}.\KeyGen(\pp)  \\
     (\sigk_1,\sigma_1)\gets \mathsf{OT}.\Sign(\pp,\sigk,m)
\end{array}
\right]\ge1-\nu(\secp).    
\end{align}
Let $p=p(\lambda)$ be any polynomial and let $(m_1,m_2,...,m_{p})$ be any collection of messages. Then, it follows from the union bound and \Cref{eq:ot-correctness} that
\begin{align}\label{eq:many-time-correctness-MT}
&\Pr\left[\bigwedge_{i\in[p]}\top\gets \mathsf{MT}.\Ver(\vk,m_i,\sigma_i):
\begin{array}{rr}
(\pp,\ck)\gets\setup(1^\secp)\\
     (\sigk_0,\vk)\gets \mathsf{MT}.\KeyGen(\pp)  \\
     (\sigk_1,\sigma_1)\gets\mathsf{MT}.\Sign(\pp,\sigk_0,m_1)\\
     (\sigk_2,\sigma_2)\gets\mathsf{MT}.\Sign(\pp,\sigk_1,m_2)\\
     \vdots\\
     (\sigk_{p},\sigma_{p})\gets \mathsf{MT}.\Sign(\pp,\sigk_{p-1},m_{p})
\end{array}
\right] \\ 
&
\ge 1- p(\lambda) \cdot (p(\lambda)+1) \cdot \nu(\lambda)/2.
\end{align}
Finally, because $p(\lambda)=\poly(\lambda)$ and $\nu(\lambda) = \negl(\lambda)$ it follows that the above expression in \Cref{eq:many-time-correctness-MT} is at least $1-\negl(\lambda)$. This proves the claim.

\paragraph{EUF-CMA security.} The proof is virtually identical to the standard proof that the chain-based signature scheme achieves many-time EUF-CMA security, see~\cite{books/crc/KatzLindell2007}, for example. We choose to omit it for brevity.

\paragraph{Many-time deletion correctness.}
Because the $\mathsf{OT}$ scheme satisfies one-time correctness, this means that there exists a negligible function $\nu$ such that for any $p \le 1$ and any message $m_p$ (possibly $m_p = \emptyset$, if $p=0$, in which case no message is signed in the expression below) and for all $\lambda \in \N$,
\begin{align}\label{eq:ot-deletion-correctness}
\Pr\left[\top\gets\mathsf{OT}.\Cert(\pp,\vk,\ck,\cert,\{m_p\}):
\begin{array}{rr}
(\pp,\ck)\gets\mathsf{OT}.\setup(1^\secp)\\
     (\sigk_0,\vk)\gets\mathsf{OT}.\KeyGen(\pp)  \\
     (\sigk_p,\sigma_p)\gets\mathsf{OT}.\Sign(\pp,\sigk_0,m_p)\\
     \cert\gets \mathsf{OT}.\Del(\sigk_{p})
\end{array}
\right]\ge1-\nu(\secp). 
\end{align} 
Now let $p=p(\lambda)$ be any polynomial and let $(m_1,m_2,...,m_{p})$ be any collection of messages. In the case when $p =0$ the claim follows immediately from the one-time deletion correctness of the $\mathsf{OT}$ scheme. More generally in the case when $p(\lambda)\geq 1$ it follows from the union bound and \Cref{eq:ot-deletion-correctness} that
\begin{align}
&\Pr\left[\top\gets \mathsf{MT}.\Cert(\pp,\vk,\ck,\cert,\{m_1,m_2,...,m_p\}):
\begin{array}{rr}
(\pp,\ck)\gets\setup(1^\secp)\\
     (\sigk_0,\vk)\gets \mathsf{MT}.\KeyGen(\pp)  \\
     (\sigk_1,\sigma_1)\gets\mathsf{MT}.\Sign(\pp,\sigk_0,m_1)\\
     (\sigk_2,\sigma_2)\gets\mathsf{MT}.\Sign(\pp,\sigk_1,m_2)\\
     \vdots\\
     (\sigk_{p},\sigma_{p})\gets \mathsf{MT}.\Sign(\pp,\sigk_{p-1},m_{p})\\
     \cert\gets \mathsf{MT}.\Del(\sigk_{p})
\end{array}
\right] \\ 
&
\ge 1- (p(\lambda)+1) \cdot \nu(\lambda)/2.
\end{align}
Finally, because $p(\lambda)=\poly(\lambda)$ and $\nu(\lambda) = \negl(\lambda)$ it follows that the above expression above is at least $1-\negl(\lambda)$. This proves the claim.

\paragraph{Many-time deletion security.}
Suppose that there exists a QPT adversary $\cA$ which breaks the many-time deletion security of $\mathsf{MT}$. In other words, there exists a polynomial $q(\lambda)$ such that for infinitely many $\lambda \in \N$,
\begin{align}\label{eq:mt-deletion-security-adv}
   \Pr\left[
   \begin{array}{ll}
   &~~~\top\gets \mathsf{MT}.\Cert(\pp,\vk,\ck,\cert,S)\\
   &\wedge~ m^*\notin S\\
   &\wedge~ \top\gets\mathsf{MT}.\Ver(\vk,m^*,\sigma^*)
   \end{array}
   :
   \begin{array}{rr}
   (\pp,\ck)\gets \mathsf{MT}.\setup(1^\secp)\\
   (\vk,\cert,S,m^*,\sigma^*)\gets\cA(\pp)\\
   \end{array}
   \right] 
   \ge \frac{1}{q(\lambda)}.
\end{align}
We now show that we can use $\cA$ which satisfies \eqref{eq:mt-deletion-security-adv} to break the one-time deletion security (according to \Cref{def:dswrsk}) of the $\mathsf{OT}$ scheme. Consider the following reduction $\cB$:

\begin{enumerate}
    \item On input $\pp$, $\cB$ runs $\cA(\pp)$ and parses the output $(\vk,\cert,S,m^*,\sigma^*)$ as follows:
    \begin{itemize}
    \item $\vk=\sfot.\vk_0$

    \item $\cert=\{\sfot.\cert_j,m_j,\sfot.\vk_{j},\sfot.\sigma_j\}_{j\in [t]}$

    \item $\sigma^*=\{m^*_j,\sfot.\vk^*_{j},\sfot.\sigma^*_j\}_{j\in [t^*]}$
\end{itemize}
for some integers $t,t^*$. Let $i^* \leq \min\{t,t^*\}$ be the largest integer such that $(m_j,\sfot.\vk_i) = (m_j^*,\sfot.\vk_i^*)$ for all $i \in [i^*]$.

\item $\cB$ proceeds as follows (depending on which case it encounters):
\begin{itemize}
    \item[\textbf{Case 1:}] $i^*= t < t^*$. Let $\cert' = (\sfot.\cert_{i^*+1},m_{i^*+1},\sfot.\vk_{i^*+1},\sfot.\sigma_{i^*+1})$ and output $$\mathsf{out} = (\sfot.\vk_{i^*},\cert',\emptyset,m_{i^*+1}^*,\sigma_{i^*+1}^*).$$ 

    \item[\textbf{Case 2:}] $i^* < \min\{t,t^*\}$. Let $\cert' = (\sfot.\cert_{i^*+1},m_{i^*+1},\sfot.\vk_{i^*+1},\sfot.\sigma_{i^*+1})$ and output $$\mathsf{out} = (\sfot.\vk_{i^*},\cert',\{ m_{i^*+1}\},m_{i^*+1}^*,\sigma_{i^*+1}^*).$$

    \item[\textbf{Case 3:}] $(i^*= t^* < t)$ or $(i^*= t^*= t)$. Abort and output $\bot$.
\end{itemize}
\end{enumerate}
We now analyze the success probability of $\cB$. Note that, by our assumption in \eqref{eq:mt-deletion-security-adv}, either \textbf{Case 1} or \textbf{Case 2} must occur with probability at least $1/\poly(\lambda)$. Otherwise, if \textbf{Case 3} occurs with overwhelming probability, then $\cA$ does not constitute a successful adversary which contradicts our initial assumption in \eqref{eq:mt-deletion-security-adv}. 

Suppose that \textbf{Case 1} occurs. In this case, with probability at least $1/\poly(\lambda)$, we observe that $\mathsf{OT}.\Cert(\pp,\vk_{i^*},\ck,\cert',S')=\top$ and $m_{i^*+1}^* \notin S'$ for $S'=\emptyset$, and $\mathsf{OT}.\Ver(\vk_{i^*},m_{i^*+1}^*,\sigma_{i^*+1}^*) =\top$. 

Finally, suppose that \textbf{Case 2} occurs.
By the definition of $i^*$, it follows that $(m_{i^*+1},\sfot.\vk_{i^*+1}) \neq (m_{i^*+1}^*,\sfot.\vk_{i^*+1}^*)$. Hence, with probability at least $1/\poly(\lambda)$, we find that $\mathsf{OT}.\Cert(\pp,\vk_{i^*},\ck,\cert',S')=\top$ and $m_{i^*+1}^* \notin S'$ with respect to $S'=\{m_{i^*+1}\}$, as well as $\mathsf{OT}.\Ver(\vk_{i^*},m_{i^*+1}^*,\sigma_{i^*+1}^*) =\top$.

Therefore, we have shown that there exists a QPT adversary $\cB$ such that
\begin{align*}
   \Pr\left[
   \begin{array}{ll}
   &~~~\top\gets \mathsf{OT}.\Cert(\pp,\vk,\ck,\cert,S)\\
   &\wedge~ m^*\notin S\\
   &\wedge~ \top\gets\mathsf{OT}.\Ver(\vk,m^*,\sigma^*)\\
   &\wedge~|S|\leq 1
   \end{array}
   :
   \begin{array}{rr}
   (\pp,\ck)\gets \mathsf{OT}.\setup(1^\secp)\\
   (\vk,\cert,S,m^*,\sigma^*)\gets\cB(\pp)\\
   \end{array}
   \right] 
   \ge \frac{1}{\poly(\lambda)}.
\end{align*}
This proves the claim.

\end{proof}

\section{Adaptive Hardcore Bit Property for OWFs}
\label{sec:AHB}
In this section, we introduce a new concept, which we call {\it adaptive hardcore bit property for OWFs}, and show it from
the existence of OWFs. This property is inspired by the adaptive hardcore bit property which was shown for a family of noisy trapdoor claw-free functions by Brakerski et al.~\cite{JACM:BCMVV21}.
Our notion of the adaptive hardcore bit property for OWFs will be used to construct two-tier tokenized signatures.

\subsection{Statements}
The formal statement of the adaptive hardcore bit property for OWFs is given as follows.
(Its proof is given later.)
\begin{theorem}[Adaptive Hardcore Bit Property for OWFs]
\label{thm:adaptive_hardcore_bit}
Let $\lambda \in \N$ be the security parameter and let $\ell(\lambda),\kappa(\lambda) \in \N$ be polynomials. 
Let $f: \bit^{\ell(\secp)} \rightarrow \bit^{\kappa(\secp)}$ be a (quantumly-secure) OWF. Then, for any QPT adversary $\{\mathcal{A}_\secp\}_{\secp\in \N}$, it holds that
\begin{align}
\Pr\left[
\begin{array}{cc}
f(x) \in \{f(x_0),f(x_1)\} \vspace{1mm}\ \\
\bigwedge \vspace{1mm}\ \\
\, d \cdot (x_0 \oplus x_1) = c
\end{array}
:
\begin{array}{ll}
x_0 \leftarrow \bit^{\ell(\secp)}, \,\, x_1 \leftarrow \bit^{\ell(\secp)} \vspace{1.2mm}\\
c \leftarrow \bit\vspace{1mm}\\
(x,d) \leftarrow \mathcal{A}_\secp\left(\frac{\ket{x_0} +(-1)^c\ket{x_1}}{\sqrt{2}},f(x_0),f(x_1) \right)
\end{array}
\right] \leq \frac{1}{2} + \negl(\lambda).
\end{align}
\end{theorem}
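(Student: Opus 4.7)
The plan is to derive the adaptive hardcore bit property for OWFs by invoking (and slightly repackaging) a publicly-verifiable certified deletion statement implicit in the recent work of Bartusek~\cite{bartusek2023weakening}. The overall strategy has three steps: (i) recast the game as a certified-deletion/monogamy game, (ii) identify the state $(\ket{x_0}+(-1)^c\ket{x_1})/\sqrt{2}$ together with the advice $(f(x_0),f(x_1))$ as a canonical publicly-verifiable certified deletion ciphertext of the bit $c$, and (iii) appeal to the monogamy theorem to obtain the $1/2 + \negl$ bound.

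First I would observe that measuring $(\ket{x_0}+(-1)^c\ket{x_1})/\sqrt{2}$ in the computational basis returns $x_0$ or $x_1$ uniformly, whereas measuring in the Hadamard basis returns a string $d$ satisfying $d \cdot (x_0 \oplus x_1) = c$. Thus the state is a BB84-style encoding of $c$ on the positions where $x_0$ and $x_1$ differ, and $(f(x_0),f(x_1))$ serves as a public verification key: a valid deletion certificate is any $x$ with $f(x) \in \{f(x_0),f(x_1)\}$. Under this dictionary, the adaptive hardcore bit game is precisely the statement ``output a valid deletion certificate and simultaneously guess $c$,'' where the adversary's $d$ is converted to a guess $c' \seteq d \cdot (x_0 \oplus x_1)$. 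The desired inequality is then the monogamy claim that no QPT adversary can do both with advantage better than $\negl(\secp)$, which is what the Bartusek framework provides when it is specialized to a single-bit message and instantiated with a (quantum-secure) OWF $f$ as the public-verification primitive.

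The main obstacle lies in the OWF reduction inside that framework. One-wayness of $f$ must imply that producing $x$ with $f(x) \in \{f(x_0),f(x_1)\}$ essentially forces the adversary to have measured (or otherwise collapsed) the state in the computational basis, since otherwise the adversary would have produced a preimage of a random image of $f$ without any auxiliary help. The subtlety is that naively embedding an external OWF challenge $y = f(x^\ast)$ into the role of $f(x_0)$ would require coherently preparing $\ket{x^\ast}$, which is not possible given only $y$. The Bartusek framework sidesteps this via a purification / delayed-measurement argument in which the challenger honestly samples $x_0, x_1$, and the OWF reduction then exploits the symmetry and high min-entropy of $x_0, x_1$ (together with standard quantum measurement/rewinding techniques) to extract a fresh preimage from any adversary that outputs an $x \notin \{x_0, x_1\}$ with $f(x) \in \{f(x_0),f(x_1)\}$. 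Assembling these ingredients cleanly---the private-verification certified deletion lemma, the public-verification lifting of~\cite{bartusek2023weakening}, and the OWF-based extraction---is what I expect to be the technical crux of the argument.
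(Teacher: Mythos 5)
Your proposal matches the paper's proof in essence: the paper reduces the game to exactly the certified-deletion statement implicit in Theorem 3.1 of \cite{bartusek2023weakening} (restated so that both preimages $x_0,x_1$ are revealed alongside the residual state), instantiates $\cZ$ as the trivial, semantically-secure channel that discards $x_0\oplus x_1$, and converts any adversary winning with probability $1/2+1/p(\lambda)$ into a distinguisher of the residual states by computing $c'=d\cdot(x_0\oplus x_1)$ and applying the Holevo--Helstrom bound. The internal OWF-extraction issues you flag as the ``technical crux'' are not re-derived in the paper at all; it invokes the cited theorem as a black box, noting only that additionally revealing both preimages does not change its proof.
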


We actually use its amplified version, which is given as follows.
(Its proof is given later.)
\begin{theorem}[Amplified Adaptive Hardcore Bit Property for OWFs]
\label{thm:adaptive_hardcore_bit_amp}
Let $\lambda \in \N$ be the security parameter and let  $\ell(\lambda),\kappa(\lambda),n(\secp) \in \N$ be polynomials. 
Let $f: \bit^{\ell(\secp)} \rightarrow \bit^{\kappa(\secp)}$ be a (quantumly-secure) OWF. Then, for any QPT adversary $\{\mathcal{A}_\secp\}_{\secp\in\N}$, it holds that,
\begin{align}
\Pr
\left[
\begin{array}{cc}
\wedge_{i \in [n]} f(x_i) \in \{f(x_i^0),f(x_i^1)\} \vspace{1mm}\ \\
\bigwedge \vspace{1mm}\ \\
\wedge_{i \in [n]} d_i \cdot (x_i^0 \oplus x_i^1) = c_i
\end{array}
:
\begin{array}{ll}
\forall i \in [n] \,: \, x_i^0\gets\bit^{\ell(\secp)}, \, x_i^1\gets\bit^{\ell(\secp)} \vspace{1mm}\\
\forall i \in [n] \,: \, c_i\gets\bit \vspace{1mm}\\
\{x_i,d_i\}_{i\in[n]} \leftarrow \mathcal{A}_\secp\left(\bigotimes_{i=1}^{n}\frac{\ket{x_i^0} +(-1)^{c_i}\ket{x_i^1}}{\sqrt{2}},\{f(x_i^b)\}_{i,b} \right)
\end{array}
\right] \leq \negl(\lambda).
\end{align}
\end{theorem}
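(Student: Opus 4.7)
The plan is to reduce the amplified statement to the single-instance version (Theorem~\ref{thm:adaptive_hardcore_bit}) via a hybrid argument over the $n$ instances. First, I would define hybrids $H_0, H_1, \ldots, H_n$, where hybrid $H_k$ differs from the real $n$-fold game as follows: for $i \leq k$, the challenger still samples $x_i^0, x_i^1 \gets \bit^{\ell(\secp)}$ and $c_i \gets \bit$, but prepares the ``decoupled'' state $\ket{x_i^0}$ (independent of $c_i$), while still retaining $c_i$ for the final winning check; for $i > k$, the challenger plays as in the real game. Thus $H_0$ coincides with the game of Theorem~\ref{thm:adaptive_hardcore_bit_amp}.

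Second, I would analyze the endpoint $H_n$: since the adversary's output $d_i$ is independent of $c_i$ (the latter never enters state preparation and is sampled independently), each check event $d_i \cdot (x_i^0 \oplus x_i^1) = c_i$ holds with probability exactly $1/2$, and these events are jointly independent across $i \in [n]$. Consequently, $\Pr[H_n \text{ wins}] \leq 2^{-n}$.

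Third, the heart of the argument would be bounding each hybrid step $|\Pr[H_{k-1} \text{ wins}] - \Pr[H_k \text{ wins}]| \leq \negl(\secp)$ by a reduction to Theorem~\ref{thm:adaptive_hardcore_bit}. Given a single-instance challenge $(\ket{\psi^*}, f(x_0^*), f(x_1^*))$ with random $c^* \gets \bit$ embedded, the reduction embeds it at position $k$, simulates the decoupled states for $i < k$ and the coupled states for $i > k$ with fresh randomness, runs the amplified adversary $\cA$, and outputs $(x_k, d_k)$ as its single-instance answer. Telescoping the hybrid bounds then yields
\[
\mu \;\leq\; \Pr[H_n \text{ wins}] + n \cdot \negl(\secp) \;\leq\; 2^{-n} + \negl(\secp),
\]
which is negligible in the regime $n = \omega(\log \secp)$ in which the statement is meaningful.

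The main obstacle I would expect is the hybrid step: the single-instance bound of $1/2 + \negl$ does not immediately yield amplification, since a naive reduction that outputs $(x_k, d_k)$ unconditionally only gives $\mu \leq 1/2 + \negl$, not $\mu \leq \negl$. Overcoming this likely requires invoking the underlying technical theorem from~\cite{bartusek2023weakening} directly on the $n$-fold state (rather than treating Theorem~\ref{thm:adaptive_hardcore_bit} as a black box), so that the parallel-composition amplification is captured at the OWF-security level and properly translates the hybrid distinguishing advantage into OWF inversion.
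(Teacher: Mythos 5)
Your plan correctly anticipates the shape of the final bound ($2^{-n}+\negl(\secp)$, meaningful only when $n=\omega(\log\secp)$), and you correctly diagnose in your closing paragraph that the entire difficulty sits in the hybrid step and cannot be closed by black-box use of \cref{thm:adaptive_hardcore_bit}. But that diagnosis is where the proposal stops, and the step it flags is a genuine gap. The hybrid you define --- replacing the $k$-th state by the decoupled $\ket{x_k^0}$ while retaining $c_k$ for the winning check --- is not an object for which the bound $|\Pr[H_{k-1}\text{ wins}]-\Pr[H_k\text{ wins}]|\le\negl(\secp)$ comes with any justification: the states $\ket{x_k^0}$ and $(\ket{x_k^0}+(-1)^{c_k}\ket{x_k^1})/\sqrt{2}$ are statistically far and, given $f(x_k^0),f(x_k^1)$, generically distinguishable by a computational-basis measurement, so no indistinguishability of views is available, and a reduction of the win-probability gap to \cref{thm:adaptive_hardcore_bit} only yields the non-telescoping $1/2+\negl$ bound you already identified as insufficient. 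Closing this step would itself require the full certified-deletion-style statement of \cref{thm:BKMPW} (closeness of the \emph{post-selected} residual states for the two phases, together with matching post-selection probabilities), at which point the decoupling detour buys nothing.

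The paper proceeds differently and avoids win-probability hybrids altogether. It first proves \cref{thm:BKMPW_amp}, a parallel version of \cref{thm:BKMPW} asserting that for every phase vector $(b_1,\dots,b_n)$ the post-selected output state is negligibly close in trace distance to the all-zero one; this hybrid \emph{does} telescope, because each step is itself a negligible trace-distance claim that reduces to the single-instance \cref{thm:BKMPW} by embedding the challenge at coordinate $k$ and simulating the rest. Given \cref{thm:BKMPW_amp}, the amplified hardcore-bit property follows from one global counting argument: an adversary winning with probability $1/\poly(\secp)$ yields a predictor $\cD$ that, from the post-selected state, outputs the correct $(c_1,\dots,c_n)$ with that probability on average over the $2^n$ phase vectors; since all $2^n$ post-selected states are $\negl(\secp)$-close to the single fixed state $\widetilde{\cZ}_\secp^{\cB_\secp}(0,\dots,0)$, and any fixed algorithm outputs a given string from a fixed input state with total probability mass $1$ over all $2^n$ strings, the average prediction probability is at most $2^{-n}+\negl(\secp)$, a contradiction. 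So your instinct to invoke the theorem of \cite{bartusek2023weakening} directly on the $n$-fold state is the right one, but to complete the proof you must (i) actually state and prove the parallel trace-distance theorem, and (ii) replace the per-position decoupling hybrid with a single reduction of the above form.
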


\if0
\alex{Actually, how do we prove the amplified version? It should be $(1/2+\negl(\lambda))^n$ but I'm not sure the proof is similar as before? Maybe we need to use the results on weakly-verifiable puzzles by Canetti, Halevi and Steiner?}
\takashi{I believe we can reduce \Cref{thm:adaptive_hardcore_bit_amp} to \Cref{thm:BKMPW_amp} similarly to the reduction from \Cref{thm:adaptive_hardcore_bit} to \Cref{thm:BKMPW}. 
Note that we are not trying to reduce \Cref{thm:adaptive_hardcore_bit_amp} to \Cref{thm:adaptive_hardcore_bit}. That would need an extension of the CHS amplification theorem in the setting where the first messages (puzzles) are quantum. But such a generalized amplification is an important open problem.
}
\begin{proof}
Fix $\lambda \in \N$ and $\ell(\lambda),\kappa(\lambda) \in \N$, and let $f: \{0,1\}^{\ell(\secp)} \to \{0,1\}^{\kappa(\secp)}$ be a one-way function secure against QPT adversaries. Suppose there exists a QPT algorithm $\mathcal{A}_\lambda$ and a polynomial $p(\lambda)$ such that
$$
\Pr
\left[
\substack{\forall i \in [n] \,: \, f(x_i) \in \{f(x_i^0),f(x_i^1)\}\ \\
\bigwedge\ \\
\, \forall i \in [n] \,: \, d_i \cdot (x_i^0 \oplus x_i^1) = c_i
}\, \vline \,
\substack{
\forall i \in [n] \,: \, x_i^0\gets\bit^\ell, \, x_i^1\gets\bit^\ell, c_i\gets\bit\\
(\{x_i\}_i,\{d_i\}_i) \leftarrow \mathcal{A}\left(\bigotimes_{i=1}^n\frac{\ket{x_i^0} +(-1)^{c_i}\ket{x_i^1}}{\sqrt{2}},\{f(x_i^b)\}_{i,b} \right)
}
\right] \geq \frac{1}{p(\secp)}.
$$
Then...
\end{proof}
\fi

\subsection{Theorem of \cite{bartusek2023weakening}}
In order to show adaptive hardcore bit property for OWFs,
we use the following theorem which is implicit in \cite[Theorem 3.1]{bartusek2023weakening}. The only difference is that we additionally reveal both pre-images as part of the distribution  $\left\{\widetilde{\cZ}_\secp^{\cA_\secp}(b)\right\}_{\secp \in \N, b \in \{0,1\}}$. We remark that the proof is the same. 

\begin{theorem}[Implicit in \cite{bartusek2023weakening}, Theorem 3.1]\label{thm:BKMPW}
Let $\secp\in\N$ be the security parameter,
and let $\ell(\secp),\kappa(\secp)\in\N$ be polynomials.
Let $f: \{0,1\}^{\ell(\secp)} \to \{0,1\}^{\kappa(\secp)}$ be a OWF secure against QPT adversaries. Let $\{\cZ_\secp(\cdot,\cdot,\cdot,\cdot)\}_{\secp \in \N}$ be a quantum operation with four arguments: an $\ell(\secp)$-bit string $z$, two $\kappa(\secp)$-bit strings $y_0,y_1$, and an $\ell(\secp)$-qubit quantum state $\ket{\psi}$. 
Suppose that for any QPT adversary $\{\cA_\secp\}_{\secp \in \N}$, $z \in \{0,1\}^{\ell(\secp)},y_0,y_1 \in \{0,1\}^{\kappa(\secp)}$, and $\ell(\secp)$-qubit state $\ket{\psi}$,

\[\bigg| \Pr\left[\cA_\secp(\cZ_\secp(z,y_0,y_1,\ket{\psi})) = 1\right] - \Pr\left[\cA_\secp(\cZ_\secp(0^{\ell(\secp)},y_0,y_1,\ket{\psi})) = 1\right]\bigg| = \negl(\secp).\]

That is, $\cZ_\secp$ is semantically-secure with respect to its first input. Now, for any QPT adversary $\{\cA_\secp\}_{\secp \in \N}$, consider the distribution $\left\{\widetilde{\cZ}_\secp^{\cA_\secp}(b)\right\}_{\secp \in \N,b\in\bit}$ over quantum states, obtained by running $\mathcal{A}_\lambda$ as follows.

\begin{itemize}
    \item Sample $x_0,x_1 \gets \{0,1\}^{\ell(\secp)}$, define $y_0 = f(x_0), y_1 = f(x_1)$ and initialize $\cA_\secp$ with 
    \[\cZ_\secp\left(x_0 \oplus x_1, y_0,y_1,\frac{\ket{x_0} + (-1)^b \ket{x_1}}{\sqrt{2}}\right).\]
    \item $\cA_\secp$'s output is parsed as a string $x' \in \{0,1\}^{\ell(\secp)}$ and a residual state on register $\mathsf{A}'$.
    \item If $f(x') \in \{y_0,y_1\}$, then output $(x_0,x_1,\mathsf{A}')$, and otherwise output $\bot$. 
\end{itemize}

Then, it holds that
\begin{align}
\left\| \widetilde{\cZ}_\secp^{\cA_\secp}(0) - \widetilde{\cZ}_\secp^{\cA_\secp}(1) \right\|_{\rm tr}  \leq \negl(\secp).
\end{align}
\end{theorem}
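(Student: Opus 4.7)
The plan is to reduce the trace-distance bound to a two-step hybrid. In Step~1, I will invoke the semantic-security hypothesis on $\cZ$ to replace its first argument $x_0 \oplus x_1$ with $0^{\ell(\secp)}$, defining a hybrid distribution $\widehat{\cZ}_\secp^{\cA_\secp}(b)$. In Step~2, I will argue directly that $\widehat{\cZ}^{\cA_\secp}(0)$ and $\widehat{\cZ}^{\cA_\secp}(1)$ are close using a ``measurement-decoherence'' argument driven by the one-way hardness of $f$: once the filter $f(x') \in \{y_0, y_1\}$ forces the adversary's measurement to project onto $\{\ket{x_0},\ket{x_1}\}$, the relative phase $(-1)^b$ in $\ket{\psi_b} = (\ket{x_0} + (-1)^b \ket{x_1})/\sqrt{2}$ is erased.

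\textbf{Step~1 (semantic-security switch).} Define $\widehat{\cZ}_\secp^{\cA_\secp}(b)$ identically to $\widetilde{\cZ}_\secp^{\cA_\secp}(b)$ except that $\cA$ receives $\cZ_\secp(0^{\ell(\secp)}, y_0, y_1, \ket{\psi_b})$ in place of $\cZ_\secp(x_0 \oplus x_1, y_0, y_1, \ket{\psi_b})$. For any QPT distinguisher $D$ between these two CQ distributions, I build a reduction to semantic security: sample $(x_0, x_1) \gets \bit^{\ell(\secp)}$, prepare $(y_0, y_1, \ket{\psi_b})$, hand the 4-tuple $(z, y_0, y_1, \ket{\psi_b})$ with $z \in \{x_0 \oplus x_1, 0^{\ell(\secp)}\}$ to the semantic-security challenger, run $\cA$ on the returned $\chi$, apply the filter $f(x') \in \{y_0,y_1\}$, and feed $(x_0, x_1, \mathsf{A}')$ (or $\bot$) to $D$. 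Since semantic security is quantified uniformly over $(z, y_0, y_1, \ket{\psi})$, this yields $\|\widetilde{\cZ}^{\cA_\secp}(b) - \widehat{\cZ}^{\cA_\secp}(b)\|_{\rm tr} \le \negl(\secp)$ (against QPT distinguishers) for each $b \in \bit$.

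\textbf{Step~2 (phase cancellation by filtered measurement).} In the hybrid, the channel $\cE_{y_0,y_1} := \cZ_\secp(0^{\ell(\secp)}, y_0, y_1, \cdot)$ depends on $b$ only through $\ket{\psi_b}\bra{\psi_b} = P_{\mathrm{diag}} + (-1)^b P_{\mathrm{off}}$, where $P_{\mathrm{diag}} = \tfrac{1}{2}(\ket{x_0}\bra{x_0} + \ket{x_1}\bra{x_1})$ and $P_{\mathrm{off}} = \tfrac{1}{2}(\ket{x_0}\bra{x_1} + \ket{x_1}\bra{x_0})$. I model $\cA$ as a unitary $V$ followed by a computational-basis measurement of the $x'$-register, so that each output probability is a linear functional of the input density operator. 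The central claim is that the contribution of $P_{\mathrm{off}}$ to the post-filter joint distribution on $(x_0, x_1, x', \mathsf{A}')$ is negligible: since by OWF security the adversary can only output $x' \in f^{-1}(\{y_0, y_1\}) = \{x_0, x_1\}$ (up to $\negl$), the effective extraction measurement collapses the input onto the $\{\ket{x_0}, \ket{x_1}\}$ basis, annihilating the off-diagonal cross-term and hence the $b$-dependence.

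\textbf{Main obstacle.} The hard part is Step~2: formally ruling out a clever $\cA$ that both outputs a valid preimage $x'$ and retains the phase bit $b$ coherently in $\mathsf{A}'$. A priori, nothing prevents $\cA$ from entangling the phase with an ancilla before measuring $x'$; converting this to a contradiction requires OWF hardness in an essential way. I expect the cleanest route is to introduce a sub-hybrid replacing $\ket{\psi_b}$ by the manifestly $b$-independent classical mixture $\tfrac{1}{2}(\ket{x_0}\bra{x_0} + \ket{x_1}\bra{x_1})$, and then show --- via a rewinding/coherent-measurement analysis that extracts a preimage of $y_0$ or $y_1$ from any adversary distinguishing this hybrid --- that the off-diagonal contribution would break the one-wayness of $f$. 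This ``produce a preimage or preserve the phase, but not both'' trade-off is the quantum heart of the argument.
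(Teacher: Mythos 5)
First, a point of reference: the paper does not actually reprove this statement. It is imported from \cite{bartusek2023weakening}, Theorem 3.1, with the only change being that both preimages $(x_0,x_1)$ are appended to the output, and the authors assert the proof carries over unchanged. So your attempt has to be measured against that proof, and it diverges from it at the two load-bearing points. Your Step~1 cannot serve as a step toward the stated conclusion: the theorem claims a bound on the \emph{trace distance} $\|\widetilde{\cZ}(0)-\widetilde{\cZ}(1)\|_{\rm tr}$, a statistical quantity, whereas semantic security only yields \emph{computational} indistinguishability of $\widetilde{\cZ}(b)$ and your hybrid $\widehat{\cZ}(b)$ --- as your own parenthetical ``(against QPT distinguishers)'' concedes. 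These two distributions can be statistically far (take $\cZ$ to be an encryption of its first argument: an unbounded distinguisher decrypts the residual state and checks consistency against the revealed $x_0,x_1$), so the triangle inequality through $\widehat{\cZ}$ gives nothing for the trace norm. In the actual proof, semantic security is invoked only to bound the probability of an \emph{efficiently checkable event}, never to switch whole output distributions.

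Your Step~2 then asserts exactly the hard part, and the intuition offered is not sound for a general OWF: $f^{-1}(\{y_0,y_1\})$ need not equal $\{x_0,x_1\}$, and producing \emph{some} valid preimage does not force the adversary's action to be (close to) a computational-basis measurement of its input, so the off-diagonal term is not ``annihilated'' for free. The missing mechanism is the following. Since $\proj{\psi_0}-\proj{\psi_1}=\ket{x_0}\bra{x_1}+\ket{x_1}\bra{x_0}$ and the revealed $(x_0,x_1)$ sit in orthogonal classical registers, one gets (essentially) $\|\widetilde{\cZ}(0)-\widetilde{\cZ}(1)\|_{\rm tr}\le 2\,\mathbb{E}_{x_0,x_1}\|\Phi_{x_0,x_1}(\ket{x_0}\bra{x_1})\|_{\rm tr}$, where $\Phi_{x_0,x_1}$ is the sub-normalized CP map ``run $\cA$ on $\cZ(x_0\oplus x_1,y_0,y_1,\cdot)$ and post-select on the filter.'' The crucial move is to \emph{split the filter} into the two disjoint events $f(x')=y_0$ and $f(x')=y_1$, writing $\Phi=\Phi^{(0)}+\Phi^{(1)}$, and to apply the Cauchy--Schwarz inequality for CP maps, $\|\Phi^{(c)}(\ket{x_0}\bra{x_1})\|_{\rm tr}\le\sqrt{\Tr\,\Phi^{(c)}(\proj{x_0})}\cdot\sqrt{\Tr\,\Phi^{(c)}(\proj{x_1})}$. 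Without the split this bound is vacuous, since the overall passing probability on either diagonal input $\ket{x_0}$ or $\ket{x_1}$ can be $1$; with it, each product contains a \emph{cross} term --- the probability that $\cA$, given the basis state $\ket{x_{1-c}}$, outputs a preimage of $y_c$ --- and it is precisely here that semantic security (replacing $x_0\oplus x_1$ by $0^{\ell}$ for this efficiently checkable event) followed by one-wayness of $f$ shows the term is $\negl(\secp)$. Your ``main obstacle'' paragraph correctly senses that something of this kind is needed, but the rewinding/decoherence route you sketch does not supply it, so the proof as proposed has a genuine gap.
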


We can show the following parallel version. (It can be shown by the standard hybrid argument. A detailed proof is given in \cref{sec:BKMPW_parallel}.) 
\begin{theorem}[Parallel version of \cref{thm:BKMPW}]
\label{thm:BKMPW_amp}
Let $\secp\in\N$ be the security parameter.
Let $\ell(\secp),\kappa(\secp),n(\secp)\in\N$ be polynomials.
Let $f: \{0,1\}^{\ell(\secp)} \to \{0,1\}^{\kappa(\secp)}$ be a OWF secure against QPT adversaries. 
Let $\{\cZ_\secp(\cdot,\cdot,\cdot,\cdot)\}_{\secp \in \N}$ be a quantum operation with four arguments: 
an $\ell(\secp)$-bit string $z$, 
two $\kappa(\secp)$-bit strings $y_0,y_1$, 
and an $\ell(\secp)$-qubit quantum state $\ket{\psi}$. Suppose that for any QPT adversary 
$\{\cA_\secp\}_{\secp \in \N}$, 
$z \in \{0,1\}^{\ell(\secp)},y_0,y_1 \in \{0,1\}^{\kappa(\secp)}$, and $\ell(\secp)$-qubit state $\ket{\psi}$,

\[\bigg| \Pr\left[\cA_\secp(\cZ_\secp(z,y_0,y_1,\ket{\psi})) = 1\right] 
- \Pr\left[\cA_\secp(\cZ_\secp(0^{\ell(\secp)},y_0,y_1,\ket{\psi})) = 1\right]\bigg| = \negl(\secp).\]

That is, $\cZ_\secp$ is semantically-secure with respect to its first input. Now, for any QPT adversary $\{\cA_\secp\}_{\secp \in \N}$, 
consider the distribution $\left\{\widetilde{\cZ}_\secp^{\cA_\secp}(b_1,...,b_{n(\secp)}\right\}_{\secp \in \N,b_i\in\bit}$ over quantum states, obtained by running $\mathcal{A}_\lambda$ as follows.

\begin{itemize}
    \item Sample $x_i^0,x_i^1 \gets \{0,1\}^{\ell(\secp)}$ for each $i\in[n(\secp)]$, define $y_i^0 = f(x_i^0), y_i^1 = f(x_i^1)$ and initialize $\cA_\secp$ with 
   \begin{align}
    \bigotimes_{i\in[n(\secp)]}\cZ_\secp\left(x_i^0 \oplus x_i^1, y_i^0,y_i^1,
    \frac{\ket{x_i^0} + (-1)^{b_i} \ket{x_i^1}}{\sqrt{2}}\right).
   \end{align} 
    \item $\cA_\secp$'s output is parsed as strings $x'_i\in \{0,1\}^{\ell(\secp)}$ for $i\in[n(\secp)]$ and a residual state on register $\mathsf{A}'$.
    \item If $f(x'_i) \in \{y_i^0,y_i^1\}$ for all $i\in[n(\secp)]$, output $(\{x_i^0\}_{i\in[n(\secp)]},\{x_i^1\}_{i\in[n(\secp)]},\mathsf{A}')$, and otherwise output $\bot$.
\end{itemize}

Then, there exists a negligible function $\negl(\secp)$ such that for any $b_1,...,b_{n(\secp)}\in\bit$,
\begin{align}
\left\|\widetilde{\cZ}_\secp^{\cA_\secp}(b_1,...,b_{n(\secp)}) - \widetilde{\cZ}_\secp^{\cA_\secp}(0,...,0)\right\|_{\rm tr} \leq \negl(\secp).
\end{align}
\end{theorem}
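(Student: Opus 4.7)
The plan is to run a standard hybrid argument over the $n(\secp)$ challenge positions, reducing each single-bit flip to the single-copy statement \cref{thm:BKMPW}. For $k \in \{0,1,\ldots,n(\secp)\}$ define the hybrid $H_k := \widetilde{\cZ}_\secp^{\cA_\secp}(b_1,\ldots,b_k,0,\ldots,0)$, so that $H_0 = \widetilde{\cZ}_\secp^{\cA_\secp}(0,\ldots,0)$ and $H_{n(\secp)} = \widetilde{\cZ}_\secp^{\cA_\secp}(b_1,\ldots,b_{n(\secp)})$. It suffices to bound $\|H_{k-1}-H_k\|_{\rm tr}$ for each $k$; the case $b_k=0$ is trivial, so from now on fix a $k$ with $b_k=1$.

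For each such $k$, I would build a single-copy adversary $\cB_k$ as follows. On input $\cZ_\secp(z,y_0,y_1,\ket{\psi})$, $\cB_k$ samples the remaining pairs $(x_i^0,x_i^1)$ for $i\neq k$ together with the hybrid bits (using $b_i$ for $i<k$ and $0$ for $i>k$), forms the $n$-fold bundle of $\cZ_\secp$-invocations by inserting the given challenge into slot $k$, and runs $\cA_\secp$ on it. When $\cA_\secp$ returns $(\{x'_i\}_{i\in[n(\secp)]},\mathsf{A}')$, $\cB_k$ outputs $x'_k$ as its single-copy answer and stores in its residual register $\mathsf{A}'_{\cB}$ three things: the state $\mathsf{A}'$, the simulated pairs $\{(x_i^0,x_i^1)\}_{i\neq k}$, and a classical flag recording whether $f(x'_i)\in\{f(x_i^0),f(x_i^1)\}$ for every $i\neq k$ (which $\cB_k$ can compute since it chose those preimages itself).

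Then $\widetilde{\cZ}_\secp^{\cB_k}(0)$ and $\widetilde{\cZ}_\secp^{\cB_k}(1)$ are precisely $H_{k-1}$ and $H_k$, up to a fixed classical post-processing map $\Phi$ that reads off the flag, outputs $\bot$ if any non-$k$ coordinate failed, and otherwise splices the recovered $(x_k^0,x_k^1)$ together with the stored pairs into the required parallel format. Since trace distance is non-increasing under CPTP maps, \cref{thm:BKMPW} yields $\|H_{k-1}-H_k\|_{\rm tr}\le \|\widetilde{\cZ}_\secp^{\cB_k}(0)-\widetilde{\cZ}_\secp^{\cB_k}(1)\|_{\rm tr}\le \negl(\secp)$. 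Summing over the $n(\secp)=\poly(\secp)$ hybrids via the triangle inequality then closes the argument.

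The one point I expect to require care is matching the $\bot$-convention between the two games: in the single-copy setting, failure is decided solely by the $k$-th coordinate's $f$-check performed by $\widetilde{\cZ}$, whereas in the parallel target distribution, \emph{any} coordinate's failure must force $\bot$. The flag-and-post-process trick above is what bridges this gap, and it is clean precisely because $\cB_k$ knows all non-$k$ preimages and $y$-values and can compute the flag efficiently from $\cA_\secp$'s output. No additional hardness assumption beyond what \cref{thm:BKMPW} already invokes is needed.
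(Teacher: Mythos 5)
Your proposal is correct and follows essentially the same route as the paper: a position-by-position hybrid in which the single-copy challenge of \cref{thm:BKMPW} is embedded into the $k$-th slot while the reduction samples and simulates all other slots itself. The only (cosmetic) difference is that the paper's reduction outputs $\bot$ directly when a non-$k$ coordinate fails its $f$-check, whereas you record a flag in the residual register and discharge it via a post-processing map and the data-processing inequality; both handle the $\bot$-convention mismatch identically in substance.
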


\subsection{Proof of \cref{thm:adaptive_hardcore_bit}}
By using \cref{thm:BKMPW}, we can show \cref{thm:adaptive_hardcore_bit} as follows.
%(\cref{thm:adaptive_hardcore_bit_amp} can be shown by using \cref{thm:BKMPW_amp} in a similar way. See \cref{sec:proof_adaptive_hardcore_bit_amp}.)
Here, we leverage the fact that any algorithm that simultaneously produces both a valid pre-image of the OWF, as well as a string which leaks information about the relative phase between the respective pre-images, must necessarily violate \cref{thm:BKMPW}.

\begin{proof}[Proof of \cref{thm:adaptive_hardcore_bit}]
Let $\ell(\lambda),\kappa(\lambda) \in \N$ be polynomials, and let $f: \{0,1\}^{\ell(\secp)} \to \{0,1\}^{\kappa(\secp)}$ be a (quantumly-secure) OWF. Suppose there exist a QPT algorithm $\{\mathcal{A}_\lambda\}_{\secp\in\N}$ and a polynomial $p(\lambda)$ such that, for random $x_0,x_1 \leftarrow \bit^\ell$ and $c \leftarrow \bit$, it holds that
\begin{align}
\Pr\left[
\substack{f(x) \in \{f(x_0),f(x_1)\}\ \\
\bigwedge\ \\
\, d \cdot (x_0 \oplus x_1) = c
}\, : \,
(x,d) \leftarrow \mathcal{A}_\secp\left(\frac{\ket{x_0} +(-1)^c\ket{x_1}}{\sqrt{2}},f(x_0),f(x_1) \right)
\right] \geq \frac{1}{2} + \frac{1}{p(\lambda)}
\label{assumeBKMPWisnotsecure}
\end{align} 
for infinitely many $\secp$.
We now show how to construct an algorithm that violates \Cref{thm:BKMPW}. For simplicity, we define the quantum operation $\{\cZ_\secp(\cdot,\cdot,\cdot,\cdot)\}_{\secp \in \N}$ in \Cref{thm:BKMPW} as 
\[\cZ_\secp\left(x_0 \oplus x_1, f(x_0),f(x_1),\frac{\ket{x_0} + (-1)^c \ket{x_1}}{\sqrt{2}}\right) \coloneqq \left(f(x_0),f(x_1),\frac{\ket{x_0} +(-1)^c\ket{x_1}}{\sqrt{2}} \right).\] 
Evidently, our choice of $\cZ_\secp$ is trivially semantically secure with respect to the first argument. 
Consider the following QPT algorithm $\mathcal{B}_\secp$:
\begin{enumerate}
    \item On input $\left(f(x_0),f(x_1),\frac{\ket{x_0} +(-1)^c\ket{x_1}}{\sqrt{2}} \right)$, run
    $$
    (x,d_c) \leftarrow \mathcal{A}_\secp\left(\frac{\ket{x_0} +(-1)^c\ket{x_1}}{\sqrt{2}},f(x_0),f(x_1) \right).
    $$
    \item Output $x$ and assign $\proj{d_c}$ as the residual state.\footnote{Note that we can think of $d_c$ as a classical mixture (i.e., density matrix) over the randomness of $x_0,x_1 \leftarrow \bit^\ell$, $c \leftarrow \bit$ and the internal randomness of the algorithm $\mathcal{A}_\secp$.}
\end{enumerate}
Adopting the notation from \Cref{thm:BKMPW}, 
we define $ \widetilde{\cZ}_\secp^{\cB_\secp}(c)$.\footnote{It is, roughly speaking, 
$ \proj{x_0} \otimes \proj{x_1} \otimes \proj{d_c}$ for $c \in \bit$
when $f(x)\in\{f(x_0),f(x_1)\}$, and is $\bot$ when $f(x)\notin\{f(x_0),f(x_1)\}$.}
Consider the following distinguisher that distinguishes $\widetilde{\cZ}_\secp^{\cB_\secp}(c)$ for $c\in\bit$:
\begin{enumerate}
    \item 
Get $\widetilde{\cZ}_\secp^{\cB_\secp}(c)$ as input.
\item 
If it is $\bot$, output $\bot$ and abort.
\item 
Output $d_c \cdot (x_0 \oplus x_1) \pmod{2}$. 
\end{enumerate}
From \cref{assumeBKMPWisnotsecure}, 
there exists a polynomial $p(\lambda)$ such that both $f(x)\in \{f(x_0),f(x_1)\}$ and $d_c \cdot (x_0 \oplus x_1) = c  \pmod{2}$ occur with probability at least $\frac{1}{2} + \frac{1}{p(\secp)}$. 
Thus, the distinguisher can distinguish $\widetilde{\cZ}_\secp^{\cB_\secp}(0)$ and $\widetilde{\cZ}_\secp^{\cB_\secp}(1)$ with probability at least $\frac{1}{2} + \frac{1}{p(\secp)}$, but this means
$$
\left\| \widetilde{\cZ}_\secp^{\cB_\secp}(0) - \widetilde{\cZ}_\secp^{\cB_\secp}(1) \right\|_{\rm tr} \geq \frac{2}{p(\lambda)}.
$$
from \Cref{thm:holevo-hesltrom}.
This violates \Cref{thm:BKMPW}.
\end{proof}

\subsection{Proof of \cref{thm:adaptive_hardcore_bit_amp}}
In this subsection, we show \cref{thm:adaptive_hardcore_bit_amp} by using \cref{thm:BKMPW_amp}.
\begin{proof}[Proof of \cref{thm:adaptive_hardcore_bit_amp}]
For the sake of contradiction,
assume that
there is a QPT adversary $\{\mathcal{A}_\secp\}_{\secp\in\N}$ such that
\begin{align}
\Pr
\left[
\begin{array}{cc}
\wedge_{ i \in [n]} f(x_i) \in \{f(x_i^0),f(x_i^1)\}\ \\
\bigwedge\ \\
\wedge_{i \in [n]} d_i \cdot (x_i^0 \oplus x_i^1) = c_i
\end{array}
:
\begin{array}{ll}
\forall i \in [n] \,: \, x_i^0\gets\bit^\ell, \, x_i^1\gets\bit^\ell, c_i\gets\bit\\
\{x_i,d_i\}_{i\in[n]} \leftarrow \mathcal{A}_\secp\left(\bigotimes_{i=1}^n\frac{\ket{x_i^0} +(-1)^{c_i}\ket{x_i^1}}{\sqrt{2}},\{f(x_i^b)\}_{i,b} \right)
\end{array}
\right] \ge \frac{1}{\poly(\secp)}
\label{assumption_parallel_hardcorebit_amp}
\end{align}
for infinitely many $\secp$.
We consider the quantum operation $\{\cZ_\secp(\cdot,\cdot,\cdot,\cdot)\}_{\secp \in \N}$ in \Cref{thm:BKMPW_amp} as 
\begin{align}
\cZ_\secp\left(x_0 \oplus x_1, f(x_0),f(x_1),\frac{\ket{x_0} + (-1)^c \ket{x_1}}{\sqrt{2}}\right) \coloneqq \left(f(x_0),f(x_1),\frac{\ket{x_0} +(-1)^c\ket{x_1}}{\sqrt{2}} \right),
\end{align}
which is trivially semantically secure with respect to the first argument. 
From such $\{\cA_\secp\}_{\secp\in\N}$ and $\{\cZ_\secp\}_{\secp\in\N}$, we construct the following QPT adversary $\{\cB_\secp\}_{\secp\in\N}$ for fixed each $(c_1,...,c_n)\in\bit^n$:
\begin{enumerate}
    \item 
    Get $\{f(x_i^b)\}_{i\in[n],b\in\bit}$ and $\bigotimes_{i\in[n]}\frac{\ket{x_i^0}+(-1)^{c_i}\ket{x_i^1}}{\sqrt{2}}$
    as input.
    \item
    Run $(\{x_i\}_{i\in[n]},\{d_i\}_{i\in[n]}) \gets \cA_\secp
    \left(\bigotimes_{i=1}^n\frac{\ket{x_i^0} +(-1)^{c_i}\ket{x_i^1}}{\sqrt{2}},\{f(x_i^b)\}_{i\in[n],b\in\bit} \right)
    $.
    \item 
    Output $\{x_i\}_{i\in[n]}$. Set its residual state as $\bigotimes_{i\in[n]}\proj{d_i}$.
\end{enumerate}
Then,
by using the notation of \cref{thm:BKMPW_amp},
we define $\widetilde{\cZ}_\secp^{\cB_\secp}(c_1,...,c_n)$.\footnote{Roughly speaking, it is
$\left(\bigotimes_{i\in[n],b\in\bit}\proj{x_i^b}\right)\otimes\left(\bigotimes_{i\in[n]}\proj{d_i}\right)$
if $f(x_i)\in\{f(x_i^0),f(x_i^1)\}$ for all $i\in[n]$,
and it is $\bot$ otherwise.}
Let us consider the following QPT distinguisher $\{\cD_\secp\}_{\secp\in\N}$:
\begin{enumerate}
    \item 
    Get $\widetilde{\cZ}_\secp^{\cB_\secp}(c_1,...,c_n)$ as input.
    \item 
   If it is $\bot$, output $\bot$.
   Otherwise, parse it as
$\left(\bigotimes_{i\in[n],b\in\bit}\proj{x_i^b}\right)\otimes\left(\bigotimes_{i\in[n]}\proj{d_i}\right)$.
\item 
Compute $c_i'\coloneqq d_i\cdot(x_i^0\oplus x_i^1)$ for each $i\in[n]$.
Output $\{c_i'\}_{i\in[n]}$. 
\end{enumerate}
Then, from \cref{assumption_parallel_hardcorebit_amp}, 
\begin{align}
\frac{1}{2^n}\sum_{(c_1,...,c_n)\in\bit^n}\Pr[(c_1,...,c_n)\gets\cD(
    \widetilde{\cZ}_\secp^{\cB_\secp}(c_1,...,c_n))]\ge\frac{1}{\poly(\secp)}
    \label{kiketu}
\end{align}
for infinitely many $\secp$.
Now we show that it contradicts \cref{thm:BKMPW_amp}.

If \cref{thm:BKMPW_amp} is correct,
there exists a negligible function $\negl$ such that
\begin{align}
    \left\|
    \widetilde{\cZ}_\secp^{\cB_\secp}(c_1,...,c_n)
    -\widetilde{\cZ}_\secp^{\cB_\secp}(0,...,0)
    \right\|_{\rm tr}\le\negl(\secp)
\end{align}
for all $(c_1,...,c_n)\in\bit^n$.
However, in that case, 
there exists a negligible function $\negl$
such that
\begin{align}
\left|
\Pr\left[(c_1,...,c_n)\gets\cD(\widetilde{\cZ}_\secp^{\cB_\secp}(c_1,...,c_n))\right]
-\Pr\left[(c_1,...,c_n)\gets\cD(\widetilde{\cZ}_\secp^{\cB_\secp}(0,...,0))\right]
\right|
\le\negl(\secp)
\label{neglsa}
\end{align}
for all
$(c_1,...,c_n)\in\bit^n$.
Then we have
\begin{align}
\frac{1}{\poly(\secp)}
&\le
\frac{1}{2^n}\sum_{(c_1,...,c_n)\in\bit^n}\Pr[(c_1,...,c_n)\gets\cD(
\widetilde{\cZ}_\secp^{\cB_\secp}(c_1,...,c_n))]\\
&\le
\frac{1}{2^n}\sum_{(c_1,...,c_n)\in\bit^n}\left(\Pr[(c_1,...,c_n)\gets\cD(
\widetilde{\cZ}_\secp^{\cB_\secp}(0,...,0))]+\negl(\secp)\right)\\
&\le
\frac{1}{2^n}\sum_{(c_1,...,c_n)\in\bit^n}\Pr[(c_1,...,c_n)\gets\cD(
\widetilde{\cZ}_\secp^{\cB_\secp}(0,...,0))]+\negl(\secp)\\
&\le
\frac{1}{2^n}
+\negl(\secp)
\end{align}
for infinitely many $\secp$,
which is the contradiction.
Here, the first inequality is from \cref{kiketu}, 
the second inequality is from \cref{neglsa},
and the last inequality is from the fact that
$\sum_{(c_1,...,c_n)\in\bit^n}\Pr[(c_1,...,c_n)\gets\cA]=1$ for any algorithm $\cA$.

\if0
\alex{Is the negl quantity inside the sum? That could be problematic no? Since we sum $2^n$ many such functions. If it's meant to be outside the sum, then we have to show that Eq.8 holds on average over $c_1,\dots,c_n$. I think this is true, but we may have to argue that we can drop the absolute value (WLOG?) and pull the expectation inside.
Also, why does the last step follow?}
\mor{I have added more explanations. I also use $2^{-n}\sum_{(c_1,...,c_n)\in\bit^n}\negl(\secp)\le\negl(\secp)$,
but I do not think it is necessary to write it.}
\takashi{We have to be careful about the order of the quantifier. 
A natural interpretation of \Cref{thm:BKMPW_amp} would be that the negligible function $\negl$ can depend on $(c_1,...,c_n)$ ($(b_1,...,b_n)$ in the statement of \Cref{thm:BKMPW_amp}). In this setting, I don't think $2^{-n}\sum_{(c_1,...,c_n)\in\bit^n}\negl(\secp)\le\negl(\secp)$ holds (or at least that's not clear to me. A related note is given in e.g., Remark 3.3 of https://eprint.iacr.org/1997/004.pdf) If we modify the statement of \Cref{thm:BKMPW_amp} to the form of "there is a negligible function $\negl$ such that for all $(b_1,...,b_n)$,..." then this problem is completely resolved. I think this version of  \Cref{thm:BKMPW_amp} is also true, and so I prefer modifying so.
}
\mor{Oh, I see. I have corrected the statement of \cref{thm:BKMPW_amp}. I think the current proof of Appendix A already shows
that version (right?).}
\takashi{Yes, I believe so.}
\fi
\end{proof}

\section{Two-Tier Tokenized Signatures}
\label{sec:2-TS}
%In this section, we introduce a variant of tokenized signatures~\cite{BenDavidSattath},
%which we call {\it two-tier tokenized signatures}.
%Intuitively, it is tokenized signatures for a single-bit message with two verification algorithms for the verification of signatures.
%One algorithm $\Ver_0$ is the verification algorithm for message 0, and it uses the public key.
%The other algorithm $\Ver_1$ is the verification algorithm for message 1, and it uses the {\it secret} key.

In this section, we will first give the formal definition of two-tier tokenized signatures (2-TS), and then show that they can be constructed from OWFs.
For the construction, we use the (amplified) adaptive hardcore bit property for OWFs (\cref{thm:adaptive_hardcore_bit_amp}).

\if0
The main difference of tokenized signatures from (publicly verifiable) quantum money is the following:
In quantum money, no QPT adversary can output two quantum states (money states) both of which are accepted by the public verification algorithm.
In tokenized signatures, on the other hand, no QPT adversary can output two classical bit strings (signatures) both of which are accepted by the public verification algorithm.
\fi

\subsection{Definition}
The formal definition is as follows.
\begin{definition}[Two-Tier Tokenized Signatures (2-TS)]
A two-tier tokenized signature scheme is a tuple \break $(\KeyGen,\StateGen,\Sign,\Ver_0,\Ver_1)$ of algorithms such that
\begin{itemize}
    \item 
    $\KeyGen(1^\secp)\to (\sk,\pk):$
    It is a QPT algorithm that, on input the security parameter $\secp$, outputs
    a classical secret key $\sk$ and a classical public key $\pk$.
    \item 
    $\StateGen(\sk)\to\psi:$
    It is a QPT algorithm that, on input $\sk$, outputs a quantum state $\psi$.
    \item 
    $\Sign(\psi,m)\to\sigma:$
    It is a QPT algorithm that, on input $\psi$ and a message $m\in\bit$, outputs a classical signature $\sigma$.
    \item 
    $\Ver_0(\pk,\sigma)\to\top/\bot:$
    It is a QPT algorithm that, on input $\pk$ and $\sigma$, outputs $\top/\bot$.
    \item 
    $\Ver_1(\sk,\sigma)\to\top/\bot:$
    It is a QPT algorithm that, on input $\sk$ and $\sigma$, outputs $\top/\bot$.
\end{itemize}
We require the following properties.

\paragraph{Correctness:}
\begin{align}
\Pr\left[\top\gets\Ver_0(\pk,\sigma):
\begin{array}{rr}
(\sk,\pk)\gets\KeyGen(1^\secp)\\
\psi\gets\StateGen(\sk)\\
\sigma\gets\Sign(\psi,0)
\end{array}
\right]\ge1-\negl(\secp)    
\end{align}
and
\begin{align}
\Pr\left[\top\gets\Ver_1(\sk,\sigma):
\begin{array}{rr}
(\sk,\pk)\gets\KeyGen(1^\secp)\\
\psi\gets\StateGen(\sk)\\
\sigma\gets\Sign(\psi,1)
\end{array}
\right]\ge1-\negl(\secp).
\end{align}

\paragraph{Security:}
For any QPT adversary $\cA$,
\begin{align}
\Pr\left[
\top\gets\Ver_0(\pk,\sigma_0)
\wedge
\top\gets\Ver_1(\sk,\sigma_1)
:
\begin{array}{rr}
(\sk,\pk)\gets\KeyGen(1^\secp)\\
\psi\gets\StateGen(\sk)\\
(\sigma_0,\sigma_1)\gets\cA(\psi,\pk)
\end{array}
\right]\le\negl(\secp).
\end{align}
\end{definition}

We can show that the following type of security, which we call {\it one-wayness}, is also satisfied by
two-tier tokenized signatures.
\begin{lemma}[One-wayness of two-tier tokenized signatures]
\label{lem:onewayness}
For any QPT adversary $\cA$,
\begin{align}
\Pr\left[
\top\gets\Ver_0(\pk,\sigma)
:
\begin{array}{rr}
(\sk,\pk)\gets\KeyGen(1^\secp)\\
\psi\gets\cA(\pk)\\
\sigma\gets\Sign(\psi,0)
\end{array}
\right]\le\negl(\secp).
\end{align}
\end{lemma}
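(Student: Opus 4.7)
The plan is to reduce one-wayness to the main security property of the 2-TS scheme by a direct black-box reduction. The key observation is that, in the main security game, the adversary is given an honestly generated state $\psi\gets\StateGen(\sk)$ together with $\pk$, and must produce two signatures $(\sigma_0,\sigma_1)$ accepted by $\Ver_0$ and $\Ver_1$ respectively. By correctness, $\Sign(\psi,1)$ already yields a signature that $\Ver_1(\sk,\cdot)$ accepts with overwhelming probability. So if a one-wayness adversary could, from $\pk$ alone, fabricate any state $\psi'$ for which $\Sign(\psi',0)$ passes $\Ver_0$ with non-negligible probability, we would immediately obtain both signatures at once.

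More concretely, suppose for contradiction that there is a QPT adversary $\cA$ and a polynomial $p$ such that, for infinitely many $\secp$, $\cA(\pk)$ outputs $\psi'$ with $\Pr[\top\gets\Ver_0(\pk,\Sign(\psi',0))]\ge 1/p(\secp)$. I will construct a QPT adversary $\cB$ against the main security game as follows. On input $(\psi,\pk)$ where $(\sk,\pk)\gets\KeyGen(1^\secp)$ and $\psi\gets\StateGen(\sk)$:
\begin{enumerate}
\item $\cB$ first computes $\sigma_1\gets\Sign(\psi,1)$ using the honest state $\psi$ that was provided to it.
\item $\cB$ then runs $\psi'\gets\cA(\pk)$ independently (using only the public key, not $\psi$), and computes $\sigma_0\gets\Sign(\psi',0)$.
\item $\cB$ outputs $(\sigma_0,\sigma_1)$.
\end{enumerate}

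By the correctness of $\Ver_1$, the signature $\sigma_1$ is accepted by $\Ver_1(\sk,\cdot)$ with probability $1-\negl(\secp)$. By the assumption on $\cA$, independently, $\sigma_0$ is accepted by $\Ver_0(\pk,\cdot)$ with probability at least $1/p(\secp)$. Since the two events concern disjoint parts of $\cB$'s computation, a union-bound-style argument gives that both events occur simultaneously with probability at least $1/p(\secp)-\negl(\secp)$, which is non-negligible. This contradicts the security property of the 2-TS scheme, completing the proof.

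I do not anticipate any technical obstacle: the main conceptual point is simply that the security experiment supplies $\psi$ for free, so it is strictly easier than the one-wayness experiment in which the adversary must produce $\psi$ itself from $\pk$. The only minor care needed is to ensure that the two invocations of $\Sign$ operate on independent states (the honest $\psi$ for signing $1$, and $\cA$'s $\psi'$ for signing $0$), which is guaranteed by construction since $\cB$ never reuses $\psi$.
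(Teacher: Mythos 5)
Your proposal is correct and matches the paper's own proof essentially verbatim: the paper constructs exactly the same reduction $\cB$ that runs $\psi'\gets\cA(\pk)$, outputs $\sigma_0\gets\Sign(\psi',0)$ together with $\sigma_1\gets\Sign(\psi,1)$ computed from the honestly supplied token, and concludes by the correctness of $\Ver_1$ and the assumed success of $\cA$. No further comment is needed.
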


\begin{proof}
Assume that there exists a QPT adversary $\cA$ such that
\begin{align}
\Pr\left[
\top\gets\Ver_0(\pk,\sigma)
:
\begin{array}{rr}
(\sk,\pk)\gets\KeyGen(1^\secp)\\
\psi\gets\cA(\pk)\\
\sigma\gets\Sign(\psi,0)
\end{array}
\right]\ge\frac{1}{\poly(\secp)}
\end{align}
for infinitely many $\secp$.
Then, from such $\cA$, we can construct a QPT adversary $\cB$ that breaks the security of the two-tier tokenized signature scheme as follows:
\begin{enumerate}
    \item 
    Get $\psi$ and $\pk$ as input.
    \item 
    Run $\psi'\gets\cA(\pk)$.
    \item 
    Run $\sigma_0\gets\Sign(\psi',0)$ and
    $\sigma_1\gets\Sign(\psi,1)$.
    \item 
    Output $(\sigma_0,\sigma_1)$.
\end{enumerate}
It is clear that $\cB$ breaks the security of the two-tier tokenized signature scheme.
\end{proof}

\subsection{Construction}
We show that 2-TS can be constructed from OWFs.
\begin{theorem}
If OWFs exist, then two-tier tokenized signatures exist.
\end{theorem}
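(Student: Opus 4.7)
The plan is to instantiate the 2-TS construction sketched in the technical overview and reduce its security to the amplified adaptive hardcore bit property for OWFs (\Cref{thm:adaptive_hardcore_bit_amp}). Let $f:\{0,1\}^{\ell(\secp)}\to\{0,1\}^{\kappa(\secp)}$ be a quantum-secure OWF and let $n=n(\secp)$ be a polynomial (a super-logarithmic value suffices). Define the scheme as follows. $\KeyGen(1^\secp)$ samples, for each $i\in[n]$, strings $x_i^0,x_i^1\gets\{0,1\}^\ell$ and a bit $c_i\gets\{0,1\}$, and outputs $\sk:=\{(x_i^0,x_i^1,c_i)\}_{i\in[n]}$ and $\pk:=\{(f(x_i^0),f(x_i^1))\}_{i\in[n]}$. $\StateGen(\sk)$ outputs $\psi:=\bigotimes_{i\in[n]}\frac{\ket{x_i^0}+(-1)^{c_i}\ket{x_i^1}}{\sqrt{2}}$. $\Sign(\psi,0)$ measures each register in the computational basis to obtain $\sigma=\{x_i\}_{i\in[n]}$; $\Sign(\psi,1)$ measures each register in the Hadamard basis to obtain $\sigma=\{d_i\}_{i\in[n]}$. $\Ver_0(\pk,\{x_i\})$ accepts iff $f(x_i)\in\{f(x_i^0),f(x_i^1)\}$ for every $i$, and $\Ver_1(\sk,\{d_i\})$ accepts iff $d_i\cdot(x_i^0\oplus x_i^1)=c_i$ for every $i$.

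Next I would verify correctness. For $m=0$, measuring $(\ket{x_i^0}+(-1)^{c_i}\ket{x_i^1})/\sqrt{2}$ in the computational basis always produces some $x_i\in\{x_i^0,x_i^1\}$, so $f(x_i)\in\{f(x_i^0),f(x_i^1)\}$ with probability $1$. For $m=1$, a direct Fourier computation shows that the Hadamard-basis measurement of $(\ket{x_i^0}+(-1)^{c_i}\ket{x_i^1})/\sqrt{2}$ is supported on the set of strings $d_i$ satisfying $d_i\cdot(x_i^0\oplus x_i^1)=c_i\pmod 2$, so $\Ver_1$ accepts with probability $1$. Both correctness statements then follow.

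For security, suppose towards contradiction that some QPT $\cA$ outputs, given $(\psi,\pk)$, a pair $(\sigma_0,\sigma_1)$ passing both $\Ver_0$ and $\Ver_1$ with non-negligible probability. Parse $\sigma_0=\{x_i\}_{i\in[n]}$ and $\sigma_1=\{d_i\}_{i\in[n]}$. Then $\{(x_i,d_i)\}_{i\in[n]}$ simultaneously satisfies $f(x_i)\in\{f(x_i^0),f(x_i^1)\}$ and $d_i\cdot(x_i^0\oplus x_i^1)=c_i$ for every $i$. This is exactly the event whose probability is bounded by $\negl(\secp)$ in \Cref{thm:adaptive_hardcore_bit_amp}, since the inputs $(\psi,\pk)$ are distributed precisely as $\bigl(\bigotimes_i(\ket{x_i^0}+(-1)^{c_i}\ket{x_i^1})/\sqrt{2},\{f(x_i^b)\}_{i,b}\bigr)$ with the $x_i^b$ and $c_i$ drawn uniformly. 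Hence $\cA$ contradicts the amplified adaptive hardcore bit property, and the scheme is secure.

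I expect the only non-routine step to be making sure the distributions seen by $\cA$ in the 2-TS security game are literally identical to those appearing in \Cref{thm:adaptive_hardcore_bit_amp}, so that the reduction is immediate; once that identification is made, all the analytic work has already been done in \Cref{sec:AHB}. The correctness computation for $m=1$ is the only other place requiring care, but it is a standard Hadamard-transform calculation.
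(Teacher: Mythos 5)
Your construction and reduction coincide with the paper's own proof: the same $n$-fold tensor of states $(\ket{x_i^0}+(-1)^{c_i}\ket{x_i^1})/\sqrt{2}$ with computational/Hadamard-basis signing, and security reduced directly to \Cref{thm:adaptive_hardcore_bit_amp} by observing that the adversary's input distribution matches that theorem exactly. The paper simply declares correctness and security ``clear,'' so your added detail on the Hadamard-basis calculation and the distribution matching is a faithful elaboration of the same argument.
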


\begin{proof}
Let $f$ be a OWF. From it, we construct a two-tier tokenized signature scheme as follows:
\begin{itemize}
     \item 
    $\KeyGen(1^\secp)\to (\sk,\pk):$
    Choose $x_i^0,x_i^1\gets\bit^\ell$ for each $i\in[n]$.
    Choose $c_i\gets\bit$ for each $i\in[n]$.
    Output $\sk\coloneqq\{c_i,x_i^0,x_i^1\}_{i\in[n]}$
    and $\pk\coloneqq\{f(x_i^0),f(x_i^1)\}_{i\in[n]}$.
    \item 
    $\StateGen(\sk)\to\psi:$
    Parse $\sk=\{c_i,x_i^0,x_i^1\}_{i\in[n]}$.
    Output $\psi\coloneqq\bigotimes_{i\in[n]}\frac{\ket{x_i^0}+(-1)^{c_i}\ket{x_i^1}}{\sqrt{2}}$. 
    \item 
    $\Sign(\psi,m)\to\sigma:$
    If $m=0$, measure $\psi$ in the computational basis to get the result $\{z_i\}_{i\in[n]}$ (where $z_i\in\bit^\ell$ for each $i\in[n]$), and output it as $\sigma$.
    If $m=1$, measure $\psi$ in the Hadamard basis to get the result $\{d_i\}_{i\in[n]}$ (where $d_i\in\bit^\ell$ for each $i\in[n]$), and output it as $\sigma$.
    \item 
    $\Ver_0(\pk,\sigma)\to\top/\bot:$
    Parse 
    $\pk=\{f(x_i^0),f(x_i^1)\}_{i\in[n]}$ and $\sigma=\{z_i\}_{i\in[n]}$. 
    If $f(z_i)\in\{f(x_i^0),f(x_i^1)\}$ for all $i\in[n]$, output $\top$.
    Otherwise, output $\bot$.
    \item 
    $\Ver_1(\sk,\sigma)\to\top/\bot:$
   Parse $\sk=\{c_i,x_i^0,x_i^1\}_{i\in[n]}$ and $\sigma=\{d_i\}_{i\in[n]}$.
    If $d_i\cdot(x_i^0\oplus x_i^1)=c_i$ for all $i\in[n]$, output $\top$. 
    Otherwise, output $\bot$.
\end{itemize}
The correctness is clear.
The security is also clear from \cref{thm:adaptive_hardcore_bit_amp}.
\end{proof}

\section{Digital Signatures with Revocable Signatures}
\label{sec:sigcert}
In this section, we define digital signatures with revocable signatures ($\mathsf{DSR\mbox{-}Sign}$).
We also show that it can be constructed from 2-TS,
and therefore from OWFs.

\subsection{Definition}
We first give its formal definition as follows.
\begin{definition}[Digital Signatures with Revocable Signatures ($\mathsf{DSR\mbox{-}Sign}$)]
A digital signature scheme with revocable signatures is a set $(\KeyGen,\Sign,\Ver,\Del,\Cert)$    
of algorithms that satisfy the following.
\begin{itemize}
    \item 
    $\KeyGen(1^\secp)\to (\sigk,\vk):$
    It is a QPT algorithm that, on input the security parameter $\secp$, outputs a classical signing key $\sigk$ and
    a classical public verification key $\vk$. 
    \item 
    $\Sign(\sigk,m)\to(\psi,\ck):$
    It is a QPT algorithm that, on input a message $m$ and $\sigk$, outputs a quantum signature $\psi$
    and a classical check key $\ck$.
    \item 
    $\Ver(\vk,\psi,m)\to\top/\bot:$
    It is a QPT algorithm that, on input $\vk$, $m$, and $\psi$, outputs $\top/\bot$.
    \item 
    $\Del(\psi)\to\cert:$
    It is a QPT algorithm that, on input $\psi$, outputs a classical certificate $\cert$.
    \item 
    $\Cert(\ck,\cert)\to\top/\bot:$
    It is a QPT algorithm that, on input $\ck$ and $\cert$, outputs $\top/\bot$.
\end{itemize}
We require the following properties.

\paragraph{Correctness:}
For any message $m$,
\begin{align}
\Pr
\left[\top\gets\Ver(\vk,\psi,m):
\begin{array}{rr}
(\sigk,\vk)\gets\KeyGen(1^\secp)\\
(\psi,\ck)\gets\Sign(\sigk,m)
\end{array}
\right]
\ge1-\negl(\secp).
\end{align}

\paragraph{Deletion correctness:}
For any message $m$,
\begin{align}
   \Pr\left[
   \top\gets\Cert(\ck,\cert):
   \begin{array}{rr}
  (\sigk,\vk)\gets\KeyGen(1^\secp)\\
  (\psi,\ck)\gets\Sign(\sigk,m)\\
   \cert\gets\Del(\psi)
  \end{array}
  \right] 
   \ge1-\negl(\secp).
\end{align}

\paragraph{Many-time deletion security:}
For any adversary $\cA$ consisting of a pair of QPT algorithms $(\cA_1,\cA_2)$:
\begin{align}
\Pr\left[\top\gets\Cert(\ck^*,\cert)\wedge \top\gets\Ver(\vk,\psi,m^*):
   \begin{array}{rr}
   (\sigk,\vk)\gets\KeyGen(1^\secp)\\
   (m^*,\st)\gets\cA_1^{\Sign(\sigk,\cdot)}(\vk)\\
    (\psi^*,\ck^*)\gets\Sign(\sigk,m^*)\\
    (\cert,\psi)\gets\cA_2^{\Sign(\sigk,\cdot)}(\st,\psi^*)\\
   \end{array}
   \right] 
   \le\negl(\secp),
   \label{manytimedeletionsecurity}
\end{align}
where $\cA$ is not allowed to query $m^*$ to the signing oracle.
\end{definition}

\begin{remark}
The above definition does not capture the situation where the adversary gets more than one signatures on $m^*$ but deleted all of them. Actually, our construction seems to also satisfy security in such a setting. However, we choose to not formalize it for simplicity. 
\end{remark}
\begin{remark}
We can define the standard EUF-CMA security as follows, but it is trivially implied by many-time deletion security,
and therefore we do not include EUF-CMA security in the definition of digital signatures with revocable signatures. 
\begin{definition}[EUF-CMA Security]
For any QPT adversary $\cA$,
\begin{align}
\Pr\left[\top\gets\Ver(\vk,\psi^*,m^*):
\begin{array}{rr}
(\sigk,\vk)\gets\KeyGen(1^\secp)\\
(m^*,\psi^*)\gets\cA^{\Sign(\sigk,\cdot)}(\vk)
\end{array}
\right]    
\le\negl(\secp),
\label{EUF-CMA}
\end{align}
where $\cA$ is not allowed to query $m^*$ to the signing oracle.
\end{definition}
\end{remark}

\if0
We define the no-query versions of EUF-CMA security and many-time deletion security, respectively,
as follows.
\begin{definition}[No-query unforgeability]
It is the same as EUF-CMA security, \cref{EUF-CMA}, except that the adversary $\cA$ cannot query
the signing oracle. \takashi{Similarly, no-query unforgeability directly follows from no-query deletion security.}

%For any QPT adversary $\cA$,
%\begin{align}
%\Pr\left[\top\gets\Ver(\vk,m^*,\psi^*):
%\begin{array}{rr}
%(\sigk,\vk)\gets\KeyGen(1^\secp)\\
%(m^*,\psi^*)\gets\cA(\vk)
%\end{array}
%\right]    
%\le\negl(\secp).
%\end{align}
\end{definition}
\fi

We define a weaker version of many-time deletion security, which we call no-query deletion security as follows.
\begin{definition}[No-Query Deletion Security]
It is the same as many-time deletion security, \cref{manytimedeletionsecurity}, except that 
$\cA$ cannot query the signing oracle.

\if0
For any adversary $\cA$ consisting of a pair of QPT algorithms $(\cA_1,\cA_2)$:
\begin{align}
   \Pr\left[\top\gets\Cert(\sigk,\cert,m^*)\wedge \top\gets\Ver(\vk,m^*,\psi):
   \begin{array}{rr}
   (\sigk,\vk)\gets\KeyGen(1^\secp)\\
    (m^*,\st)\gets\cA_1^{\Sign(\sigk,\cdot)}(\vk)\\
   \psi^*\gets\Sign(\sigk,m^*)\\
   (\cert,\psi)\gets\cA_2(\st,\psi^*)
   \end{array}
   \right] 
   \le\negl(\secp).
\end{align}
\takashi{Again, it may be better to allow $\cA$ to adaptive choose $m^*$ after getting $\vk$.}\alex{Added. Is it OK?}
\fi
\end{definition}

The no-query security notion actually implies the many-time case:
\begin{lemma}[Many-Time Deletion Security from No-Query Deletion Security]\label{lem:extend}
Assume that EUF-CMA secure digital signature schemes exist. 
Then following holds: 
if there exists a digital signature scheme with revocable signatures which satisfies no-query deletion security, 
then there is a scheme that satisfies many-time deletion security.
\end{lemma}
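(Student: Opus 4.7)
The plan is to wrap the given no-query deletion secure scheme $\Sigma_0$ with an EUF-CMA secure classical signature scheme $\SIG = (\SIG.\KeyGen, \SIG.\Sign, \SIG.\Ver)$ via the standard ``certify a fresh ephemeral key'' transformation. The outer key pair is simply $(\ssk, \svk) \gets \SIG.\KeyGen(1^\secp)$, which plays the role of $(\sigk, \vk)$. To sign a message $m$, one freshly generates an ephemeral inner key pair $(\sigk_0, \vk_0) \gets \Sigma_0.\KeyGen(1^\secp)$, computes $(\psi_0, \ck_0) \gets \Sigma_0.\Sign(\sigk_0, m)$, and classically signs $m \concat \vk_0$ under $\ssk$ to obtain $\sigma_{\mathrm{cl}}$. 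The quantum signature is $\psi := (\psi_0, \vk_0, \sigma_{\mathrm{cl}})$ with check key $\ck := \ck_0$. Verification requires both $\Sigma_0.\Ver(\vk_0, \psi_0, m) = \top$ and $\SIG.\Ver(\svk, m \concat \vk_0, \sigma_{\mathrm{cl}}) = \top$, while $\Del$ and $\Cert$ just forward to their $\Sigma_0$ counterparts. Correctness and deletion correctness are then inherited directly from $\Sigma_0$ and $\SIG$.

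For security, suppose $\cA = (\cA_1, \cA_2)$ breaks many-time deletion security with non-negligible advantage. Write $\vk_0^*$ for the ephemeral inner verification key embedded inside the challenge signature on $m^*$, and $\vk_0''$ for the inner key appearing in the signature $\psi$ output by $\cA_2$. I split on whether $\vk_0'' = \vk_0^*$ (Case~A) or $\vk_0'' \neq \vk_0^*$ (Case~B); at least one of the two events co-occurs with $\cA$'s win with non-negligible probability. In Case~A a reduction $\cB$ to no-query deletion security of $\Sigma_0$ obtains $\vk_0^*$ from the inner challenger, freshly generates its own $\SIG$ keys, and simulates every one of $\cA$'s signing queries honestly by locally running $\Sigma_0.\KeyGen$, $\Sigma_0.\Sign$, and classically signing under its own $\ssk$. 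At the challenge phase $\cB$ forwards $m^*$ to the inner challenger, receives $\psi_0^*$, attaches $\vk_0^*$ together with a fresh classical signature on $m^* \concat \vk_0^*$, and relays the bundle to $\cA_2$. Because $\Del$ and $\Cert$ pass through to $\Sigma_0$, the components $(\cert, \psi_0)$ extracted from $\cA$'s winning output satisfy $\Sigma_0.\Cert(\ck_0^*, \cert) = \top$ and $\Sigma_0.\Ver(\vk_0^*, \psi_0, m^*) = \top$, so $\cB$ wins. In Case~B a reduction $\cB'$ to EUF-CMA security of $\SIG$ embeds the external $\svk$ as the outer $\vk$, answers signing queries by running $\Sigma_0$ locally and querying its $\SIG$ oracle on $m_i \concat \vk_0^{(i)}$, and queries the oracle once more on $m^* \concat \vk_0^*$ to build the challenge; the component $\sigma_{\mathrm{cl}}''$ extracted from $\cA_2$'s output is then a valid $\SIG$ signature on $m^* \concat \vk_0''$.

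The main obstacle is confirming that $m^* \concat \vk_0''$ constitutes a genuinely fresh forgery in Case~B. Every $\SIG$ oracle query $\cB'$ makes is either on a pair $m_i \concat \vk_0^{(i)}$ with $m_i \neq m^*$ (since $\cA$ is forbidden to query $m^*$) or on the single pair $m^* \concat \vk_0^*$; in Case~B the target $m^* \concat \vk_0''$ matches neither, because $\vk_0'' \neq \vk_0^*$ by assumption. Thus $\cB'$ indeed forges. Combining the two cases, $\cA$'s non-negligible advantage forces at least one of $\cB$ or $\cB'$ to succeed with non-negligible probability, contradicting either no-query deletion security of $\Sigma_0$ or EUF-CMA security of $\SIG$. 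A small bookkeeping point to verify along the way is that all of $\cA$'s signing-oracle responses remain honestly distributed throughout both reductions, which is immediate since the reductions execute the honest $\Sign$ procedure (possibly via the external $\SIG$ oracle) for each query.
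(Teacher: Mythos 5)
Your proposal is correct and follows essentially the same route as the paper: you wrap the no-query-secure scheme with an EUF-CMA classical signature that certifies a fresh ephemeral inner verification key concatenated with the message, and then argue that a many-time adversary either reuses the challenge inner key (breaking no-query deletion security) or produces a forgery on a new $m^* \concat \vk_0''$ (breaking EUF-CMA of the outer scheme). Your explicit two-case split is just a slightly more spelled-out version of the paper's inline appeal to EUF-CMA security to conclude that the adversary's returned inner key must equal the challenge one.
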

\begin{proof}
Let $(\mathsf{NQ}.\KeyGen,\mathsf{NQ}.\Sign,\mathsf{NQ}.\Ver,\mathsf{NQ}.\Del,\mathsf{NQ}.\Cert)$ be a digital signature scheme with
revocable signatures that satisfies no-query deletion security.
Let $(\mathsf{MT}.\KeyGen,\mathsf{MT}.\Sign,\mathsf{MT}.\Ver)$ be a plain EUF-CMA secure digital signature scheme. 
From them, we can construct a digital signature scheme 
$\Sigma \coloneqq (\KeyGen,\Sign,\Ver,\Del,\Cert)$ 
with revocable signatures that satisfies
many-time deletion security 
as follows.
\begin{itemize}
    \item 
    $\KeyGen(1^\secp)\to (\sigk,\vk):$
    Run $(\mathsf{mt}.\sigk,\mathsf{mt}.\vk)\gets\mathsf{MT}.\KeyGen(1^\secp)$.
    Output $\sigk\coloneqq\mathsf{mt}.\sigk$ and $\vk\coloneqq\mathsf{mt}.\vk$.
    \item 
    $\Sign(\sigk,m)\to (\psi,\mathsf{ck}):$
    Parse $\sigk=\mathsf{mt}.\sigk$.
    Run $(\mathsf{nq}.\sigk,\mathsf{nq}.\vk)\gets\mathsf{NQ}.\KeyGen(1^\secp)$.
    Run $(\phi,\mathsf{nq}.\ck)\gets\mathsf{NQ}.\Sign(\mathsf{nq}.\sigk,m)$.
    Run $\sigma \gets \mathsf{MT}.\Sign(\mathsf{mt}.\sigk,\mathsf{nq}.\vk\|m)$. 
    Output $\psi\coloneqq(\phi,\sigma,\mathsf{nq}.\vk)$ 
    and $\mathsf{ck} \coloneqq \mathsf{nq.ck}$.
    \item 
    $\Ver(\vk,\psi,m)\to\top/\bot:$
    Parse $\vk=\mathsf{mt}.\vk$ and
    $\psi=(\phi,\sigma,\mathsf{nq}.\vk)$.
    Run $\mathsf{MT}.\Ver(\mathsf{mt}.\vk,\sigma,\mathsf{nq}.\vk\|m)$.
    If the output is $\bot$, output $\bot$ and abort.
    Run $\mathsf{NQ}.\Ver(\mathsf{nq}.\vk,\phi,m)$.
    If the output is $\top$, output $\top$. Otherwise, output $\bot$.
    \item 
    $\Del(\psi)\to\cert:$
    Parse $\psi=(\phi,\sigma,\mathsf{nq}.\vk)$.
    Run $\cert'\gets\mathsf{NQ}.\Del(\phi)$.
    Output $\cert\coloneqq\cert'$.
    \item 
    $\Cert(\mathsf{ck},\cert)\to\top/\bot:$
    Parse $\mathsf{ck} = \mathsf{nq}.\ck$. 
    Run $\mathsf{NQ}.\Cert(\mathsf{nq}.\ck,\cert)$, and output its output.
\end{itemize}

\if0
First, we show that $\Sigma$ satisfies EUF-CMA security. In other words, we show that
if EUF-CMA security of $\Sigma$ is broken, then
EUF-CMA security of the signature scheme $\mathsf{MT}$ is broken.
Let $\cA$ be a QPT adversary such that
\begin{align}\label{eq:Sigma-advantage}
\Pr\left[\top\gets\Ver(\vk,\psi^*,m^*):
\begin{array}{rr}
(\sigk,\vk)\gets\KeyGen(1^\secp)\\
(m^*,\psi^*)\gets\cA^{\Sign(\sigk,\cdot)}(\vk)
\end{array}
\right]    
\ge \frac{1}{\poly(\secp)},
\end{align}
for infinitely many $\secp$, 
where $\cA$ does not query the signing oracle on $m^*$.

From such $\cA$, we construct the following QPT adversary $\cB$ that breaks EUF-CMA security of the scheme $\mathsf{MT}$:
Let $\cC$ be the challenger of the security game of the EUF-CMA security.
\begin{enumerate}
\item 
$\cC$ runs $(\mathsf{mt}.\sigk,\mathsf{mt}.\vk)\gets\mathsf{MT}.\KeyGen(1^\secp)$,
and 
sends $\mathsf{mt}.\vk$ to $\cB$.
    \item $\cB$ runs 
    $(m^*,\psi^*)\gets\cA^{\Sign(\sigk,\cdot)}(\mathsf{mt}.\vk)$. 
    Whenever $\cA$ makes a query $m$ to 
    $\Sign(\sigk,\cdot)$, $\cB$ does the following:
    \begin{enumerate}
        \item Sample $(\mathsf{nq}.\sigk,\mathsf{nq}.\vk)\gets\mathsf{NQ}.\KeyGen(1^\secp)$.

        \item Compute $(\phi,\mathsf{nq}.\ck)\gets\mathsf{NQ}.\Sign(\mathsf{nq}.\sigk,m)$.
        
        \item Query $\mathsf{nq}.\vk\|m$ to the oracle $\mathsf{MT}.\Sign(\mathsf{mt}.\sigk,\cdot)$, and receive $\sigma$. 

        \item Return $\psi\coloneqq (\phi,\sigma,\mathsf{nq}.\vk)$ and $\ck\coloneqq\mathsf{nq}.\ck$.
    \end{enumerate}
    \item Parse $\psi^*=(\phi^*,\sigma^*,\mathsf{nq}.\vk^*)$.
    Output $\mathsf{nq}.\vk^*\|m^*$ and $\sigma^*$.
\end{enumerate}
From \Cref{eq:Sigma-advantage}, 
\begin{align}
\Pr[\top\gets\mathsf{MT}.\Ver(\mathsf{mt}.\vk,\sigma^*,\mathsf{nq}.\vk^*\|m^*)]\ge\frac{1}{\poly(\secp)}
\end{align}
is satisfied,
which means that $\cB$ breaks EUF-CMA security of the scheme $\mathsf{MT}$.
\alex{If we wanted to break NQ security instead of MT security, a similar reduction as below would work too, no? }

\begin{comment}
To see why the no-query unforgeability of $\mathsf{NQ}$ is broken, consider the following QPT algorithm $\cB$: On input $\mathsf{nq}.\vk$, where $(\mathsf{nq}.\sigk,\mathsf{nq}.\vk)\gets\mathsf{NQ}.\KeyGen(1^\secp)$, the algorithm $\cB(\mathsf{nq}.\vk)$ proceeds as follows:
\begin{enumerate}
    \item $\cB$ samples $(\sigk,\vk)\gets\mathsf{MT}.\KeyGen(1^\secp)$ and simulates the algorithm $\cA^{\Sign(\sigk,\cdot)}(\vk)$; whenever the algorithm $\cA$ makes a query $m$ to $\Sign(\sigk,\cdot)$, the procedure $\cB$ does the following:
    \begin{itemize}
        
        \item $\cB$ submits the query $m$ to the oracle $\mathsf{NQ}.\Sign(\mathsf{nq}.\sigk,\cdot)$ and receives $\psi$; 

        \item $\cB$ samples  $s \gets (\mathsf{MT}.\Sign(\sigk,\vk),\mathsf{nq}.\vk)$;

        \item $\cB$ returns $\sigma\coloneqq(\psi,s)$.
    \end{itemize}
    \item Once $\cA^{\Sign(\sigk,\cdot)}(\vk)$ returns $(m^*,\sigma^*)$ with $\sigma^*=(\psi^*,s^*)$, $\cB$ outputs $(m^*,\psi^*)$.
\end{enumerate}
From \Cref{eq:Sigma-advantage}, it follows that $\top\gets\Ver(\vk,m^*,\sigma^*)$ occurs with non-negligible probability, and thus $\mathsf{NQ}.\Ver(\mathsf{nq}.\vk,m^*,\psi^*)$ outputs $\top$ with non-negligible probability. Because $\cA$ does not query the signing oracle on $m^*$, this breaks the no-query unforgeability of $\mathsf{NQ}$. 
\end{comment}
\fi

We show that 
$\Sigma$ satisfies many-time deletion security.
In other words, we show that
if the many-time deletion security of $\Sigma$ is broken,
then either the no-query deletion security of the digital signature scheme $\mathsf{NQ}$ is broken or
the EUF-CMA security of the digital signature scheme $\mathsf{MT}$ is broken.
Assume that
there exists a pair of QPT algorithms $\cA\coloneqq(\cA_1,\cA_2)$ such that
\begin{align}
\Pr\left[\top\gets\Cert(\ck^*,\cert)\wedge \top\gets\Ver(\vk,\psi,m^*):
   \begin{array}{rr}
   (\sigk,\vk)\gets\KeyGen(1^\secp)\\
   (m^*,\st)\gets\cA_1^{\Sign(\sigk,\cdot)}(\vk)\\
    (\psi^*,\ck^*)\gets\Sign(\sigk,m^*)\\
    (\cert,\psi)\gets\cA_2^{\Sign(\sigk,\cdot)}(\st,\psi^*)\\
   \end{array}
   \right] 
   \ge\frac{1}{\poly(\secp)}
   \label{manytimedeletionsecurity_break}
\end{align}
for infinitely many $\secp$,
where $\cA$ is not allowed to query $m^*$ to the signing oracle.
From such $\cA$, we construct a QPT adversary $\cB$ that breaks the no-query deletion security of the scheme $\mathsf{NQ}$ as follows:
    Let $\cC$ be the challenger of the security game of the no-query deletion security.
\begin{enumerate}
    \item 
    $\cC$ runs $(\mathsf{nq}.\sigk^*,\mathsf{nq}.\vk^*)\gets\mathsf{NQ}.\KeyGen(1^\secp)$.
    \item 
    $\cC$ sends $\mathsf{nq}.\vk^*$ to $\cB$.
    \item 
    $\cB$ runs $(\mathsf{mt}.\sigk,\mathsf{mt}.\vk)\gets\mathsf{MT}.\KeyGen(1^\secp)$.
    \item 
    $\cB$ runs $(m^*,\st)\gets\cA_1^{\Sign(\sigk,\cdot)}(\mathsf{mt}.\vk)$.
    When $\cA_1$ queries $m$ to the signing oracle, $\cB$ simulates it as follows:
    \begin{enumerate}
        \item 
        Run $(\mathsf{nq}.\sigk,\mathsf{nq}.\vk)\gets\mathsf{NQ}.\KeyGen(1^\secp)$.
        \item 
        Run $(\phi,\mathsf{nq}.\ck)\gets\mathsf{NQ}.\Sign(\mathsf{nq}.\sigk,m)$.
        \item 
        Run $\sigma\gets\mathsf{MT}.\Sign(\mathsf{mt}.\sigk,\mathsf{nq}.\vk\|m)$.
        \item 
        Output $\psi\coloneqq(\phi,\sigma,\mathsf{nq}.\vk)$ and $\ck\coloneqq \mathsf{nq}.\ck$.
    \end{enumerate}
    \item 
    $\cB$ sends $m^*$ to $\cC$.
    \item 
    $\cC$ runs $(\phi^*,\mathsf{nq}.\ck^*)\gets\mathsf{NQ}.\Sign(\mathsf{nq}.\sigk^*,m^*)$, and sends $\phi^*$ to $\cB$.
   \item 
    $\cB$ runs $\sigma^*\gets\mathsf{MT}.\Sign(\mathsf{mt}.\sigk,\mathsf{nq}.\vk^*\|m^*)$.
    \item 
    $\cB$ runs $(\cert,\psi)\gets\cA_2^{\Sign(\sigk,\cdot)}((\phi^*,\sigma^*,\mathsf{nq}.\vk^*))$.
        When $\cA_2$ queries $m$ to the signing oracle, $\cB$ simulates it as follows:
    \begin{enumerate}
        \item 
        Run $(\mathsf{nq}.\sigk,\mathsf{nq}.\vk)\gets\mathsf{NQ}.\KeyGen(1^\secp)$.
        \item 
        Run $(\phi,\mathsf{nq}.\ck)\gets\mathsf{NQ}.\Sign(\mathsf{nq}.\sigk,m)$.
        \item 
        Run $\sigma\gets\mathsf{MT}.\Sign(\mathsf{mt}.\sigk,\mathsf{nq}.\vk\|m)$.
        \item 
        Output $\psi\coloneqq(\phi,\sigma,\mathsf{nq}.\vk)$ and $\ck\coloneqq \mathsf{nq}.\ck$.
    \end{enumerate}
    \item
    Parse $\psi=(\phi,\sigma,\eta)$.
    $\cB$ outputs $\cert$ and $\phi$.
\end{enumerate}
Due to the EUF-CMA security of the scheme $\mathsf{MT}$, 
$\top\gets\mathsf{MT}.\Ver(\mathsf{mt}.\vk,\sigma,\eta\|m^*)$
occurs only when $\eta=\mathsf{nq}.\vk^*$ except for a negligible probability.
Therefore, \cref{manytimedeletionsecurity_break} means that both
$\Pr[\top\gets\mathsf{NQ}.\Ver(\mathsf{nq}.\vk^*,\phi,m^*)]$
and
$\Pr[\top\gets\mathsf{NQ}.\Cert(\mathsf{nq}.\ck^*,\cert)]$
are non-negligible for the above $\cB$,
which breaks the no-query deletion security of the scheme $\mathsf{NQ}$.

\if0
\takashi{
For proving it, I believe we need to assume one-query unforgeability of $\mathsf{NQ}$. I believe it's easy to show that our construction satisfies it. 
Alternatively, if we slightly modify the scheme so that $\sigma$ is a signature on $m\concat\mathsf{nq}.\vk$ (instead of just on $\mathsf{nq}.\vk$), then I believe no-query unforgeability suffices. 
Intuitively, this is shown as follows: If $\mathsf{nq}.\vk=\mathsf{nq}.\vk^*$, a straightforward reduction to no-query deletion security works. 
Otherwise, the adversary never obtains a signature of $\mathsf{MT}$ on any message of the form $m^*\concat \ast$.
So the adversary's winning probability is  negligible by the EUF-CMA security of $\mathsf{MT}$.
}
\fi
\end{proof}

\subsection{Construction}
Here we show the following result.
\begin{theorem}
\label{thm:revocablesignatures}
If two-tier tokenized signatures exist, then digital signatures with revocable signatures that satisfy
no-query deletion security exist.    
\end{theorem}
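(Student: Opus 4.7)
The plan is to embed a 2-TS token directly as the quantum signature of $\mathsf{DSR\mbox{-}Sign}$, so that producing a valid signature corresponds to signing the bit $0$ with the 2-TS token while producing a valid deletion certificate corresponds to signing the bit $1$. Because the no-query setting does not grant the adversary access to a signing oracle, I can safely let the signature ignore the message $m$, avoiding the need for an underlying classical signature scheme at this step.

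Concretely, let $(\mathsf{TT}.\KeyGen,\mathsf{TT}.\StateGen,\mathsf{TT}.\Sign,\mathsf{TT}.\Ver_0,\mathsf{TT}.\Ver_1)$ be a 2-TS scheme. I would define $\mathsf{DSR\mbox{-}Sign}$ as follows: $\KeyGen(1^\secp)$ runs $(\sk,\pk)\gets \mathsf{TT}.\KeyGen(1^\secp)$ and outputs $(\sigk,\vk):=(\sk,\pk)$; $\Sign(\sigk,m)$ computes $\phi\gets \mathsf{TT}.\StateGen(\sk)$ and outputs $(\psi,\ck):=(\phi,\sk)$, ignoring $m$; $\Ver(\vk,\psi,m)$ computes $\sigma_0\gets \mathsf{TT}.\Sign(\psi,0)$ and outputs $\mathsf{TT}.\Ver_0(\vk,\sigma_0)$; $\Del(\psi)$ outputs $\cert:=\mathsf{TT}.\Sign(\psi,1)$; and $\Cert(\ck,\cert)$ outputs $\mathsf{TT}.\Ver_1(\ck,\cert)$. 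Both correctness and deletion correctness are then immediate from the correctness of $\mathsf{TT}.\Ver_0$ and $\mathsf{TT}.\Ver_1$, respectively.

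For no-query deletion security, I would argue by reduction to the 2-TS security game. Suppose a QPT adversary $\cA=(\cA_1,\cA_2)$ wins the no-query game with non-negligible probability. I would construct a reduction $\cB$ that, on input $(\phi^*,\pk)$ with $(\sk,\pk)\gets \mathsf{TT}.\KeyGen(1^\secp)$ and $\phi^*\gets \mathsf{TT}.\StateGen(\sk)$, runs $\cA_1(\pk)$ (which makes no signing queries) to obtain $(m^*,\st)$, then runs $\cA_2(\st,\phi^*)$ to obtain $(\cert,\psi)$. $\cB$ computes $\sigma_0\gets \mathsf{TT}.\Sign(\psi,0)$, sets $\sigma_1:=\cert$, and outputs $(\sigma_0,\sigma_1)$. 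Whenever $\cA$ wins, $\Cert(\ck^*,\cert)=\top$ yields $\mathsf{TT}.\Ver_1(\sk,\sigma_1)=\top$ since $\ck^*=\sk$, and $\Ver(\vk,\psi,m^*)=\top$ yields $\mathsf{TT}.\Ver_0(\pk,\sigma_0)=\top$ since $\vk=\pk$, so $\cB$ breaks 2-TS security with the same advantage.

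The main subtlety I expect is checking that no additional structure is needed from the $\psi$ produced by $\cA_2$: the winning condition forces $\psi$ to be a state on which applying $\mathsf{TT}.\Sign(\cdot,0)$ yields a signature accepted by $\mathsf{TT}.\Ver_0(\pk,\cdot)$, which is precisely what the reduction requires, so this is automatic. No cryptographic hardness gap arises because the message $m^*$ plays no role in the construction and the reduction is essentially tight. The upgrade to the full many-time deletion security of \Cref{def:dswrsk} then follows by invoking \Cref{lem:extend}.
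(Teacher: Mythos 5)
Your proposal is correct, but it takes a slightly different route from the paper. The paper instantiates \emph{two} independent 2-TS key pairs $(\sk_0,\pk_0)$, $(\sk_1,\pk_1)$, lets the signature on bit $m$ be a token generated under $\sk_m$ and verified against $\pk_m$, and in the security reduction must guess in advance which bit $m^*$ the adversary will choose (embedding the challenge token in slot $r$ and aborting if $r\neq m^*$), which costs a factor of $2$ in the advantage; it also only handles single-bit messages directly and appeals to parallel repetition and universal one-way hash functions for longer messages. You instead use a single 2-TS instance and let the scheme ignore $m$ entirely, which makes the reduction tight and message-length-agnostic. This is legitimate here because the only security property the theorem claims is no-query deletion security, whose winning condition evaluates $\Ver(\vk,\psi,m^*)$ on the \emph{same} message the adversary was issued a signature for, so no message binding is needed at this layer; and the downstream use in \cref{lem:extend} restores binding by having the outer EUF-CMA scheme sign $\mathsf{nq}.\vk\|m$, so the composition still goes through with your message-independent building block. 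The trade-off is that your standalone scheme has no unforgeability whatsoever (every signature verifies for every message), whereas the paper's at least distinguishes the two message bits; the paper's choice keeps the object closer in spirit to a signature scheme, while yours buys a simpler construction and a tighter, guess-free reduction. One cosmetic slip: the many-time upgrade target is the $\mathsf{DSR\mbox{-}Sign}$ definition, not \cref{def:dswrsk}, though the invocation of \cref{lem:extend} is otherwise as intended.
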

\if0
\alex{Certified deletion but with privately-verifiable deletion certificates right?}
\alex{Can we get publicly-verifiable deletion too? This may need different techniques... I suspect it's hard though and may be related to public-key QM}
\takashi{I believe  publicly-verifiable deletion schemes directly imply public-key QM and thus it's likely to need iO. 
If we can use iO, then it would be easy to get publicly-verifiable deletion scheme by the same construction using (not two-tier) tokenized signatures which are known from iO.
}
\fi

From \cref{lem:extend}, it also means the following:
\begin{corollary}
Digital signatures with revocable signatures (that satisfy many-time deletion security)
exist if two-tier tokenized signatures and EUF-CMA secure digital signatures exist.
\end{corollary}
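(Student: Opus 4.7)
The plan is to realize the construction sketched in the technical overview as a direct wrapper around 2-TS. On input $1^\secp$, $\KeyGen$ samples $(\mathsf{2ts}.\sk,\mathsf{2ts}.\pk)\gets\mathsf{2TS}.\KeyGen(1^\secp)$ and outputs $\sigk:=\mathsf{2ts}.\sk$ together with $\vk:=\mathsf{2ts}.\pk$. The algorithm $\Sign(\sigk,m)$ ignores $m$, runs $\psi\gets\mathsf{2TS}.\StateGen(\mathsf{2ts}.\sk)$, and outputs $(\psi,\ck)$ with $\ck:=\mathsf{2ts}.\sk$. Verification $\Ver(\vk,\psi,m)$ first produces $\sigma_0\gets\mathsf{2TS}.\Sign(\psi,0)$ and outputs $\mathsf{2TS}.\Ver_0(\vk,\sigma_0)$; deletion $\Del(\psi)$ returns $\cert\gets\mathsf{2TS}.\Sign(\psi,1)$; and $\Cert(\ck,\cert)$ outputs $\mathsf{2TS}.\Ver_1(\ck,\cert)$. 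Correctness and deletion correctness are inherited directly from the correctness of the underlying 2-TS scheme, since an honestly generated token yields a valid signature on $0$ (verified by $\Ver_0$) and a valid signature on $1$ (verified by $\Ver_1$) with overwhelming probability.

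The heart of the argument is no-query deletion security, which I would establish via a one-step reduction to 2-TS security. Suppose a QPT adversary $\cA=(\cA_1,\cA_2)$ wins the no-query deletion game with non-negligible probability. I build a 2-TS adversary $\cB$ as follows: on input a 2-TS challenge $(\pk,\psi^\star)$, set $\vk:=\pk$, invoke $(m^\star,\mathsf{st})\gets\cA_1(\vk)$, then run $(\cert,\psi)\gets\cA_2(\mathsf{st},\psi^\star)$, compute $\sigma_0\gets\mathsf{2TS}.\Sign(\psi,0)$, and output $(\sigma_0,\cert)$. Because the joint distribution of $(\mathsf{2ts}.\sk,\mathsf{2ts}.\pk,\mathsf{2TS}.\StateGen(\mathsf{2ts}.\sk))$ in the honest execution of the no-query deletion game coincides exactly with what the 2-TS challenger hands to $\cB$, the simulation is perfect. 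Under this correspondence, the two winning conditions $\Cert(\ck^\star,\cert)=\top$ and $\Ver(\vk,\psi,m^\star)=\top$ unfold to $\mathsf{2TS}.\Ver_1(\mathsf{2ts}.\sk,\cert)=\top$ and $\mathsf{2TS}.\Ver_0(\mathsf{2ts}.\pk,\sigma_0)=\top$, which is precisely the event that $\cB$ wins the 2-TS game, contradicting the assumed security of 2-TS.

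The only conceptual point worth flagging is that the message $m^\star$ never enters $\Sign$ or $\Ver$, so the bare scheme is trivially forgeable on its own; this is by design, since \cref{lem:extend} lifts any no-query deletion-secure scheme to one satisfying many-time deletion security by composing it with an external EUF-CMA signature scheme. Consequently, no additional binding between $m$ and the 2-TS instance is required at this layer. I do not expect any serious obstacle in executing the plan: the reduction never needs to know $\ck^\star=\mathsf{2ts}.\sk$ (it only forwards $\psi^\star$ and $\mathsf{st}$ to $\cA_2$), and the rest of the argument reduces to aligning the two pairs of winning conditions. The corollary then follows immediately by combining this theorem with \cref{lem:extend} and the construction of 2-TS from OWFs.
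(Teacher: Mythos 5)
Your proposal is correct, and for the corollary itself it follows the same top-level route as the paper: build a scheme satisfying no-query deletion security from 2-TS, then invoke \cref{lem:extend} to lift it to many-time deletion security using an outer EUF-CMA signature. Where you genuinely diverge is in the inner construction. The paper's scheme (proof of \cref{thm:revocablesignatures}) samples \emph{two} 2-TS key pairs $(\sk_0,\pk_0),(\sk_1,\pk_1)$ at $\KeyGen$ time and lets the single-bit message select which instance is used, so the message enters $\Ver$ through the choice of $\pk_m$; its reduction must guess the challenge bit (a factor-$2$ loss), and extending to longer messages requires parallel repetition plus universal one-way hash functions. Your scheme uses a single 2-TS instance, ignores the message entirely, and gets a tight, guess-free reduction whose winning conditions are syntactically those of the 2-TS security game; it also handles arbitrary message spaces for free. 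The point you flag is exactly the right one and is what makes this legitimate: although your bare scheme is not EUF-CMA secure (a signature on $m$ verifies under any $m'$), the no-query deletion game gives the adversary only the one challenge token on $m^*$ and no signing oracle, so producing a surviving $\psi$ passing $\Ver$ together with a valid $\cert$ is precisely producing both a $\Ver_0$- and a $\Ver_1$-accepted 2-TS signature (cf.\ also \cref{lem:onewayness}); and the proof of \cref{lem:extend} demands nothing from the inner scheme beyond no-query deletion security, since the binding of the message to the inner instance is carried entirely by the outer signature on $\mathsf{nq}.\vk\concat m$. In short, both arguments are sound; the paper's inner scheme attempts some message binding at the cost of a messier reduction, while yours delegates all message binding to \cref{lem:extend} and is simpler and tighter.
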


\begin{proof}[Proof of \cref{thm:revocablesignatures}]
Here, we construct the scheme for the single-bit message space. 
It is clear that this can be extended to any fixed multi-bit message space case by the parallel execution of the protocol.
Moreover, by using universal one-way hash functions, 
it can be extended to unbounded poly-length message space case~\cite{STOC:NaoYun89}.

Let $(\mathsf{TS}.\KeyGen,\mathsf{TS}.\StateGen,\mathsf{TS}.\Sign,\mathsf{TS}.\Ver_0,\mathsf{TS}.\Ver_1)$ be a two-tier tokenized signature scheme.
From it, we construct a digital signature scheme with revocable signatures $\Sigma\coloneqq(\KeyGen,\Sign,\Ver,\Del,\Cert)$ 
that satisfies no-query deletion security
for the single bit message space as follows.
\begin{itemize}
    \item 
    $\KeyGen(1^\secp)\to(\sigk,\vk):$
    Run $(\sk_0,\pk_0)\gets\mathsf{TS}.\KeyGen(1^\secp)$.
    Run $(\sk_1,\pk_1)\gets\mathsf{TS}.\KeyGen(1^\secp)$.
    Output $\sigk\coloneqq(\sk_0,\sk_1)$ and $\vk\coloneqq(\pk_0,\pk_1)$.
    \item 
    $\Sign(\sigk,m)\to(\psi,\ck):$
    Parse $\sigk=(\sk_0,\sk_1)$.
    Run $\psi'\gets\mathsf{TS}.\StateGen(\sk_m)$.
    Output $\psi\coloneqq\psi'$ and $\ck\coloneqq \sk_m$.
    \item 
    $\Ver(\vk,\psi,m)\to\top/\bot:$
    Parse $\vk\coloneqq(\pk_0,\pk_1)$.
    Run $\sigma\gets\mathsf{TS}.\Sign(\psi,0)$.
    Run $\mathsf{TS}.\Ver_0(\pk_m,\sigma)$, and output its output.\footnote{The verification algorithm destroys the signature, but it can be done in a non-destructive way by 
    coherently applying this procedure and then doing the uncomputation.}

    \item 
    $\Del(\psi)\to\cert:$
    Run $\sigma\gets\mathsf{TS}.\Sign(\psi,1)$, and output $\cert\coloneqq \sigma$.
    \item 
    $\Cert(\ck,\cert)\to\top/\bot:$
    Parse $\ck=\sk_m$.
    Run $\mathsf{TS}.\Ver_1(\sk_m,\cert)$, and output its output.
\end{itemize}
Correctness and the deletion correctness are clear.
Let us show the no-query deletion security. Assume that 
there is a pair of QPT algorithms $(\cA_1,\cA_2)$  such that
\begin{align}
\Pr\left[\top\gets\Cert(\ck^*,\cert)\wedge \top\gets\Ver(\vk,\psi,m^*):
   \begin{array}{rr}
   (\sigk,\vk)\gets\KeyGen(1^\secp)\\
   (m^*,\st)\gets\cA_1(\vk)\\
    (\psi^*,\ck^*)\gets\Sign(\sigk,m^*)\\
    (\cert,\psi)\gets\cA_2(\st,\psi^*)\\
   \end{array}
   \right] 
   \ge\frac{1}{\poly(\secp)}
   \label{noquerydeletionsecurity}
\end{align}
for infinitely many $\secp$.
From $\cA$, we can construct a QPT adversary $\cB$ that breaks the original two-tier tokenized signature scheme as follows:
\begin{enumerate}
    \item 
    Get $\psi^*$ and $\pk$ as input.
    \item 
    Run $(\sk',\pk')\gets \mathsf{TS}.\KeyGen(1^\secp)$.
    Choose $r\gets\bit$.
    If $r=0$, set $\vk\coloneqq(\pk,\pk')$.
    If $r=1$, set $\vk\coloneqq(\pk',\pk)$.
    \item
   Run $(m^*,\st)\gets\cA_1(\vk)$. 
   If $r\neq m^*$, output $\bot$ and abort.
   \item 
   Run $(\cert,\psi)\gets\cA_2(\mathsf{st},\psi^*)$.
    \item 
    Run $\sigma_0\gets\mathsf{TS}.\Sign(\psi,0)$.
    Define $\sigma_1\coloneqq \cert$.
    \item 
    Output $(\sigma_0,\sigma_1)$.
\end{enumerate}
It is clear that $\Pr[\mbox{$\cB$ breaks the two-tier tokenized signature scheme}]\ge
\frac{1}{2}\Pr[\mbox{$\cA$ breaks $\Sigma$}]$.
Therefore, 
from \cref{noquerydeletionsecurity},
$\cB$ breaks the two-tier tokenized signature scheme.
\end{proof}

\if0
\begin{proof}
Let $f:\bit^\ell\to\bit^\kappa$ be a OWF. We construct a quantum digital signature scheme with deletable signatures as follows:
\begin{itemize}
     \item 
    $\Gen(1^\secp)\to (\sk,\vk):$
    Choose $x_i^0,x_i^1\gets\bit^\ell$ for each $i\in[n]$.
    Choose $c_i\gets\bit$ for each $i\in[n]$.
    Output $\sk\coloneqq\{c_i,x_i^0,x_i^1\}_{i\in[n]}$
    and $\vk\coloneqq\{f(x_i^0),f(x_i^1)\}_{i\in[n]}$.
    \item 
    $\Sign(\sk,m)\to\sigma:$
    Parse $\sk=\{c_i,x_i^0,x_i^1\}_{i\in[n]}$.
    Output $\sigma\coloneqq\bigotimes_{i\in[n]}\frac{\ket{x_i^0}+(-1)^{c_i}\ket{x_i^1}}{\sqrt{2}}$. \alex{Something is missing...where does $m$ come in? Is it like in Lamport's signature scheme?}
    \item 
    $\Ver(\vk,m,\sigma)\to\top/\bot:$
    Parse 
    $\vk=\{f(x_i^0),f(x_i^1)\}_{i\in[n]}$. Measure all qubits of $\sigma$ to get the result $\{z_i\}_{i\in[n]}$.
    If $z_i\in\{f(x_i^0),f(x_i^1)\}$ for all $i\in[n]$, output $\top$.
    Otherwise, output $\bot$.
    \item 
    $\Del(\sigma)\to\cert:$
    Measure $\sigma$ in the Hadamard basis to get the result $\{d_i\}_{i\in[n]}$.
    Output $\cert\coloneqq\{d_i\}_{i\in[n]}$.
    \item 
    $\Cert(\sk,\cert)\to\top/\bot:$
   Parse $\sk=\{c_i,x_i^0,x_i^1\}_{i\in[n]}$
    and $\cert=\{d_i\}_{i\in[n]}$.
    If $d_i\cdot(x_i^0\oplus x_i^1)=c_i$ for all $i\in[n]$, output $\top$. 
    Otherwise, output $\bot$.
\end{itemize}
\end{proof}
\fi

\ifnum\anonymous=1
\else
\paragraph{Acknowledgments.}
TM is supported by
JST CREST JPMJCR23I3,
JST Moonshot R\verb|&|D JPMJMS2061-5-1-1, 
JST FOREST, 
MEXT QLEAP, 
the Grant-in-Aid for Scientific Research (B) No.JP19H04066, 
the Grant-in Aid for Transformative Research Areas (A) 21H05183,
and 
the Grant-in-Aid for Scientific Research (A) No.22H00522.
\fi

\ifnum\submission=0
\bibliographystyle{alpha} 
\else
\bibliographystyle{splncs04}
\fi
\bibliography{abbrev3,crypto,reference}

\newcommand{\etalchar}[1]{$^{#1}$}
\begin{thebibliography}{HKM{\etalchar{+}}23}

\bibitem[Aar09]{CCC:Aaronson09}
Scott Aaronson.
\newblock Quantum copy-protection and quantum money.
\newblock In {\em Proceedings of the 24th Annual {IEEE} Conference on
  Computational Complexity, {CCC} 2009, Paris, France, 15-18 July 2009}, pages
  229--242. {IEEE} Computer Society, 2009.

\bibitem[AGKZ20]{STOC:AGKZ20}
Ryan Amos, Marios Georgiou, Aggelos Kiayias, and Mark Zhandry.
\newblock One-shot signatures and applications to hybrid quantum/classical
  authentication.
\newblock In Konstantin Makarychev, Yury Makarychev, Madhur Tulsiani, Gautam
  Kamath, and Julia Chuzhoy, editors, {\em 52nd ACM STOC}, pages 255--268.
  {ACM} Press, June 2020.

\bibitem[AKN{\etalchar{+}}23]{EC:AKNYY23}
Shweta Agrawal, Fuyuki Kitagawa, Ryo Nishimaki, Shota Yamada, and Takashi
  Yamakawa.
\newblock Public key encryption with secure key leasing.
\newblock In Carmit Hazay and Martijn Stam, editors, {\em EUROCRYPT~2023,
  Part~I}, volume 14004 of {\em {LNCS}}, pages 581--610. Springer, Heidelberg,
  April 2023.

\bibitem[AL21]{EC:AnaLap21}
Prabhanjan Ananth and Rolando~L. {La Placa}.
\newblock Secure software leasing.
\newblock In Anne Canteaut and Fran\c{c}ois-Xavier Standaert, editors, {\em
  EUROCRYPT~2021, Part~II}, volume 12697 of {\em {LNCS}}, pages 501--530.
  Springer, Heidelberg, October 2021.

\bibitem[ALL{\etalchar{+}}21]{C:ALLZZ21}
Scott Aaronson, Jiahui Liu, Qipeng Liu, Mark Zhandry, and Ruizhe Zhang.
\newblock New approaches for quantum copy-protection.
\newblock In Tal Malkin and Chris Peikert, editors, {\em CRYPTO~2021, Part~I},
  volume 12825 of {\em {LNCS}}, pages 526--555, Virtual Event, August 2021.
  Springer, Heidelberg.

\bibitem[APV23]{cryptoeprint:2023/325}
Prabhanjan Ananth, Alexander Poremba, and Vinod Vaikuntanathan.
\newblock Revocable cryptography from learning with errors.
\newblock In Guy Rothblum and Hoeteck Wee, editors, {\em Theory of
  Cryptography}, pages 93--122, Cham, 2023. Springer Nature Switzerland.

\bibitem[BCM{\etalchar{+}}21]{JACM:BCMVV21}
Zvika Brakerski, Paul Christiano, Urmila Mahadev, Umesh Vazirani, and Thomas
  Vidick.
\newblock A cryptographic test of quantumness and certifiable randomness from a
  single quantum device.
\newblock {\em Journal of the {ACM}}, 68(5):31:1--31:47, 2021.

\bibitem[BDS23]{BenDavidSattath}
Shalev Ben-David and Or~Sattath.
\newblock Quantum tokens for digital signatures.
\newblock {\em Quantum}, 2023.

\bibitem[BGG{\etalchar{+}}23]{cryptoeprint:2023/265}
James Bartusek, Sanjam Garg, Vipul Goyal, Dakshita Khurana, Giulio Malavolta,
  Justin Raizes, and Bhaskar Roberts.
\newblock Obfuscation and outsourced computation with certified deletion.
\newblock Cryptology ePrint Archive, Paper 2023/265, 2023.
\newblock \url{https://eprint.iacr.org/2023/265}.

\bibitem[BI20]{TCC:BroIsl20}
Anne Broadbent and Rabib Islam.
\newblock Quantum encryption with certified deletion.
\newblock In Rafael Pass and Krzysztof Pietrzak, editors, {\em TCC~2020,
  Part~III}, volume 12552 of {\em {LNCS}}, pages 92--122. Springer, Heidelberg,
  November 2020.

\bibitem[BK22]{cryptoeprint:2022/1178}
James Bartusek and Dakshita Khurana.
\newblock Cryptography with certified deletion.
\newblock Cryptology ePrint Archive, Paper 2022/1178, 2022.
\newblock \url{https://eprint.iacr.org/2022/1178}.

\bibitem[BKM{\etalchar{+}}23]{bartusek2023weakening}
James Bartusek, Dakshita Khurana, Giulio Malavolta, Alexander Poremba, and
  Michael Walter.
\newblock Weakening assumptions for publicly-verifiable deletion.
\newblock In {\em Theory of Cryptography: 21st International Conference, TCC
  2023, Taipei, Taiwan, November 29–December 2, 2023, Proceedings, Part IV},
  page 183–197, Berlin, Heidelberg, 2023. Springer-Verlag.

\bibitem[BKP23]{cryptoeprint:2023/370}
James Bartusek, Dakshita Khurana, and Alexander Poremba.
\newblock Publicly-verifiable deletion via target-collapsing functions.
\newblock In {\em Advances in Cryptology – CRYPTO 2023: 43rd Annual
  International Cryptology Conference, CRYPTO 2023, Santa Barbara, CA, USA,
  August 20–24, 2023, Proceedings, Part V}, page 99–128, Berlin,
  Heidelberg, 2023. Springer-Verlag.

\bibitem[BL20]{TQC:BroLor20}
Anne Broadbent and S{\'{e}}bastien Lord.
\newblock Uncloneable quantum encryption via oracles.
\newblock In Steven~T. Flammia, editor, {\em 15th Conference on the Theory of
  Quantum Computation, Communication and Cryptography, {TQC} 2020, June 9-12,
  2020, Riga, Latvia}, volume 158 of {\em LIPIcs}, pages 4:1--4:22. Schloss
  Dagstuhl - Leibniz-Zentrum f{\"{u}}r Informatik, 2020.

\bibitem[CGJL23]{cryptoeprint:2023/1640}
Orestis Chardouvelis, Vipul Goyal, Aayush Jain, and Jiahui Liu.
\newblock Quantum key leasing for pke and fhe with a classical lessor.
\newblock Cryptology ePrint Archive, Paper 2023/1640, 2023.
\newblock \url{https://eprint.iacr.org/2023/1640}.

\bibitem[CLLZ21]{C:CLLZ21}
Andrea Coladangelo, Jiahui Liu, Qipeng Liu, and Mark Zhandry.
\newblock Hidden cosets and applications to unclonable cryptography.
\newblock In Tal Malkin and Chris Peikert, editors, {\em CRYPTO~2021, Part~I},
  volume 12825 of {\em {LNCS}}, pages 556--584, Virtual Event, August 2021.
  Springer, Heidelberg.

\bibitem[CMP20]{cryptoeprint:2020/1194}
Andrea Coladangelo, Christian Majenz, and Alexander Poremba.
\newblock Quantum copy-protection of compute-and-compare programs in the
  quantum random oracle model.
\newblock Cryptology ePrint Archive, Paper 2020/1194, 2020.
\newblock \url{https://eprint.iacr.org/2020/1194}.

\bibitem[Got02]{Got02}
Daniel Gottesman.
\newblock Unclonable encryption, 2002.

\bibitem[GZ20]{cryptoeprint:2020/877}
Marios Georgiou and Mark Zhandry.
\newblock Unclonable decryption keys.
\newblock Cryptology ePrint Archive, Paper 2020/877, 2020.
\newblock \url{https://eprint.iacr.org/2020/877}.

\bibitem[Hel69]{Helstrom1969QuantumDA}
Carl~W. Helstrom.
\newblock Quantum detection and estimation theory.
\newblock {\em Journal of Statistical Physics}, 1:231--252, 1969.

\bibitem[HKM{\etalchar{+}}23]{cryptoeprint:2023/236}
Taiga Hiroka, Fuyuki Kitagawa, Tomoyuki Morimae, Ryo Nishimaki, Tapas Pal, and
  Takashi Yamakawa.
\newblock Certified everlasting secure collusion-resistant functional
  encryption, and more.
\newblock Cryptology ePrint Archive, Paper 2023/236, 2023.
\newblock \url{https://eprint.iacr.org/2023/236}.

\bibitem[HMNY21]{AC:HMNY21}
Taiga Hiroka, Tomoyuki Morimae, Ryo Nishimaki, and Takashi Yamakawa.
\newblock Quantum encryption with certified deletion, revisited: Public key,
  attribute-based, and classical communication.
\newblock In Mehdi Tibouchi and Huaxiong Wang, editors, {\em ASIACRYPT~2021,
  Part~I}, volume 13090 of {\em {LNCS}}, pages 606--636. Springer, Heidelberg,
  December 2021.

\bibitem[HMNY22]{C:HMNY22}
Taiga Hiroka, Tomoyuki Morimae, Ryo Nishimaki, and Takashi Yamakawa.
\newblock Certified everlasting zero-knowledge proof for {QMA}.
\newblock In Yevgeniy Dodis and Thomas Shrimpton, editors, {\em CRYPTO~2022,
  Part~I}, volume 13507 of {\em {LNCS}}, pages 239--268. Springer, Heidelberg,
  August 2022.

\bibitem[HMY23]{EC:HhaMorYam23}
Minki Hhan, Tomoyuki Morimae, and Takashi Yamakawa.
\newblock From the hardness of detecting superpositions to cryptography:
  Quantum public key encryption and commitments.
\newblock In Carmit Hazay and Martijn Stam, editors, {\em EUROCRYPT~2023,
  Part~I}, volume 14004 of {\em {LNCS}}, pages 639--667. Springer, Heidelberg,
  April 2023.

\bibitem[Hol73]{HOLEVO1973337}
A.S Holevo.
\newblock Statistical decision theory for quantum systems.
\newblock {\em Journal of Multivariate Analysis}, 3(4):337--394, 1973.

\bibitem[JQSY19]{TCC:JQSY19}
Zhengfeng Ji, Youming Qiao, Fang Song, and Aaram Yun.
\newblock General linear group action on tensors: {A} candidate for
  post-quantum cryptography.
\newblock In Dennis Hofheinz and Alon Rosen, editors, {\em TCC~2019, Part~I},
  volume 11891 of {\em {LNCS}}, pages 251--281. Springer, Heidelberg, December
  2019.

\bibitem[KL07]{books/crc/KatzLindell2007}
Jonathan Katz and Yehuda Lindell.
\newblock {\em Introduction to Modern Cryptography}.
\newblock Chapman and Hall/CRC Press, 2007.

\bibitem[KN22]{AC:KitNis22}
Fuyuki Kitagawa and Ryo Nishimaki.
\newblock Functional encryption with secure key leasing.
\newblock In Shweta Agrawal and Dongdai Lin, editors, {\em ASIACRYPT~2022,
  Part~IV}, volume 13794 of {\em {LNCS}}, pages 569--598. Springer, Heidelberg,
  December 2022.

\bibitem[KNY21]{TCC:KitNisYam21}
Fuyuki Kitagawa, Ryo Nishimaki, and Takashi Yamakawa.
\newblock Secure software leasing from standard assumptions.
\newblock In Kobbi Nissim and Brent Waters, editors, {\em TCC~2021, Part~I},
  volume 13042 of {\em {LNCS}}, pages 31--61. Springer, Heidelberg, November
  2021.

\bibitem[KNY23]{cryptoeprint:2023/538}
Fuyuki Kitagawa, Ryo Nishimaki, and Takashi Yamakawa.
\newblock Publicly verifiable deletion from minimal assumptions.
\newblock Cryptology ePrint Archive, Paper 2023/538, 2023.
\newblock \url{https://eprint.iacr.org/2023/538}.

\bibitem[LLQZ22]{TCC:LLQZ22}
Jiahui Liu, Qipeng Liu, Luowen Qian, and Mark Zhandry.
\newblock Collusion resistant copy-protection for watermarkable
  functionalities.
\newblock In Eike Kiltz and Vinod Vaikuntanathan, editors, {\em TCC~2022,
  Part~I}, volume 13747 of {\em {LNCS}}, pages 294--323. Springer, Heidelberg,
  November 2022.

\bibitem[Mer88]{C:Merkle87}
Ralph~C. Merkle.
\newblock A digital signature based on a conventional encryption function.
\newblock In Carl Pomerance, editor, {\em CRYPTO'87}, volume 293 of {\em
  {LNCS}}, pages 369--378. Springer, Heidelberg, August 1988.

\bibitem[MY22]{C:MorYam22}
Tomoyuki Morimae and Takashi Yamakawa.
\newblock Quantum commitments and signatures without one-way functions.
\newblock In Yevgeniy Dodis and Thomas Shrimpton, editors, {\em CRYPTO~2022,
  Part~I}, volume 13507 of {\em {LNCS}}, pages 269--295. Springer, Heidelberg,
  August 2022.

\bibitem[NY89]{STOC:NaoYun89}
Moni Naor and Moti Yung.
\newblock Universal one-way hash functions and their cryptographic
  applications.
\newblock In {\em 21st ACM STOC}, pages 33--43. {ACM} Press, May 1989.

\bibitem[Por23]{ITCS:Poremba23}
Alexander Poremba.
\newblock Quantum proofs of deletion for learning with errors.
\newblock ITCS 2023: 14th Innovations in Theoretical Computer Science, 2023.

\bibitem[Reg05]{Reg05}
Oded Regev.
\newblock On lattices, learning with errors, random linear codes, and
  cryptography.
\newblock In {\em STOC}, pages 84--93. ACM Press, 2005.

\bibitem[Riv98]{Riv98b}
Ronald~L. Rivest.
\newblock Can we eliminate certificate revocation lists?
\newblock In Rafael Hirschfeld, editor, {\em Proceedings Financial Cryptography
  '98}, volume 1465 of {\em Lecture Notes in Computer Science}, pages 178--183.
  Springer, 1998.

\bibitem[RS19]{CoRR:RadSat19}
Roy Radian and Or~Sattath.
\newblock Semi-quantum money.
\newblock {\em arXiv/1908.08889}, 2019.

\bibitem[Shm22]{C:Shmueli22}
Omri Shmueli.
\newblock Semi-quantum tokenized signatures.
\newblock In Yevgeniy Dodis and Thomas Shrimpton, editors, {\em CRYPTO~2022,
  Part~I}, volume 13507 of {\em {LNCS}}, pages 296--319. Springer, Heidelberg,
  August 2022.

\bibitem[Stu95]{stubble}
S.~Stubblebine.
\newblock Recent-secure authentication: enforcing revocation in distributed
  systems.
\newblock In {\em 2012 IEEE Symposium on Security and Privacy}, page 0224, Los
  Alamitos, CA, USA, may 1995. IEEE Computer Society.

\bibitem[Unr15]{JACM:Unruh15}
Dominique Unruh.
\newblock Revocable quantum timed-release encryption.
\newblock {\em J. {ACM}}, 62(6):49:1--49:76, 2015.

\bibitem[Wie83]{Wiesner83}
Stephen Wiesner.
\newblock Conjugate coding.
\newblock {\em SIGACT News}, 15(1):78--88, 1983.

\bibitem[Zha19]{EC:Zhandry19b}
Mark Zhandry.
\newblock Quantum lightning never strikes the same state twice.
\newblock In Yuval Ishai and Vincent Rijmen, editors, {\em EUROCRYPT~2019,
  Part~III}, volume 11478 of {\em {LNCS}}, pages 408--438. Springer,
  Heidelberg, May 2019.

\end{thebibliography}

\ifnum\submission=1
\else
\appendix 
\section{Proof of \cref{thm:BKMPW_amp}}
\label{sec:BKMPW_parallel}
Here we show \cref{thm:BKMPW_amp}.
Assume that there exist a QPT adversary $\cA_\secp$, a polynomial $\poly$, and
bits $b_1,...,b_n\in\bit$ such that
\begin{align}
\left\|\widetilde{\cZ}_\secp^{\cA_\secp}(b_1,...,b_n)-\widetilde{\cZ}_\secp^{\cA_\secp}(0,...,0)\right\|_{\rm tr} \ge 
\frac{1}{\poly(\secp)}
\end{align}
for infinitely often $\secp$.
Then due to the triangle inequality of the trace distance, there exist an integer $k$ and bits
$a_1,...,a_{k-1},a_{k+1},...,a_n\in\bit$ such that
\begin{align}
\left\|\widetilde{\cZ}_\secp^{\cA_\secp}(a_1,...,a_{k-1},0,a_{k+1},...,a_n)-
\widetilde{\cZ}_\secp^{\cA_\secp}(a_1,...,a_{k-1},1,a_{k+1},...,a_n)\right\|_{\rm tr} \ge 
\frac{1}{\poly(\secp)}.
\label{triangle_td}
\end{align}
From such $\cA_\secp$, we can construct a QPT adversary $\cB_\secp$ that breaks \cref{thm:BKMPW} as follows:
\begin{enumerate}
    \item 
    Get 
    \begin{align}
    \cZ_\secp\left(x_0 \oplus x_1, y_0,y_1,\frac{\ket{x_0} + (-1)^b \ket{x_1}}{\sqrt{2}}\right)
    \end{align}
        as input.
    \item Sample $x_i^0,x_i^1 \gets \{0,1\}^{\ell(\secp)}$ for each $i\in[n]\setminus\{k\}$. 
    Define $y_i^0 = f(x_i^0)$ and $y_i^1 = f(x_i^1)$ for each $i\in[n]\setminus\{k\}$.
    Initialize $\cA_\secp$ with 
   \begin{align}
    &\bigotimes_{i=1}^{k-1}\cZ_\secp\left(x_i^0 \oplus x_i^1, y_i^0,y_i^1, \frac{\ket{x_i^0} + (-1)^{a_i} \ket{x_i^1}}{\sqrt{2}}\right)\\
    &\otimes
    \cZ_\secp\left(x_0 \oplus x_1, y_0,y_1, \frac{\ket{x_0} + (-1)^{b} \ket{x_1}}{\sqrt{2}}\right)\\
    &\otimes
    \bigotimes_{i=k+1}^n\cZ_\secp\left(x_i^0 \oplus x_i^1, y_i^0,y_i^1, \frac{\ket{x_i^0} + (-1)^{a_i} \ket{x_i^1}}{\sqrt{2}}\right).
   \end{align} 
    \item $\cA_\secp$'s output is parsed as strings $x'_i\in \{0,1\}^{\ell(\secp)}$ for $i\in[n]$ and a residual state on register $\mathsf{A}'$.
    \item If $f(x'_i) \in \{y_i^0,y_i^1\}$ for all $i\in[n]\setminus\{k\}$, then output $x_k'$ and $\mathsf{A}'$. Otherwise output $\bot$.
\end{enumerate}
It is clear that the output of the above game for $b\in\bit$ is equivalent to
$\widetilde{\cZ}_\secp^{\cA_\secp}(a_1,...,a_{k-1},b,a_{k+1},...,a_n)$ (up to $\{x_i^0,x_i^1\}_{i\in[n]\setminus\{k\}}$).
Therefore, from \cref{triangle_td},
$\cB_\secp$ breaks \cref{thm:BKMPW}.

\if0
\begin{theorem}
Let us consider the following game with input $(c_1,...,c_n)\in\bit^n$.
\begin{enumerate}
        \item 
    The challenger $\cC$ chooses $x_i^b\gets\bit^\ell$ for each $i\in[n]$ and $b\in\bit$.
%    $\cC$ computes $f(x_i^b)$ for each $i\in[n]$ and $b\in\bit$.
%    $\cC$ generates $\bigotimes_{i=1}^n\left(\frac{\ket{x_i^0}+(-1)^{c_i}\ket{x_i^1}}{\sqrt{2}}\right)$.
    \item 
    $\cC$ sends $\{f(x_i^b)\}_{i,b}$ and
    $\bigotimes_{i=1}^n\left(\frac{\ket{x_i^0}+(-1)^{c_i}\ket{x_i^1}}{\sqrt{2}}\right)$
    to the adversary $\cA$.
    \item 
    $\cA$ sends its residual state $\rho_{\mathbf{z},\mathbf{x}}$ and bit strings $\mathbf{z}\coloneqq(z_1,...,z_n)$ to $\cC$. 
    Here, $z_i\in\bit^\ell$ for each $i\in[n]$
    and $\mathbf{x}\coloneqq\{x_i^b\}_{i\in[n],b\in\bit}$.
    \item
    If $f(z_i)\in\{f(x_i^0),f(x_i^1)\}$ for all $i\in[n]$, the output of the game is 
    $\rho_{\mathbf{z},\mathbf{x}}$.
    Otherwise, the output of the game is $\bot$.
\end{enumerate}
Let $\Psi(c_1,...,c_n)$ be the output of the game on input $(c_1,...,c_n)$.
Then $\|\Psi(0,...,0)-\Psi(c_1,...,c_n)\|_1\le\negl(\secp)$ for any $(c_1,...,c_n)$.
\end{theorem}

\begin{proof}
In order to show it, it is enough to show that
\begin{align}
\|\Psi(c_1,...,c_{k-1},0,c_{k+1},...,c_n)-\Psi(c_1,...,c_{k-1},1,c_{k+1},...,c_n)\|_1\le\negl(\secp)
\end{align}
for 
any $k\in[n]$ and any $(c_1,...,c_{k-1},c_{k+1},c_n)$.
To show it, assume that
\begin{align}
\|\Psi(c_1,...,c_{k-1},0,c_{k+1},...,c_n)-\Psi(c_1,...,c_{k-1},1,c_{k+1},...,c_n)\|_1\ge\frac{1}{\poly(\secp)}
\end{align}
for some $k\in[n]$ and $(c_1,...,c_{k-1},c_{k+1},c_n)$.
From the adversary $\cA$ that satisfies it, we can construct an adversary $\cB$ that breaks \cref{thm:BKMPW}
as follows.
\begin{enumerate}
    \item 
    Get $f(x_0)$, $f(x_1)$, and $\frac{\ket{x_0}+(-1)^c\ket{x_1}}{\sqrt{2}}$ as input.
    \item 
    Choose $x_i^b\gets\bit^\ell$ for each $i\in[n]\setminus \{k\}$ and $b\in\bit$.
    Set $y_i^b\coloneqq f(x_i^b)$ for each $i\in[n]\setminus \{k\}$ and $b\in\bit$.
    Set $y_k^b\coloneqq f(x_b)$ for each $b\in\bit$.
    Run
    \begin{align}
       (\mathbf{z},\rho_{\mathsf{z},\mathsf{x}} )\gets\cA\Big(\{y_i^b\}_{i,b},
    \Big(\bigotimes_{i=1}^{k-1}\frac{\ket{x_i^0}+(-1)^{c_i}\ket{x_i^1}}{\sqrt{2}}\Big)
    \otimes \frac{\ket{x_0}+(-1)^c\ket{x_1}}{\sqrt{2}}
    \otimes\Big(\bigotimes_{i=k+1}^{n}\frac{\ket{x_i^0}+(-1)^{c_i}\ket{x_i^1}}{\sqrt{2}}\Big)
    \Big)
    \end{align}
    \item 
    If $f(z_i)\in\{y_i^0,y_i^1\}$ for all $i\in[n]$,
    set its residual state $\rho_{\mathbf{z},\mathbf{x}}$ 
    Otherwise, set its residual state $\bot$.
\end{enumerate}
If we write the output of the original game that $\cB$ tries to break by $\Phi(c)$, we can confirm
that $\Phi(c)=\Psi(c_1,...,c_{k-1},c,c_{k+1},...c_n)$.
Therefore, $\cB$ breaks \cref{thm:BKMPW}.
\end{proof}
\fi
\section{Digital Signatures with Quantum Revocable Signing Keys from Group Actions} 
\label{sec:GA}
In this section, we construct digital signatures with {\it quantum} revocable signing keys from
group actions. Note that in this case, the revocation is quantum, i.e., not a classical deletion certificate but
the quantum signing key itself is returned.
The syntax, correctness, and security are the same as those of the classical revocation case, \cref{def:dswrsk}, except that
there is no deletion algorithm, $\Del$, and $\cert$ is not a classical bit string but the signing key $\sigk_i$ itself.

The outline of this section is as follows.
In \cref{sec:GAbasic}, we review basics of group actions.
We then review claw-free swap-trapdoor function pairs~\cite{EC:HhaMorYam23} and mention that it can be constructed from group actions with
the one-wayness property in \cref{sec:STF}.
We next define two-tier quantum lightning with classical semi-verification in \cref{sec:2QLclassical},
and construct it from claw-free swap-trapdoor function pairs (and therefore from group actions with the one-wayness property) in \cref{sec:GAtoQL}.
We finally construct digital signatures with quantum revocable signing keys from
two-tier quantum lightning with classical semi-verfication in \cref{sec:2QLtoDS}.

\subsection{Group Actions}
\label{sec:GAbasic}
Here we review basics of group actions~\cite{TCC:JQSY19}. 
Content of this subsection is taken from \cite{EC:HhaMorYam23}.
First, group actions are defined as follows.
\begin{definition}[Group actions]
Let $G$ be a (not necessarily abelian) group, $S$ be a set, and $\star:G\times S \rightarrow S$ be a function where we write $g\star s$ to mean $\star(g,s)$. We say that $(G,S,\star)$ is a \emph{group action} if it satisfies the following:
\begin{enumerate}
    \item For the identity element $e\in G$ and any $s\in S$, we have $e\star s=s$.
    \item For any $g,h\in G$ and any $s\in S$, we have $(gh)\star s = g\star(h\star s)$.
\end{enumerate}
\end{definition}

\if0
We use the following notations. For $s\in S$, we denote by $G_s\subseteq G$ to mean the stabilizer subgroup with respect to $s$, i.e., 
\begin{align*}
    G_s\seteq \{g\in G: g\star s =s\}.
\end{align*}
For $g\in G$ and a subgroup $H\subseteq G$, we denote by $gH$ to mean the left coset with respect to $g$ and $H$, i.e., 
\begin{align*}
    gH\seteq \{gh\in G: h\in H\}.
\end{align*}
We denote by $\ket{G}$ to mean the uniform superposition over $G$, i.e.,
\begin{align*}
    \ket{G}\seteq \frac{1}{|G|^{1/2}}\sum_{g\in G}\ket{g}.
\end{align*}
Similarly, for $g\in G$ and a subgroup $H\subseteq G$, we denote by $\ket{gH}$ to mean the uniform superposition over $gH$, i.e.,
\begin{align*}
    \ket{gH}\seteq \frac{1}{|H|^{1/2}}\sum_{h\in H}\ket{gh}.
\end{align*}
\fi

To be useful for cryptography, we have to at least assume that basic operations about $(G,S,\star)$ have efficient algorithms. We require the following efficient algorithms similarly to \cite{TCC:JQSY19}.
\begin{definition}[Group actions with efficient algorithms]\label{def:GA_efficient}
%Let $\{(G_\secp ,S_\secp ,\star_\secp)\}_{\secp \in \mathbb{N}}$ be a family of group actions. 
%where we omit the dependence of the security parameter $\secp$ from the notation for simplicity. 
We say that a group action $(G,S,\star)$ has \emph{efficient algorithms} if it satisfies the following:\footnote{Strictly speaking, we have to consider a family $\{(G_\secp ,S_\secp ,\star_\secp)\}_{\secp \in \mathbb{N}}$ of group actions parameterized by the security parameter to meaningfully define the efficiency requirements. We omit the dependence on $\secp$ for notational simplicity throughout the paper.} 
\begin{itemize}
    \item[{\bf Unique representations:}] Each element of $G$ and $S$ can be represented as a bit string of length $\poly(\secp)$ in a unique manner. Thus, we identify these elements and their representations.
    \item[{\bf Group operations:}] There are classical deterministic polynomial-time algorithms that compute $gh$ from $g\in G$ and $h \in G$ and $g^{-1}$ from $g \in G$.
    \item[{\bf Group action:}] There is a classical deterministic polynomial-time algorithm that computes $g\star s$ from $g\in G$ and $s\in S$.
    \item[{\bf Efficient recognizability:}] There are classical deterministic polynomial-time algorithms that decide if a given bit string represents an element of $G$ or $S$, respectively.
    \item[{\bf Random sampling:}] There are PPT algorithms that sample almost uniform elements of $G$ or $S$ (i.e., the distribution of the sample is statistically close to the uniform distribution), respectively.
    \item[{\bf Superposition over $G$:}] There is a QPT algorithm that generates a state whose trace distance from $\ket{G}\coloneqq\sum_{g\in G}\ket{g}$ is $\negl(\secp)$. 
\end{itemize}
\end{definition}
\begin{remark}[A convention on ``Random sampling'' and ``Superposition over $G$'' properties]
In the rest of this paper, we assume that we can sample elements from \emph{exactly} uniform distributions of $G$ and $S$. Similarly, we assume that we can \emph{exactly} generate $\ket{G}$ in QPT. They are just for simplifying the presentations of our results, and all the results hold with the above imperfect version with additive negligible loss for security or correctness. 
\end{remark}

\begin{remark}
The above requirements are identical to those in \cite{TCC:JQSY19} except for the ``superposition over $G$'' property. 
We remark that all candidate constructions proposed in  \cite{TCC:JQSY19} satisfy this property. 
%Such group actions are formalized as effective group actions in \cite{AC:ADMP20}.
%Here, we augment their definition by an additional requirement that we can  
%Our formalization is based on that in \cite{AC:ADMP20}, but we remark that we also consider actions by non-abelian groups unlike them.
\end{remark}

We define one-wayness~\cite{TCC:JQSY19} as follows.
\begin{definition}[One-wayness]\label{def:one-way}
We say that a group action $(G,S,\star)$ with efficient algorithms is \emph{one-way} if for any non-uniform QPT adversary $\A$, we have 
\begin{align*}
    \Pr\left[
    g'\star s = g\star s
    :
    \begin{array}{l}
     s\gets S,
    g\gets G,
    g' \leftarrow \A(s,g\star s)
    \end{array}
    \right]=\negl(\secp).
\end{align*}
\end{definition}

\if0
\begin{definition}[Pseudorandomness]\label{def:PR}
\mor{It will not be used}
We say that a group action $(G,S,\star)$ with efficient algorithms is \emph{pseudorandom} if it satisfies the following:
\begin{enumerate}
\item \label{item:orbit}
We have 
\begin{align*}
\Pr[\exists g\in G\text{~s.t.~}g\star s =t:s,t\gets S]=\negl(\secp).
\end{align*}
\item \label{item:PR}
For any non-uniform QPT adversary $\A$, we have 
\begin{align*}
    \left|
    \Pr\left[
    1\gets \A(s,t):
     s\gets S,
    g\gets G,
    t\seteq g\star s
    \right]
    -
    \Pr\left[
    1\gets \A(s,t):
     s,t\gets S
    \right]
    \right|
    =
    \negl(\secp).
\end{align*}
\end{enumerate}
\end{definition}
\begin{remark}[On \Cref{item:orbit}]
\mor{it will not be used}
We require \Cref{item:orbit} to make \Cref{item:PR} non-trivial. For example, if $(G,S,\star)$ is transitive, i.e., for any $s,t\in S$, there is $g\in G$ such that $g\star s =t$, \Cref{item:PR} trivially holds because the distributions of $t=g\star s$ is uniformly distributed over $S$ for any fixed $s$ and random $g\gets G$.  
\end{remark}
\begin{remark}[Pseudorandom $\rightarrow$ One-way]
\mor{it will not be used}
We remark that the pseudorandomness immediately implies the one-wayness as noted in \cite{TCC:JQSY19}. 
\end{remark}
\fi

\if0
\noindent\textbf{Instantiations.}
Ji et al. \cite{TCC:JQSY19} gave several candidate constructions of one-way and pseudorandom group actions with efficient algorithms based on general linear group actions on tensors. 
We briefly describe one of their candidates below. 
Let $\mathbb{F}$ be a finite field, and $k$, $d_1,d_2...,d_k$ be positive integers (which are typically set as 
$k=3$ and $d_1=d_2=d_3$). 
We set $G\seteq \prod_{j=1}^{k}GL_{d_j}(\mathbb{F})$, $S\seteq \bigotimes_{j=1}^{k} \mathbb{F}^{d_j}$, and define the group action by the matrix-vector multiplication as
\begin{align*}
    (M_j)_{j\in[k]} \star T\seteq \left(\bigotimes_{j=1}^{k}M_j\right) T
\end{align*}
for $(M_j)_{j\in[k]}\in \prod_{j=1}^{k}GL_{d_j}(\mathbb{F})$ and $T\in \bigotimes_{j=1}^{k} \mathbb{F}^{d_j}$. 
%Specifically,... \takashi{to be written} 
See \cite{TCC:JQSY19} for attempts of cryptanalysis and justification of the one-wayness and pseudorandomness. 
We remark that we introduced an additional requirement of the ``superposition over $G$'' property in \Cref{def:GA_efficient}, but their candidates satisfy this property. 
In their candidates, the group $G$ is a direct product of general linear groups over finite fields (or symmetric groups for one of the candidates), and a uniformly random matrix over finite fields is invertible with overwhelming probability for appropriate parameters. 
\fi

\if0
Alamati et al. \cite{AC:ADMP20} introduced a similar formalization of cryptographic group actions as an abstraction of isogeny-based cryptography. However, there are several differences between our and their requirements. Notably, they do not require unique representation for elements of $G$ and efficient computability of the action by any group element. This is because the isogeny-based group action is not known to satisfy them. Thus, we cannot capture isogeny-based group action by our definition. We might be able to capture it by relaxing the requirements for group actions similarly to \cite{AC:ADMP20}.
However, we do not do so because our motivation here is to construct (quantum) PKE schemes, whose isogeny-based construction are already known even without relying on quantum ciphertexts~\cite{Couveignes06,PQCRYPTO:JaoDeFo11,AC:CLMPR18}. 

% which can be constructed even without relying on quantum ciphertexts based on isogeny~\cite{Couveignes06,PQCRYPTO:JaoDeFo11,AC:CLMPR18}. 

%However, a problem is that graph isomorphism problem is easy for a uniformly random instance \takashi{citation}, and thus group actions based on it cannot satisfy the one-wayness where $s\gets S$ is uniformly chosen for the instance generation. On the other hand, the one-wayness with a fixed $s$ may hold since there is no known QPT algorithm that solves graph isomorphism problem in the worst case. We note that such a variant of one-wayness suffices for our applications. However, since there is a (classical) quasi-polynomial time algorithm that solves graph ismorphism in the worst case~\cite{STOC:Babai16}, one-wayness of group actions based on it can be broken in quasi-polynomial time. Though this is not an attack in the standard sense in cryptography where the adversary is required to be (probabilistic) polynomial-time, quasi-polynomial time attacks are also often considered to be fatal. For this reason, we do not consider the instantiation based on graph ismorphism and simply consider uniform $s\in S$ in the definition of one-wayness.\footnote{A similar remark can be found in \cite[Remark~1]{TCC:JQSY19}}. 

Brassard and Yung~\cite{C:BraYun90} proposed cryptographic group actions based on the hardness of graph isomorphism,
discrete logarithm, or factoring. However, it turns out that the graph isomorphism problem can be solved in (classical) quasi polynomial-time~\cite{STOC:Babai16}\footnote{Another issue is that the graph isomorphism problem is easy for a uniformly random instance, and thus it cannot satisfy the one-wayness as defined in \Cref{def:one-way}. If we modify the definition of the one-wayness so that we choose the hardest instance $s\in S$, the graph isomorphism-based construction may satisfy it, and such a version suffices for our applications. However, since such a construction can be broken in quasi-polynomial time by Babai's algorithm \cite{STOC:Babai16}, we do not consider the instantiation based on graph ismorphism and simply consider uniform $s\in S$ in the definition of one-wayness. A similar remark can be found in \cite[Remark~1]{TCC:JQSY19}.} and discrete logarithm and factoring problems can be solved in QPT~\cite{Shor99}.
\fi

%A construction of cryptographic group actions based on the hardness of the graph isomorphism problem is proposed in \cite{C:BraYun90}. However, a (classical) quasi polynomial-time algorithm to solve the graph isomorphism problem was found~\cite{STOC:Babai16}.\footnote{Another issue is that the graph isomorphism problem is easy for a uniformly random instance, and thus it cannot satisfy the one-wayness as defined in \Cref{def:one-way}. We modify the definition of the one-wayness so that we choose the hardest instance $s\in S$, the graph isomorphism-based construction may satisfy it, and such a version suffices for our applications. However, since such a construction can be broken in quasi-polynomial time by Babai's algorithm \cite{STOC:Babai16}, we do not consider the instantiation based on graph ismorphism and simply consider uniform $s\in S$ in the definition of one-wayness. We note that a similar remark can be found in \cite[Remark~1]{TCC:JQSY19}.}  

%Finally, we mention that cryptographic group actions based on the hardness of discrete logarithm or factoring were given in \cite{C:BraYun90}. However, since these problem can be solved in quantunm polynomial-time by Shor's algorithm \cite{Shor99}, they are not suitable for our applications. 

\subsection{Swap-Trapdoor Functions Pairs}
\label{sec:STF}
In order to show the construction of two-tier quantum lightning with classical semi-verification from group actions, it is convenient to
use {\it swap-trapdoor function pairs} (STF), which was introduced in \cite{EC:HhaMorYam23}.
In this subsection, we define STF and point out that
STF can be constructed from group actions~\cite{TCC:JQSY19,EC:HhaMorYam23}.
Content of this subsection is taken from \cite{EC:HhaMorYam23}.

The formal definition of STF is given as follows.
\begin{definition}[Swap-Trapdoor Function Pair~\cite{EC:HhaMorYam23}]\label{def:STF}
A \emph{swap-trapdoor function pair (STF)} consists a set $(\setup,\eval,\swap)$ of algorithms such that
\begin{itemize}
\item
$\setup(1^\secp)\to(\pp,\td)$: It is a PPT algorithm that takes the security parameter $\secp$ as input, and outputs a public parameter $\pp$ and a trapdoor $\td$. The public parameter $\pp$ specifies functions $f_b^{(\pp)}:\calX\rightarrow \calY$ for each $b\in\bit$. We often omit the dependence on $\pp$ and simply write $f_b$ when it is clear from the context.  
\item
$\eval(\pp,b,x)\to y$: It is a deterministic classical polynomial-time algorithm that takes a public parameter $\pp$, a bit $b\in\bit$, and an element $x\in \calX$ as input, and outputs $y\in \calY$. 
\item
$\swap(\td,b,x)\to x'$: It is a deterministic classical polynomial-time algorithm that takes a trapdoor $\td$, a bit $b\in \bit$, and an element $x\in \calX$ as input, and outputs $x'\in \calX$.
\end{itemize}
We require a STF to satisfy the following:

\smallskip\noindent\textbf{Evaluation correctness.}
For any $(\pp,\td)\gets \setup(1^\secp)$ , $b\in\bit$, and $x\in \calX$, we have $\eval(\pp,b,x)=f_b(x)$.

\smallskip\noindent\textbf{Swapping correctness.}
For any $(\pp,\td)\gets \setup(1^\secp)$, $b\in \bit$, and $x\in \calX$, 
 if we let $x'\gets \swap(\td,b,x)$, then we have $f_{b\oplus 1}(x')=f_b(x)$ and $\swap(\td,b\oplus 1,x')=x$.
In particular, $\swap(\td,b,\cdot)$ induces an efficiently computable and invertible one-to-one mapping between $f^{-1}_0(y)$ and $f^{-1}_1(y)$ for any $y\in \calY$. 
%\takashi{We may want to improve the explanation here to make the intuition clearer.}
 
%\smallskip\noindent\textbf{Claw-freeness.}
%For any non-uniform QPT algorithm $\A$, we have 
%\begin{align*}
%    \Pr[f_0(x_0)=f_1(x_1):(\pp,\td)\gets \setup(1^\secp),(x_0,x_1)\gets \A(\pp)]=\negl(\secp).
%\end{align*}

\smallskip\noindent\textbf{Efficient random sampling over $\calX$.} 
There is a PPT algorithm that samples an almost uniform element of $\calX$ (i.e., the distribution of the sample is statistically close to the uniform distribution). 

\smallskip\noindent\textbf{Efficient superposition over $\calX$.} 
There is a QPT algorithm
 that generates a state whose trace distance from $\ket{\calX}=\frac{1}{\sqrt{|\calX|}}\sum_{x\in\calX}|x\rangle$ is $\negl(\secp)$.

\smallskip\noindent\textbf{Claw-freeness.}
For any non-uniform QPT algorithm $\A$, 
\begin{align*}
    \Pr[f_0(x_0)=f_1(x_1):(\pp,\td)\gets \setup(1^\secp),(x_0,x_1)\gets \A(\pp)]=\negl(\secp).
\end{align*}

\if0
\smallskip\noindent\textbf{Conversion hardness.}
\mor{It will not be used}
For any non-uniform QPT algorithm $\A$, 
\begin{align*}
    \Pr[f_1(x_1)=y:(\pp,\td)\gets \setup(1^\secp),x_0\gets \calX, y\seteq f_0(x_0),  x_1\gets \A(\pp,\ket{f_0^{-1}(y)})]=\negl(\secp)
\end{align*} 
where we remind that $\ket{f_0^{-1}(y)}\seteq\frac{1}{\sqrt{|f_0^{-1}(y)|}}\sum_{x\in f_0^{-1}(y)}\ket{x}$.  
\fi
 
 \end{definition}

\begin{remark}[A convention on ``Efficient random sampling over $\calX$'' and ``Efficient superposition over $\calX$'' properties]\label{rem:random_sampling_and_superposition}
In the rest of this paper, we assume that we can sample elements from \emph{exactly} the uniform distribution of $\calX$.  Similarly, we assume that we can \emph{exactly} generate $\ket{\calX}$ in QPT. They are just for simplifying the presentations of our results, and all the results hold with the above imperfect version with additive negligible loss for security or correctness. 
\end{remark}

A STF can be constructed from group actions. Let $(G,S,\star)$ be a group action with efficient algorithms (as defined in \Cref{def:GA_efficient}). Then, we construct a STF as follows.

\begin{itemize}
\item
$\setup(1^\secp)$: Generate $s_0\gets S$ and $g\gets G$, set $s_1\seteq g\star s_0$, and output $\pp\seteq (s_0,s_1)$ and $\td\seteq g$. 
For $b\in \bit$, we define $f_b:G\rightarrow S$ by 
    $f_b(h)\seteq h\star s_b$.
\item
$\eval(\pp=(s_0,s_1),b,h)$: Output $f_b(h)=h\star s_b$. 
\item
$\swap(\td=g,b,h)$:
If $b=0$, output $hg^{-1}$. If $b=1$, output $hg$. 
\end{itemize}

\if0
The evaluation correctness is trivial. The swapping correctness can be seen as follows: For any $h\in G$, $f_1(\swap(\td,0,h))=f_1(hg^{-1})=(hg^{-1})\star s_1 = h \star s_0 = f_0(h)$. Similarly, for any $h\in G$,  $f_0(\swap(\td,1,h))=f_0(hg)=(hg)\star s_0 = h \star s_1 = f_1(h)$. For any $h\in G$, $\swap(\td,1,\swap(\td,0,h))=\swap(\td,1,hg^{-1})=(hg^{-1})g=h$. 
The efficient sampling and efficient superposition properties directly follow from the corresponding properties of the group action.
\fi

In \cite{EC:HhaMorYam23}, the following theorem is shown.
\begin{theorem}[\cite{EC:HhaMorYam23}]\label{thm:GA_to_STF}
     If $(G,S,\star)$ is one-way, then $(\setup,\eval,\swap)$ is claw-free. \label{item:OW_to_CF}
\end{theorem}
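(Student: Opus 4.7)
The plan is a direct reduction: from any QPT adversary $\A$ that finds a claw with non-negligible probability, construct a QPT algorithm $\B$ that breaks the one-wayness of $(G,S,\star)$ with the same probability. The key algebraic observation is that the setup exactly mirrors the one-wayness challenge: the public parameter $\pp = (s_0,s_1)$ with $s_0 \gets S$ and $s_1 = g\star s_0$ for $g\gets G$ has precisely the distribution of a one-wayness instance $(s,\,g\star s)$. Moreover, a claw $(h_0,h_1)$ with $f_0(h_0) = f_1(h_1)$, i.e., $h_0\star s_0 = h_1 \star s_1 = (h_1 g)\star s_0$, immediately yields the relation $(h_1^{-1}h_0)\star s_0 = g\star s_0$, which is exactly the form of a one-wayness solution.

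Concretely, on input a one-wayness challenge $(s,\,t)$ with $t = g\star s$, the reduction $\B$ sets $\pp \coloneqq (s,t)$ and runs $(h_0,h_1) \gets \A(\pp)$. It then outputs $g' \coloneqq h_1^{-1} h_0$, which is computable in classical polynomial time by the group-operation efficiency in \Cref{def:GA_efficient}. Whenever $\A$ produces a valid claw, we have
\begin{align*}
g' \star s \;=\; h_1^{-1} h_0 \star s_0 \;=\; h_1^{-1} \star (h_0 \star s_0) \;=\; h_1^{-1} \star (h_1 \star s_1) \;=\; s_1 \;=\; g\star s,
\end{align*}
so $\B$ succeeds exactly when $\A$ does.

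The success probabilities match because the distribution of $\pp$ fed to $\A$ by $\B$ is identical to the distribution in the claw-freeness experiment (both are $(s_0,\,g\star s_0)$ with $s_0\gets S$, $g\gets G$), and because $\B$ does not use the trapdoor (which would not be available from the one-wayness challenge anyway). Consequently, $\Pr[\B \text{ wins}] = \Pr[\A \text{ finds a claw}]$, and the assumed one-wayness forces the latter to be negligible. There is no real obstacle here; the only subtlety is to verify that the claw-freeness experiment does not give $\A$ anything beyond $\pp$ (in particular, $\A$ is not given $\td = g$), so the reduction faithfully simulates it.
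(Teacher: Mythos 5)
Your reduction is correct and is the standard argument for this statement (the paper itself defers the proof to \cite{EC:HhaMorYam23}, where essentially this same reduction is given): the simulated $\pp=(s,t)$ is distributed identically to a real public parameter, the adversary in the claw-freeness game indeed receives only $\pp$ and not $\td$, and the algebra $h_1^{-1}h_0\star s_0 = h_1^{-1}\star(h_1\star s_1)=s_1=g\star s_0$ is exactly right. No gaps.
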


\subsection{Definition of Two-Tier Quantum Lightning with Classical Semi-Verification}
\label{sec:2QLclassical}
The formal definition of two-tier quantum lightning with classical semi-verification is given as follows.
The difference from the original two-tier quantum lightning \cite{TCC:KitNisYam21} is that the semi-verification algorithm $\SemiVer$ accepts a classical certificate $\cert$
instead of a quantum state, and there is an additional QPT algorithm $\Del$ that outputs $\cert$ on input $\psi$.
\begin{definition}[Two-Tier Quantum Lightning with Classical Semi-Verification]
Two-tier quantum lighting with classical semi-verification is a set $(\Setup,\StateGen,\Del,\SemiVer,\FullVer)$
of algorithms such that
\begin{itemize}
    \item 
    $\Setup(1^\secp)\to(\sk,\pk):$
    It is a QPT algorithm that, on input the security parameter $\secp$, outputs
    a classical secret key $\sk$ and a classical public key $\pk$.
    \item 
    $\StateGen(\pk)\to(\psi,\snum):$
    It is a QPT algorithm that, on input $\pk$, outputs 
    a quantum state $\psi$ and a classical serial number $\snum$.
    \item 
    $\Del(\psi)\to\cert:$
    It is a QPT algorithm that, on input $\psi$, outputs a classical certificate $\cert$.
    \item 
    $\SemiVer(\pk,\snum,\cert)\to\top/\bot:$
    It is a classical deterministic polynomial-time algorithm that, on input $\pk$, $\cert$, and $\snum$, outputs $\top/\bot$.
    \item 
    $\FullVer(\sk,\snum,\psi)\to\top/\bot:$
    It is a QPT algorithm that, on input $\sk$, $\psi$, and $\snum$, outputs $\top/\bot$.
\end{itemize}
We require the following properties.

\paragraph{Correctness:}
\begin{align}
\Pr\left[\top\gets\SemiVer(\pk,\snum,\cert):
\begin{array}{rr}
(\sk,\pk)\gets\Setup(1^\secp)\\
(\psi,\snum)\gets\StateGen(\pk)\\
\cert\gets\Del(\psi)
\end{array}
\right]\ge1-\negl(\secp)    
\end{align}
and
\begin{align}
\Pr\left[\top\gets\FullVer(\sk,\snum,\psi):
\begin{array}{rr}
(\sk,\pk)\gets\Setup(1^\secp)\\
(\psi,\snum)\gets\StateGen(\pk)\\
\end{array}
\right]\ge1-\negl(\secp).
\end{align}

\paragraph{Security:}
For any QPT adversary $\cA$,
\begin{align}
\Pr\left[
\top\gets\SemiVer(\pk,\snum,\cert)
\wedge
\top\gets\FullVer(\sk,\snum,\psi)
:
\begin{array}{rr}
(\sk,\pk)\gets\Setup(1^\secp)\\
(\cert,\psi,\snum)\gets\cA(\pk)\\
\end{array}
\right]\le\negl(\secp).    
\end{align}
\end{definition}

\if0
\begin{remark}
In the definition in \cite{TCC:KitNisYam21}, $\SemiVer$ also outputs a quantum state $\psi'$,
but in this paper, we consider a simpler syntax where
$\SemiVer$ outputs only $\top/\bot$.
\end{remark}
\fi

\subsection{Construction from Group Actions}
\label{sec:GAtoQL}
Here we construct two-tier quantum lightning with classical semi-verification from STF (and therefore from group actions).
\begin{theorem}
\label{thm:GAtoQL}
If group actions with the one-wayness property exist, then
two-tier quantum lightning with classical semi-verification exists.    
\end{theorem}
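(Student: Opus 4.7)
The plan is to first invoke \Cref{thm:GA_to_STF} to reduce the problem to constructing two-tier quantum lightning with classical semi-verification from a claw-free STF $(\setup,\eval,\swap)$, and then give a direct construction based on the two-regular preimage states induced by the swap structure. This parallels the construction of two-tier quantum lightning in \cite{TCC:KitNisYam21}, but we swap the roles of the two verification algorithms so that the public (semi) verification is classical.

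The construction is as follows. In $\Setup(1^\secp)$, run $(\pp,\td)\gets\setup(1^\secp)$ and set $\pk\coloneqq \pp$, $\sk\coloneqq \td$. In $\StateGen(\pk)$, prepare the uniform superposition $\frac{1}{\sqrt{2|\calX|}}\sum_{b\in\bit,\,x\in\calX}\ket{b,x}$, coherently apply $\eval(\pp,\cdot,\cdot)$ into a fresh register, and measure that register to obtain $y$; the post-measurement state is (assuming that $f_0,f_1$ are injective; otherwise a uniform superposition over the preimage set)
\begin{align*}
\psi \;=\; \tfrac{1}{\sqrt{2}}\bigl(\ket{0,x_0}+\ket{1,x_1}\bigr), \quad f_0(x_0)=f_1(x_1)=y,
\end{align*}
and we output $(\psi,\snum\coloneqq y)$. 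The classical revocation algorithm $\Del(\psi)$ simply measures $\psi$ in the computational basis to obtain $(b,x)$ and outputs $\cert\coloneqq(b,x)$. The classical semi-verification algorithm $\SemiVer(\pk,y,(b,x))$ accepts iff $\eval(\pp,b,x)=y$. Finally, $\FullVer(\sk,y,\psi)$ uses the trapdoor $\td$ to (i) coherently check that each branch maps to $y$ under $\eval$, and (ii) verify that $\psi$ is the $+1$ eigenstate of the unitary $U_{\swap}:\ket{b,x}\mapsto\ket{b\oplus 1,\swap(\td,b,x)}$ (e.g., via a controlled-$U_{\swap}$ Hadamard test). By the swapping correctness of the STF, the honest $\psi$ is exactly such an eigenstate, giving correctness of both $\SemiVer$ and $\FullVer$.

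For security, I plan to reduce to claw-freeness of the STF. Suppose a QPT adversary $\cA$ on input $\pk=\pp$ outputs $(\cert,\psi,y)$ that passes both $\SemiVer$ and $\FullVer$ with non-negligible probability. We construct $\cB(\pp)$ that runs $\cA(\pp)$, parses $\cert=(b,x)$ with $f_b(x)=y$, then measures $\psi$ in the computational basis to obtain $(b',x')$, and outputs the pair ordered as $(x_0,x_1)$ whenever $b'\neq b$. Because $\psi$ passes $\FullVer$, it is close to the $+1$ eigenstate on the two-dimensional subspace spanned by $\ket{0,x_0^*},\ket{1,x_1^*}$ with $f_0(x_0^*)=f_1(x_1^*)=y$. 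Hence a computational-basis measurement yields $b'=1\oplus b$ with probability at least (close to) $1/2$, in which case $(x,x')$ is a valid claw. Thus $\cB$ breaks claw-freeness with non-negligible probability, a contradiction. Combining with \Cref{thm:GA_to_STF} completes the theorem.

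The main technical point to be careful about is the analysis of $\FullVer$ and the measurement step in the reduction: one must argue that, conditioned on $\FullVer$ accepting $\psi$, a subsequent computational-basis measurement concentrates on exactly two values $(0,x_0^*),(1,x_1^*)$ with near-equal probabilities, so that the ``other branch'' relative to $\cA$'s classical certificate is revealed with constant probability. For general group actions where $f_b$ may not be strictly injective, $\psi$ may instead be a uniform superposition over cosets of stabilizer subgroups; in that case $\FullVer$ should be specified as a projection onto the $+1$ eigenspace of $U_{\swap}$ within the $y$-preimage subspace, and the same reduction still yields a claw because the measurement outcome satisfies $f_{b'}(x')=y$ by construction and $b'\neq b$ with constant probability by the symmetry of the eigenstate under the branch bit.
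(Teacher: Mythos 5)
Your construction follows the same basic blueprint as the paper's: reduce to a claw-free STF via \Cref{thm:GA_to_STF}, build BB84-like two-regular preimage states, use a computational-basis measurement for the classical certificate and a trapdoor-assisted swap/Hadamard test for $\FullVer$, and reduce security to claw-freeness. However, there is a fatal gap: you use a \emph{single} instance, and a single Hadamard test (or even a perfect projection onto the $+1$ eigenspace of $U_{\swap}$ within the $y$-preimage subspace) has constant soundness error. Concretely, the adversary that prepares an honest state, measures it in the computational basis to obtain $(0,x_0)$ with $f_0(x_0)=y$, and then outputs $\cert=(0,x_0)$ together with $\psi=\ket{0,x_0}$ passes $\SemiVer$ with probability $1$ and passes $\FullVer$ with probability exactly $1/2$, since $|\langle \psi_+|0,x_0\rangle|^2=1/2$ for the eigenstate $\psi_+\propto\ket{0,x_0}+\ket{1,x_1}$. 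The scheme is therefore insecure as stated, independently of any reduction. This also exposes the invalid inference in your security argument: ``$\psi$ passes $\FullVer$ with non-negligible probability, hence it is close to the $+1$ eigenstate'' is false --- $\ket{0,x_0}$ passes with probability $1/2$ yet a computational-basis measurement of it never reveals the other branch, so your $\cB$ never extracts a claw from this adversary.

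The paper repairs exactly this by taking $n=\omega(\log\secp)$ parallel copies, so the trivial attack above succeeds only with probability $2^{-n}$, and by replacing the ``closeness to the eigenstate'' heuristic with a direct operator inequality: for any state $\rho$ and any announced branch bits $(b_1,\dots,b_n)$,
\begin{align}
\Tr\big[(\proj{+}^{\otimes n}\otimes I)\,\rho\big]\;\le\; 2\,\Tr\big[(\Lambda_{b_1,\dots,b_n}\otimes I)\,\rho\big]+2^{-n+1},
\end{align}
where $\Lambda_{b_1,\dots,b_n}$ projects onto branch-bit strings different from $(b_1,\dots,b_n)$. This bounds the probability of passing all $n$ Hadamard tests by (twice) the probability that a computational-basis measurement flips at least one branch bit relative to the classical certificate, which is precisely the event that yields a claw; a hybrid argument then removes the trapdoor-dependent swap operations (they do not affect the branch-bit distribution) and a random index $i^*$ costs only a factor $n$. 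To salvage your proof you would need to add the $n$-fold repetition and replace your closeness claim with an argument of this type (or an equivalent gentle-measurement/averaging argument); as written, both the construction and the reduction fail.
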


\begin{proof}[Proof of \cref{thm:GAtoQL}]
Let $(\setup,\eval,\swap)$ be a SFT.
From \cref{thm:GA_to_STF}, a SFT exists if
group actions that satisfy one-wayness exist.
From it, we construct two-tier quantum lightning with classical semi-verification $(\Setup,\StateGen,\Del,\SemiVer,\FullVer)$ as follows.
\begin{itemize}
    \item 
    $\Setup(1^\secp)\to(\sk,\pk):$
    Run $(\pp,\td)\gets\setup(1^\secp)$.
    Output $\pk\coloneqq\pp$ and $\sk\coloneqq(\pp,\td)$.
    \item 
    $\StateGen(\pk)\to(\psi,\snum):$
    Parse $\pk\coloneqq\pp$.
    For each $i\in[n]$, repeat the following:
    \begin{enumerate}
    \item
    Generate $\sum_{b\in\bit}\sum_{x\in\calX}\ket{b}\ket{x}\ket{\eval(\pp,b,x)}$.
    \item  
   Measure the third register in the computational basis
    to get the measurement result $y_i$.
    The post-measurement state is
    $\psi_i\coloneqq\ket{0}\otimes\sum_{x\in f_0^{-1}(y_i)}\ket{x}+\ket{1}\otimes\sum_{x\in f_1^{-1}(y_i)}\ket{x}$. 
    \end{enumerate}
    Output $\psi\coloneqq\bigotimes_{i\in[n]}\psi_i$ and $\snum\coloneqq\{y_i\}_{i\in[n]}$.
     \item 
    $\Del(\psi)\to\cert:$
    Parse $\psi=\bigotimes_{i\in[n]}\psi_i$.
    Measure $\psi_i$ in the computational basis to get the result $(b_i,z_i)\in\bit\times\calX$ for each $i\in[n]$.
    Output $\cert\coloneqq \{b_i,z_i\}_{i\in[n]}$.
    \item 
    $\SemiVer(\pk,\cert,\snum)\to\top/\bot:$
    Parse $\pk\coloneqq\pp$, $\cert=\{b_i,z_i\}_{i\in[n]}$ and $\snum=\{y_i\}_{i\in[n]}$.
    If $\eval(\pp,b_i,z_i)=y_i$ for all $i\in[n]$, output $\top$. Otherwise, output $\bot$.
    
    \item 
    $\FullVer(\sk,\psi,\snum)\to\top/\bot:$
    Parse $\sk\coloneqq(\pp,\td)$, $\psi=\bigotimes_{i\in[n]}\psi_i$, and $\snum=\{y_i\}_{i\in[n]}$.
    For each $i\in[n]$, do the following:
    \begin{enumerate}
        \item 
    Apply unitary $U$ such that $U\ket{b}\ket{x}\ket{0...0}=\ket{b}\ket{x}\ket{\eval(\pp,b,x)}$ on $\psi_i\otimes\ket{0...0}\bra{0...0}$, and measure
    the third register in the computational basis.
    If the measurement result is not $y_i$,  output $\bot$ and abort.
   \item 
    Apply the operation that maps $x$ to $\swap(\td,1,x)$ on the second register of the post-measurement state conditioned that the first register is 1.\footnote{Such an operation
    can be done in the following way: First, change $\ket{x}\ket{0...0}$ into
    $\ket{x}\ket{\swap(\td,1,x)}$. Then, change this state into 
    $\ket{x\oplus\swap(\td,0,\swap(\td,1,x))}\ket{\swap(\td,1,x)}=
    \ket{0...0}\ket{\swap(\td,1,x)}$.
    Finally, if we swap the first and the second register, we get
    $\ket{\swap(\td,1,x)}\ket{0...0}$.}
    Measure the first register in the Hadamard basis. If the result is 1 (i.e., if the result $\ket{-}$ is obtained), output $\bot$ and abort.
    \end{enumerate}
    Output $\top$.
\end{itemize}
The correctness for the semi-verification is clear. 
The correctness for the full-verification is also clear because
if each $\psi_i$ is
\begin{align}
\psi_i=\ket{0}\otimes\sum_{x\in f_0^{-1}(y_i)}\ket{x}
+\ket{1}\otimes\sum_{x\in f_1^{-1}(y_i)}\ket{x},
\end{align}
then the first operation of the semi-verification algorithm does not change it, and the second operation gives
\begin{align}
\ket{0}\otimes\sum_{x\in f_0^{-1}(y_i)}\ket{x}
+\ket{1}\otimes\sum_{x\in f_1^{-1}(y_i)}\ket{\swap(\td,1,x)}
&=
\ket{0}\otimes\sum_{x\in f_0^{-1}(y_i)}\ket{x}
+\ket{1}\otimes\sum_{x\in f_0^{-1}(y_i)}\ket{x}\\
&=
\ket{+}\otimes\sum_{x\in f_0^{-1}(y_i)}\ket{x}.
\end{align}

To show the security, we define the sequence of hybrids.
Hybrid 0 (\cref{hyb0}) is the original security game.
For the sake of contradiction, we
assume that there exists a QPT adversary $\cA$ such that
\begin{align}
\label{assumption}
\Pr[{\rm Hybrid~0}\to\top]\ge\frac{1}{\poly(\secp)}    
\end{align}
for infinitely many $\secp$.

\protocol{Hybrid 0}
{Hybrid 0}
{hyb0}
{
\begin{enumerate}
    \item 
    The challenger $\cC$ runs $(\td,\pp)\gets\setup(1^\secp)$,
    and sends $\pp$ to $\cA$.
    \item 
    $\cA$ sends a quantum state $\psi$ on the register $\regF$, a classical bit string $\cert=\{b_i,z_i\}_{i\in[n]}$,
    and 
    bit strings $\snum=\{y_i\}_{i\in[n]}$ to $\cC$.
    The register $\regF$ consists of sub-registers $(\regF_1,...,\regF_n)$.
    Each $\regF_i$ is a $(|\calX|+1)$-qubit register.
    \item 
    \label{SemiV}
    If $\eval(\pp,b_i,z_i)\neq y_i$ for an $i\in[n]$ , output $\bot$ and abort.
\item 
\label{FullV}
    From $i=1$ to $n$, $\cC$ does the following: 
    \begin{enumerate}
    \item 
    Apply $U$ such that $U\ket{b}\ket{x}\ket{0...0}=\ket{b}\ket{x}\ket{\eval(\pp,b,x)}$ on $\regF_i\otimes\ket{0...0}\bra{0...0}$,
    and measure the ancilla register in the computational basis. If the result is not $y_i$, output $\bot$ and abort.
        \item 
    Apply the operation that maps $x$ to $\swap(\td,1,x)$ on the second register of the post-measurement state on $\regF_i$ conditioned that the first qubit is 1.
    \label{cSWAP}
    \item
    Measure the first qubit of $\regF_i$ in the Hadamard basis. If the result is 1 (i.e., if the result is $\ket{-}$), output $\bot$ and abort.
    \end{enumerate}
    \item
    Output $\top$.
\end{enumerate}
}

\protocol{Hybrid 1}
{Hybrid 1}
{hyb1}
{
\begin{enumerate}
    \item
    The same as Hybrid 0. 
    \item
    The same as Hybrid 0.
    \item 
    The same as Hybrid 0.
 \item 
    From $i=1$ to $n$, $\cC$ does the following: 
    \begin{enumerate}
    \item 
    The same as Hybrid 0.
        \item 
        The same as Hybrid 0.
    \item
    Measure all qubits of $\regF_i$ in the computational basis to get the result $(e_i,w_i)\in\bit\times\calX$.
    \end{enumerate}   
    \item
    If there is $i\in[n]$ such that $b_i\neq e_i$, output $\top$. Otherwise, output $\bot$.
\end{enumerate}
}

\protocol{Hybrid 2}
{Hybrid 2}
{hyb2}
{
\begin{enumerate}
    \item
    The same as Hybrid 1. 
    \item
    The same as Hybrid 1.
    \item 
    The same as Hybrid 1.
\item 
The same as Hybrid 1 except that the step (b) is skipped for each
    $i\in[n]$.
    \item
    The same as Hybrid 1.
\end{enumerate}
}

\protocol{Hybrid 3}
{Hybrid 3}
{hyb3}
{
\begin{enumerate}
    \item
    The same as Hybrid 2. 
In addition, $\cC$ chooses $i^*\gets[n]$.
    \item
    The same as Hybrid 2.
    \item 
    The same as Hybrid 2.
    \item 
    The same as Hybrid 2.
\item 
    If $b_{i^*}\neq e_{i^*}$, output $\top$.
    Otherwise, output $\bot$.
\end{enumerate}
}

\begin{lemma}
\label{lem:01}
If 
$\Pr[{\rm Hybrid~0}\to\top]\ge\frac{1}{\poly(\secp)}$ for infinitely many $\secp$,
then
$\Pr[{\rm Hybrid~1}\to\top]\ge\frac{1}{\poly(\secp)}$ for infinitely many $\secp$.
\end{lemma}

\begin{proof}
Let $\Pr[b_1,...,b_n]$ be the probability that $\cA$ outputs $\{b_i\}_{i\in[n]}$ and
$\cC$ finishes \cref{cSWAP} of \cref{hyb0} without aborting.
Then
\begin{align}
  \epsilon&\coloneqq \Pr[\mbox{Hybrid 0}\to\top]\\
  &=\sum_{b_1,...,b_n}\Pr[b_1,...,b_n]
  \Tr[(\ket{+}\bra{+}^{\otimes n}\otimes I) \psi_{b_1,...,b_n}],
\end{align}
where $\psi_{b_1,...,b_n}$ is the state of the register $\regF$ after the end of \cref{cSWAP} of \cref{hyb0}.
From the standard average argument, we have
\begin{align}
\sum_{(b_1,...,b_n)\in G}\Pr[b_1,...,b_n]
\ge\frac{\epsilon}{2},    
\label{average}
\end{align}
where
\begin{align}
G\coloneqq\Big\{(b_1,...,b_n):
  \Tr[(\ket{+}\bra{+}^{\otimes n}\otimes I)\psi_{b_1,...,b_n}]\ge\frac{\epsilon}{2}
\Big\}.
\label{inG}
\end{align}
Define
\begin{align}
\Lambda_{b_1,...,b_n}    
\coloneqq
\sum_{(e_1,...,e_n)\in\bit^n\setminus (b_1,...,b_n)}
\ket{e_1,e_2,...,e_n}\bra{e_1,e_2,...,e_n}.
\end{align}
As is shown later, for any $m$-qubit state $\rho$, where $m\ge n$, and any $(b_1,...,b_n)\in\bit^n$, we have
 \begin{align}
 \label{PiLambda}
  \Tr[(\ket{+}\bra{+}^{\otimes n}\otimes I) \rho]
  \le
  2\Tr[(\Lambda_{b_1,...,b_n}\otimes I)\rho]+2^{-n+1}. 
\end{align}
Then
\begin{align}
\Pr[\mbox{Hybrid 1}\to\top]
&=    
\sum_{b_1,...,b_n}
\Pr[b_1,...,b_n]
\Tr[(\Lambda_{b_1,...,b_n}\otimes I)\psi_{b_1,...,b_n}]\\
&\ge
\sum_{(b_1,...,b_n)\in G}
\Pr[b_1,...,b_n]
\Tr[(\Lambda_{b_1,...,b_n}\otimes I)\psi_{b_1,...,b_n}]\\
&\ge
\sum_{(b_1,...,b_n)\in G}
\Pr[b_1,...,b_n]
\Big(\frac{1}{2}\Tr\left[(\ket{+}\bra{+}^{\otimes n}\otimes I)\psi_{b_1,...,b_n}\right]
-2^{-n}\Big)\label{Pinisuru}\\
&\ge\frac{\epsilon}{2}\Big(\frac{\epsilon}{4}-2^{-n}\Big)\label{epsilon}\\
&\ge\frac{1}{\poly(\secp)}.
\end{align}
Here, to derive \cref{Pinisuru}, we have used \cref{PiLambda}.
To derive \cref{epsilon}, we have used \cref{average} and \cref{inG}.
\end{proof}

\begin{lemma}
\label{lem:12}
$\Pr[{\rm Hybrid~2}\to\top]=\Pr[{\rm Hybrid~1}\to\top]$.
\end{lemma}
\begin{proof}
The controlled $\swap(\td,1,\cdot)$ operations do not change the distribution of $(e_1,...,e_n)$.
\end{proof}

\begin{lemma}
\label{lem:23}
$\Pr[{\rm Hybrid~3}\to\top]\ge\frac{1}{n}\Pr[{\rm Hybrid~2}\to\top]$.
\end{lemma}
\begin{proof}
It is trivial.
\end{proof}

In summary, from \cref{assumption} and \cref{lem:01,lem:12,lem:23}, we conclude that
$\Pr[{\rm Hybrid~3}\to\top]\ge\frac{1}{\poly(\secp)}$ for infinitely many $\secp$.
From such $\cA$ of Hybrid 3, we construct a QPT adversary $\cB$ that breaks the claw-freeness of STF as in \cref{cB}.
It is easy to verify that
\begin{align}
\Pr[\cB~ \mbox{wins}]&\ge 
\Pr[{\rm Hybrid~3}\to\top]\\
&\ge\frac{1}{\poly(\secp)}
\end{align}
for infinitely many $\secp$.
Therefore, $\cB$ breaks the claw-freeness of the STF.

\protocol{$\cB$}
{$\cB$}
{cB}
{
\begin{enumerate}
    \item 
    Get $\pp$ as input.
    \item 
    Run $(\psi,\{b_i,z_i\}_{i\in[n]},\{y_i\}_{i\in[n]})\gets\cA(\pp)$.
    Choose $i^*\gets[n]$.
    \item 
    Measure all qubits of $\regF_{i^*}$ in the computational basis to get the result $(e_{i^*},w_{i^*})$.
    \item 
    If $b_{i^*}=0$ and $e_{i^*}=1$, output $(z_{i^*},w_{i^*})$.
    If $b_{i^*}=1$ and $e_{i^*}=0$, output $(w_{i^*},z_{i^*})$.
    Otherwise, output $\bot$.
\end{enumerate}
}

\end{proof}

\begin{proof}[Proof of \cref{PiLambda}]
\label{App:PiLambda}
Because of the linearity of $\Tr$, we have only to show it for pure states.
We have
\begin{align}
\Tr[(\ket{+}\bra{+}^{\otimes n}\otimes I) \ket{\psi}\bra{\psi}]
&=
\Big\|(\bra{+}^{\otimes n}\otimes I)\ket{\psi}\Big\|^2\\
&=
\frac{1}{2^n}\Big\|\sum_{e_1,...,e_n}(\bra{e_1,...,e_n}\otimes I)|\psi\rangle\Big\|^2\\
&=
\frac{1}{2^n}\Big\|\sum_{(e_1,...,e_n)\neq(b_1,...,b_n)}(\bra{e_1,...,e_n}\otimes I)|\psi\rangle
+(\bra{b_1,...,b_n}\otimes I)|\psi\rangle
\Big\|^2\\
&\le
\frac{1}{2^{n-1}}\Big\|\sum_{(e_1,...,e_n)\neq(b_1,...,b_n)}(\bra{e_1,...,e_n}\otimes I)|\psi\rangle\Big\|^2\\
&~~~~+\frac{1}{2^{n-1}}\Big\|(\bra{b_1,...,b_n}\otimes I)|\psi\rangle\Big\|^2\\
&\le
\frac{1}{2^{n-1}}(2^n-1)\sum_{(e_1,...,e_n)\neq (b_1,...,b_n)} \|(\bra{e_1,...,e_n}\otimes I)|\psi\rangle\|^2
+2^{-n+1}\\
&=
\frac{2^n-1}{2^{n-1}}\Tr[(\Lambda_{b_1,...,b_n}\otimes I)\ket{\psi}\bra{\psi}]
+2^{-n+1}\\
&\le
2\Tr[(\Lambda_{b_1,...,b_n}\otimes I)\ket{\psi}\bra{\psi}]
+2^{-n+1}.
\end{align}
Here, we have used the triangle inequality and Jensen's inequality.
\end{proof}

\if0
\subsection{Constructions from two-tier one-shot signatures}
\mor{I will write it later after fixing the syntax of two-tier one-shot signature}
It was shown in \cite{TCC:KitNisYam21} that two-tier quantum lightning can be constructed from the LWE assumption or the SIS assumption.
We show here that two-tier quantum lightning with classical semi-verification can be constructed from two-tier one-shot signatures in a black-box way:
\begin{theorem}
\label{thm:OSStoQL}
If two-tier one-shot signatures exist, then two-tier quantum lightning with classical semi-verification exist.    
\end{theorem}
\begin{proof}[Proof of \cref{thm:OSStoQL}]
Let $(\Setup',\KeyGen',\Sign,\Ver_0,\Ver_1)$    
be a two-tier one-shot signature scheme. From it, we construct two-tier quantum lightning with classical semi-verification 
$(\KeyGen,\StateGen,\Del,\SemiVer,\FullVer)$ as follows.
\begin{itemize}
    \item 
    $\KeyGen(1^\secp)\to(\sk,\pk):$
    Run $(\pp',\sk')\gets\Setup'(1^\secp)$.
    Output $\sk\coloneqq \sk'$ and $\pk\coloneqq \pp'$.
    \item 
    $\StateGen(\pk)\to(\psi,\snum):$
    Parse $\pk=\pp'$.
    Run $(\sigk,\vk)\gets\KeyGen'(\pp)$.
    Output $\psi\coloneqq\sigk$, and $\snum\coloneqq\vk$.
    \item 
    $\Del(\psi)\to\cert:$
    Parse $\psi=\sigk$.
    Run $\sigma\gets\Sign(\sigk,0)$.
    Output $\cert\coloneqq\sigma$.
    \item 
    $\SemiVer(\pk,\cert,\snum)\to\top/\bot:$
    Parse $\pk=\pp$, $\cert=\sigma$, and $\snum=\vk$.
    Run $\Ver_0(\pk,\vk,\sigma_0)$ and output its output.
     \item 
    $\FullVer(\sk,\psi,\snum)\to\top/\bot:$
    Parse $\sk=\sk'$, $\psi=\sigk$, and $\snum=\vk$.
    Run $\sigma\gets\Sign(\sigk,1)$.
    Run $\Ver_1(\pp,\sk,\vk,\sigma)$ and output its output.
\end{itemize}
Both correctness and security are clear.
\end{proof}
\fi

\subsection{Construction of Digital Signatures with Quantum Revocable Signing Keys}
\label{sec:2QLtoDS}
We construct digital signatures with quantum revocable signing keys.
Its definition is given as follows.

\begin{definition}[Digital Signatures with quantum revocable signing keys]
The syntax, correctness, and security are the same as those of \cref{def:dswrsk}, except that
there is no deletion algorithm, $\Del$, and
$\cert$ is not a classical bit string, but the signing key $\sigk_i$ itself.
\end{definition}

Digital signatures with quantum revocable signing keys can be constructed from group actions as follows.
\begin{theorem}
If group actions with the one-wayness property exist,    
then digital signatures with quantum revocable signing keys exit.
\end{theorem}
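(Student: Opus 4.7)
The plan is to mirror the development of Section~\ref{sec:skcert} but to replace 2-OSS with two-tier quantum lightning with classical semi-verification ($2\text{-}\mathsf{QL}_{\text{cs}}$), which \Cref{thm:GAtoQL} gives us from one-way group actions. Concretely, I will first build a one-time, single-bit $\mathsf{DSR\text{-}Key}$ scheme with \emph{quantum} revocation, then lift it to arbitrary polynomial message length by parallel repetition (or a collision-resistant hash), and finally lift one-time to many-time via the chain-based transformation of Section~\ref{sec:skcert}. The mapping between primitives is natural: a signature on the public side will be the classical output of $\Del$ applied to a lightning state (verifiable by $\SemiVer$), while the revocation certificate will consist of the still-uncollapsed lightning states themselves (verified by $\FullVer$).

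For the one-time single-bit scheme, I would set $\setup$ to run $2\text{-}\mathsf{QL}_{\text{cs}}.\Setup$ and output $(\ck,\pp)\coloneqq (\sk,\pk)$. $\KeyGen(\pp)$ invokes $\StateGen(\pk)$ twice to produce $(\psi_0,\snum_0)$ and $(\psi_1,\snum_1)$, with $\sigk=(\psi_0,\psi_1)$ and $\vk=(\snum_0,\snum_1)$. To sign $m\in\bit$, one outputs $\sigma \gets 2\text{-}\mathsf{QL}_{\text{cs}}.\Del(\psi_m)$ and updates the signing key to $(1-m,\psi_{1-m})$. Verification runs $\SemiVer(\pp,\snum_m,\sigma)$. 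For quantum revocation, the $\Del$ algorithm of $\mathsf{DSR\text{-}Key}$ simply returns the remaining lightning states together with their indices (both states if no signature was yet produced, or the single leftover state otherwise), and $\Cert$ uses $\ck$ to run $\FullVer$ on each returned state against the corresponding $\snum$, rejecting immediately if the claimed $S$ is inconsistent with what is returned. Correctness on both sides is immediate from the correctness of $\SemiVer$ and $\FullVer$. For one-time EUF-CMA security and one-time deletion security, a successful $\mathsf{DSR\text{-}Key}$ adversary yields, in every case of $|S|\in\{0,1\}$, an index $i^*\in\bit$ such that the reduction holds a classical $\sigma^*$ passing $\SemiVer(\pp,\snum_{i^*},\cdot)$ \emph{and} a quantum state $\psi$ passing $\FullVer(\sk,\snum_{i^*},\cdot)$ for the same $\snum_{i^*}$; this is exactly the forbidden event in the security game of $2\text{-}\mathsf{QL}_{\text{cs}}$, so a simple reduction (which generates the other lightning state internally and embeds the challenge lightning state as either $\psi_0$ or $\psi_1$ at random, guessing $m^*$) breaks $2\text{-}\mathsf{QL}_{\text{cs}}$ with a $1/2$-loss.

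Expansion to the message space $\bit^n$ is done by $n$-fold parallel repetition exactly as in the single-bit-to-multi-bit transformation of Section~\ref{sec:skcert} (using a collision-resistant hash to reach $\bit^*$), and the reduction essentially repeats the one-time single-bit reduction on a random coordinate where the forged and queried messages differ. Finally, the many-time construction follows the same chain-based template as in Section~\ref{sec:skcert}: the signer freshly generates a new one-time key pair $(\sfot.\sigk_{k},\sfot.\sigk_{k})$ at each step and signs $m_k\concat \sfot.\vk_k$ under the previous signing key. The only twist here is that the revocation certificate is no longer a flat list of classical certificates but a collection of quantum states together with the chain of previously generated $(m_j,\sfot.\vk_j,\sfot.\sigma_j)$, and $\Cert$ runs the one-time $\Cert$ on every link of the chain (each against the appropriate one-element or empty set).

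The main technical obstacle I anticipate is in the many-time security argument: when one lifts the one-time scheme with quantum revocation to a chain, the adversary's output contains quantum revocation certificates for \emph{every} previously used one-time signing key, and one must argue that the largest index $i^*$ at which the forgery's chain diverges from the returned certificate chain yields a well-defined one-time adversary whose quantum revocation certificate at index $i^*$ can still be extracted cleanly. This parallels the argument in Section~\ref{sec:skcert} for the chain-based conversion but requires careful bookkeeping of which quantum state inside the revocation bundle plays the role of the one-time revocation certificate; I expect this to go through by the same case analysis on $\mathbf{Case~1/2/3}$ used in the proof of the many-time deletion security in Section~\ref{sec:skcert}, with the classical certificates there replaced by quantum states verified via $\FullVer$.
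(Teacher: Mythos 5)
Your proposal matches the paper's construction essentially verbatim: the paper builds the same one-time single-bit scheme from two-tier quantum lightning with classical semi-verification (signatures are $\mathsf{QL}.\Del$ outputs checked by $\SemiVer$, the returned quantum states are checked by $\FullVer$), and then cites the single-bit-to-multi-bit and chain-based one-time-to-many-time conversions of \cref{sec:skcert} exactly as you do. One minor simplification: since the 2-QL security game hands the adversary only $\pk$ and lets it output $(\cert,\psi,\snum)$ itself, your reductions need not embed a challenge lightning state or guess $m^*$ (so the $1/2$-loss is unnecessary), but this does not affect correctness.
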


\begin{proof}
Let $(\mathsf{QL}.\Setup,\mathsf{QL}.\StateGen,\mathsf{QL}.\Del,\mathsf{QL}.\SemiVer,\mathsf{QL}.\FullVer)$ be
two-tier quantum lightning with classical semi-verification,
which exists if 
group actions with the one-wayness property exist (\cref{thm:GAtoQL}).    
From it, we construct 
digital signatures with quantum revocable signing keys 
with the message space $\bit$ that satisfy one-time variants of correctness, EUF-CMA security, deletion correctness, and deletion security
as follows.
With similar proofs given in \cref{sec:skcert}, we can upgrade it to the full version of digital signatures
with quantum revocable signing keys.

\begin{itemize}
\item 
$\setup(1^\secp)\to(\ck,\pp):$
Run $(\sk,\pk)\gets\mathsf{QL}.\Setup(1^\secp)$.
Output $\ck\coloneqq \sk$ and $\pp\coloneqq\pk$.
    \item 
    $\KeyGen(\pp)\to(\sigk,\vk):$
    Parse $\pp=\pk$.
    Run $(\psi_0,\snum_0)\gets\mathsf{QL}.\StateGen(\pk)$ and
    $(\psi_1,\snum_1)\gets\mathsf{QL}.\StateGen(\pk)$.
    Output $\sigk\coloneqq(\psi_0,\psi_1)$ and $\vk\coloneqq(\snum_0,\snum_1)$.
    \item 
    $\Sign(\pp,\sigk, m)\to(\sigk',\sigma):$
    Parse $\pp=\pk$ and $\sigk=(\psi_0,\psi_1)$.
    Run
    $\mathsf{ql}.\cert\gets\mathsf{QL}.\Del(\psi_m)$.
    Output $\sigma\coloneqq\mathsf{ql}.\cert$ and $\sigk'\coloneqq(1-m,\psi_{1-m})$.
    \item 
    $\Ver(\pp,\vk,m,\sigma)\to\top/\bot:$
    Parse $\pp=\pk$, $\vk=(\snum_0,\snum_1)$, and $\sigma=\mathsf{ql}.\cert$.
    Run $\SemiVer(\pk,\snum_m,\mathsf{ql}.\cert)$ and output its output.
    \item 
    $\Cert(\pp,\vk,\ck,\sigk,S)\to\top/\bot:$
    Parse $\pp=\pk$, $\vk=(\snum_0,\snum_1)$, $\ck=\sk$.
    \begin{itemize}
    \item If $S=\emptyset$, parse $\sigk=(0,1,\psi_0,\psi_1)$. (Output $\bot$ if $\sigk$ is not of this form.)  
    Output $\top$ if $\mathsf{QL}.\FullVer(\sk,\snum_b,\psi_b)=\top$ for both $b=0,1$. 
    \item If $S=\{m\}$ for some $m\in \bit$, 
    parse $\sigk=(1-m,\psi_{1-m})$. (Output $\bot$ if $\sigk$ is not of this form.)
    Output $\top$ if $\mathsf{QL}.\FullVer(\sk,\snum_{1-m},\psi_{1-m})=\top$.
    \item If $S=\{0,1\}$, output $\bot$.  
    \end{itemize}
\end{itemize}
One-time correctness, one-time EUF-CMA security, one-time deletion correctness, and one-time deletion security
are shown in similar ways as those in \cref{sec:onetime}.
\end{proof} 
\fi

\end{document}